\def\blfootnote{\xdef\@thefnmark{}\@footnotetext}
\title{An Extension of Generalized Linear Models\\to Finite Mixture Outcome Distributions}
\date{}
\author{Andrew M. Raim$^*$, Nagaraj~K. Neerchal$^\dagger$ \& Jorge~G. Morel$^\dagger$
\vspace{0.5em} \\
$^*$Center for Statistical Research and Methodology, U.S. Census Bureau \\
$^\dagger$Department of Mathematics and Statistics, University of Maryland, Baltimore County
}
\begin{document}

\maketitle

\begin{abstract}
Finite mixture distributions arise in sampling a heterogeneous population. Data drawn from such a population will exhibit extra 
variability relative to any single subpopulation. Statistical models based on finite mixtures can assist in the analysis of categorical and count outcomes when standard generalized linear models (GLMs) cannot adequately account for variability observed in the data. We propose an extension of GLM where the response is assumed to follow a finite mixture distribution, while the regression of interest is linked to the mixture's mean. This approach may be preferred over a finite mixture of regressions when the population mean is the quantity of interest; here, only a single regression function must be specified and interpreted in the analysis. A technical challenge is that the mean of a finite mixture is a composite parameter which does not appear explicitly in the density. The proposed model is completely likelihood-based and maintains the link to the regression through a certain random effects structure. We consider typical GLM cases where means are either real-valued, constrained to be positive, or constrained to be on the unit interval. The resulting model is applied to two example datasets through a Bayesian analysis: one with success/failure outcomes and one with count outcomes. Supporting the extra variation is seen to improve residual plots and to appropriately widen prediction intervals.
\end{abstract}

\blfootnote{%
This paper is released to inform interested parties of ongoing research and to encourage discussion of work in progress. Any views expressed are those of the authors and not necessarily those of the U.S. Census Bureau.
\begin{flushleft}
For correspondence: \\
A.M.~Raim (\url{andrew.raim@gmail.com}) \\
Center for Statistical Research and Methodology \\
U.S. Census Bureau \\
Washington, D.C. 20233, U.S.A.
\end{flushleft}
}

\section{Introduction}
\label{sec:intro}
The Generalized Linear Model (GLM) is heavily used by researchers and practitioners for regression analysis on categorical, count, and continuous outcomes \citep{McCullaghNelder1989}. Standard GLM theory assumes an exponential family distribution, such as Poisson to model counts and Binomial to model success/failure data. These distributions are limited in the amount of variability they can express. GLM users often encounter the issue of overdispersion, where the data exhibit variability which cannot be expressed by the model. This can manifest itself in a number of ways, depending on the specific nature of the overdispersion and its departure from the model. For example, assuming independence in clustered data can result in standard error estimates which are too small and lead to tests with an inflated type I error rate \citep[Chapter 1]{MorelNeerchalSAS2012}.

The objective of this paper is to extend the GLM so that a finite mixture of $J$ simpler densities can be used as the distribution for the response. There is a well-established literature on finite mixtures of regressions, in which each component distribution of a finite mixture is linked to a separate regression \citep{FruhwirthSchnatter2006}. An analyst may employ a finite mixture of regressions model if heterogeneity is suspected in the relationship between covariate $\vec{x}$ and response $y$ among sampled units, yet not enough is known to model the heterogeneity explicitly. Specifying regressions for $J$ latent subpopulations may complicate model selection in practice. Often, the interest may be in modeling the mean response, and heterogeneity is simply a nuisance rather than a target for inference. This motivates us to formulate the Mixture Link model, which uses a finite mixture to capture extra variation, but constrains the mean of the finite mixture to be linked to a single regression function. The mean of a finite mixture is composed of multiple parameters which may not appear directly in the likelihood. Central to the development of Mixture Link is the set in which the link constraint is honored. In the case of positive-valued means, this constraint set is a polytope, while for probability-valued means it is the intersection of a polyhedron and a unit cube. For real-valued means, the constraint set is the basis of a linear space. A random effects structure is assumed on this set to complete specification of the likelihood. Under Poisson and Normal outcome types, the random effects can be integrated out to yield a tractable form for the density. The case of Binomial outcomes is more computationally challenging. Taking a Bayesian approach to inference, a simple Random-Walk Metropolis-Hastings sampler can be used for the Normal and Poisson Mixture Link models. For Binomial outcomes, we consider a Metropolis-within-Gibbs sampler with data augmentation to avoid repeated evaluation of the marginal density.

A number of methods have been established to handle overdispersion. \citet{MorelNeerchalSAS2012} provide an overview in the settings of count and categorical data. One common approach is to extend a basic distribution by assuming the presence of latent random variables, and then integrating them out. The Beta-Binomial~\citep{OtakePrentice1984}, Zero-Inflated Binomial \citep{Hall2000}, and Random-Clumped Binomial~\citep{MorelNagaraj1993} distributions are all obtained in this way starting from the Binomial distribution. Similarly, the negative Binomial and zero-inflated negative Binomial distributions \citep{Hilbe2011} are obtained starting from the Poisson distribution. In this same way, the t-distribution \citep{LiuRubin1995} may be considered an overdispersion model relative to the normal distribution. Generalized Linear Mixed Models are obtained by adding random effects to the regression function~\citep{McCullochSearleNeuhaus2008}; the marginal likelihood of the outcomes usually cannot be written without an integral for non-normal outcomes. Quasi-likelihood methods extend the likelihood in ways that do not yield a proper likelihood, but allow inference to be made on regression coefficients. A simple quasi-likelihood is obtained from placing a dispersion multiplier to the variance \citep[Section~4.7]{Agresti2002}. The method of \citet{Wedderburn1974} requires specification of only the mean-variance relationship to form a system of equations and carry out inference. Generalized Estimating Equations (GEE) is a quasi-likelihood method for grouped data where the analyst assumes a working correlation structure for observations taken within a subject~\citep{HardinHilbe2012}. Some Bayesian overdispersion methods are discussed in the collection assembled by \citet{DeyEtAl2000}; for example, \citet*{BasuMukhopadhyayBookChapter2000} consider generalizing the link function of a GLM to a mixture distribution and \citet*{DeyRavishanker2000} propose generalized exponential families for the outcome. More recently, \citet{KleinKneibLang2015} proposed a Bayesian approach to generalized additive models under the Zero-Inflated Negative Binomial model to estimate complicated regression functions.

The rest of the paper proceeds as follows. Section~\ref{sec:mixlink-model} formulates the Mixture Link general model. Section~\ref{sec:mixlink-probs} develops Mixture Link under probability-valued means, with special attention given to Binomial outcomes. Sections \ref{sec:mixlink-positive} and \ref{sec:mixlink-real} develop Mixture Link for positive- and real-valued means, respectively, and obtain specific models for Poisson and Normal outcomes. Section~\ref{sec:data-examples} presents example data analyses with Mixture Link Binomial and Mixture Link Poisson. Finally, Section~\ref{sec:conclusions} concludes the paper. The \code{mixlink} package for \code{R} (available from \url{http://cran.r-project.org}) provides much of the Mixture Link functionality discussed in this paper.

\section{Mixture Link Formulation}
\label{sec:mixlink-model}
The usual GLM formulation is based on a density in the exponential dispersion family,
\begin{align}
f(y \mid \theta, \phi) = \exp\left\{
\frac{\theta y - b(\theta)}{a(\phi)} + c(y; \phi)
\right\},
\label{eqn:exp-fam}
\end{align}
where $\theta$ is the canonical parameter which influences the mean and $\phi$ is the dispersion parameter. Here it can be shown that $\E(y) = b'(\theta)$ and $\Var(y) = a(\phi) b''(\theta)$, and expressions for the score vector and information matrix can be obtained \citep[Section 4.4]{Agresti2002}. Estimation can be carried out routinely, using Newton-Raphson or scoring algorithms to compute maximum likelihood estimates, or standard MCMC algorithms for a Bayesian analysis. Our objective is to modify this framework to allow a finite mixture as the outcome distribution, establishing a link between the mixture mean and a regression function of interest. Because finite mixtures can support more variation than distributions of the form \eqref{eqn:exp-fam}, this extension should naturally support variation beyond standard GLMs. We are especially interested in finite mixtures of three common GLM outcome types: Normal, Binomial, and Poisson.

Consider a random variable $Y$ following the finite mixture distribution,
\begin{align}
f(y \mid \btheta) = \sum_{j=1}^J \pi_j g(y \mid \btheta_j).
\label{eqn:finite-mixture}
\end{align}
Here, the mixing proportions $\vec{\pi} = (\pi_1, \ldots, \pi_J)$ belong to the probability simplex $\mathcal{S}^J = \{ \vec{\lambda} \in [0,1]^J : \lambda_j \geq 0, \vec{\lambda}^T \vec{1} = 1 \}$. The densities $g(y \mid \btheta_j)$ belong to a common family parameterized by $\btheta_j = (\mu_j, \vec{\phi}_j)$, consisting of a mean parameter $\mu_j = \int y \, g(y \mid \btheta_j) d\nu(y)$ and where all other parameters are contained in $\vec{\phi}_j$. Writing $\nu$ as the dominating measure for densities $g$ allows expectations over discrete and continuous random variables to be treated with a common integral notation. The overall expected value is $\E(Y) = \sum_{j=1}^J \pi_j \mu_j = \vec{\pi}^T \vec{\mu}$. The $\mu_j$ may naturally be restricted to a subset of $\mathbb{R}$, depending on the outcome type. For example, if $Y$ is a count, $\mu_j \in [0, \infty)$ often represents a rate. Alternatively, if $Y$ is the number of successes among $m$ trials, which result in either success or failure, then $\mu_j \in [0, 1]$ can represent the probability of a success. In general, denote the natural space of $\mu_j$ as $\mathcal{M}$, so that $\vec{\mu} = (\mu_1, \ldots, \mu_J)$ is an element of $\mathcal{M}^J$.

In a regression setting, we observe a random sample $Y_1, \ldots, Y_n$ from the finite mixture
\begin{align}
f(y_i \mid \btheta_i) = \sum_{j=1}^J \pi_j g(y \mid \mu_{ij}, \vec{\phi}_{ij}),
\label{eqn:finite-mixture-obs}
\end{align}
with an associated (fixed) predictor $\vec{x}_i \in \mathbb{R}^d$, for $i \in \{ 1, \ldots, n \}$. As in the traditional GLM, we wish to link $\E(Y_i)$ to a regression function such as $\vec{x}_i^T \bbeta$ through an inverse link function $G$. To simplify expressions in the rest of the paper, denote $\vartheta(\vec{x})$ as the inverse-linked regression $G(\vec{x}^T \bbeta)$. We will write $\vartheta_i = G(\vec{x}_i^T \bbeta)$ for brevity when specifically referring to the $i$th observation, and $\vartheta$ in place of $\vartheta(\vec{x})$ when not emphasizing a specific observation. With this notation, our objective is to link
\begin{align}
\vec{\pi}^T \vec{\mu} = \vartheta_i.
\label{eqn:link-objective}
\end{align}
The left-hand side of \eqref{eqn:link-objective} must vary with the observation for the link to be achievable. In this work, we will assume that subpopulation means $\vec{\mu}_i = (\mu_{i1}, \ldots, \mu_{iJ})$ are specific to the $i$th observation, but that mixing proportions $\vec{\pi}$ are common across observations. In contrast to the traditional GLM setting, $\vec{\pi}^T \vec{\mu}_i$ is a composite parameter which does not appear directly in the density of $Y_i$. Therefore, we cannot simply plug $\vartheta_i$ into the likelihood.

To enforce \eqref{eqn:link-objective}, consider the set
\begin{align}
A(\vartheta, \vec{\pi}) = \{ \vec{\mu} \in \mathcal{M}^J : \vec{\mu}^T \vec{\pi} = \vartheta \}.
\label{eqn:link-constrant-set}
\end{align}
For a given $\bbeta$ and $\vec{\pi}$, restricting ourselves to $\vec{\mu}_i \in A(\vartheta_i, \vec{\pi})$ is equivalent to enforcing the link. We will write $A$ as a shorthand for $A(\vartheta, \vec{\pi})$ and $A_i$ for $A(\vartheta_i, \vec{\pi})$. Our approach will be to take $\vec{\mu}_i$ as a random effect drawn from set $A(\vartheta_i, \vec{\pi})$. In Sections \ref{sec:mixlink-probs}, \ref{sec:mixlink-positive}, and \ref{sec:mixlink-real} we will consider several commonly used choices of the space $\mathcal{M}$---the unit interval, the positive real line, and the real line respectively---to determine an appropriate distribution for $\vec{\mu}_i$. Figure~\ref{fig:vertices} displays an example of the set $A(\vartheta_i, \vec{\pi})$ for each of these three cases. \citet{BoydVandenberghe2004} is a useful reference for basic concepts in the analysis of convex sets which emerge in the remainder of the paper. Note that $\vec{x}_i = 1$ may be taken for all $i = 1, \ldots, n$ to yield a non-regression version of Mixture Link.

Selection of a distribution over $A(\vartheta, \vec{\pi})$ determines the density of $Y_i$,
\begin{align}
f(y_i \mid \bbeta, \vec{\pi}, \vec{\phi}_i)
&= \int \sum_{j=1}^J \pi_j g(y_i \mid \mu_{ij}, \vec{\phi}_{ij}) \cdot f_{A^{(i)}}(\vec{\mu}_i) d\vec{\mu}_i \nonumber \\
&= \sum_{j=1}^J \pi_j \int g(y_i \mid w, \vec{\phi}_{ij}) \cdot f_{A^{(i)}_j}(w) dw.
\label{eqn:mixture-link-density}
\end{align}
Here, $f_{A^{(i)}}$ represents the $J$-dimensional random effects density over $A(\vartheta_i, \vec{\pi})$ and $f_{A^{(i)}_j}$ represents the marginal density of the $j$th coordinate. In the trivial case $J=1$, there is only a single point in $A(\vartheta_i, \vec{\pi})$, and $f(y_i \mid \bbeta, \vec{\pi}, \vec{\phi}_i)$ simplifies to $g(y_i \mid \vartheta_i, \vec{\phi}_{i1})$. In general, evaluating $f(y_i \mid \bbeta, \vec{\pi}, \vec{\phi}_i)$ requires computation of $J$ univariate integrals, which can be achieved numerically using quadrature or other standard techniques. This can become a computational burden if $f(y_i \mid \bbeta, \vec{\pi}, \vec{\phi}_i)$ must be computed many times (e.g.~for a simulation or iterative estimation procedure) or if $f_{A^{(i)}_j}(w)$ is difficult to evaluate. By construction, $\E(Y_i) = \vartheta_i$, but variance and other moments depend on $g$ and the distribution of $\vec{\mu}_i$. As in more basic finite mixture models, the value of density \eqref{eqn:mixture-link-density} is invariant to permutations of the subpopulation labels $\{ 1, \ldots, J \}$.

\begin{figure}
\centering
\begin{subfigure}[b]{.32\textwidth}
  \centering
  \includegraphics[width=0.95\textwidth]{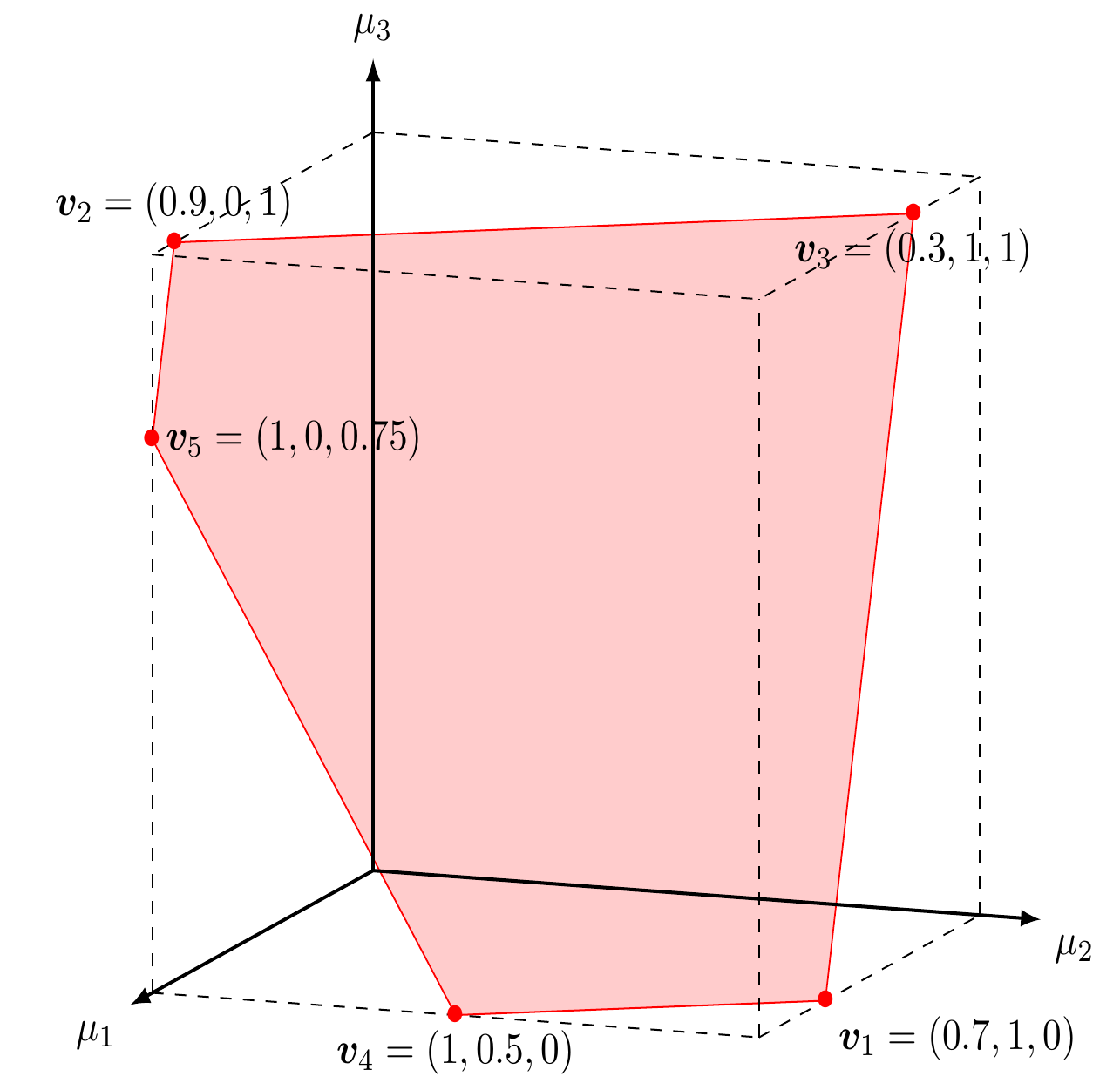}
  \caption{}
  \label{fig:vertices-prob}
\end{subfigure}
\begin{subfigure}[b]{.32\textwidth}
  \centering
  \includegraphics[width=0.95\textwidth]{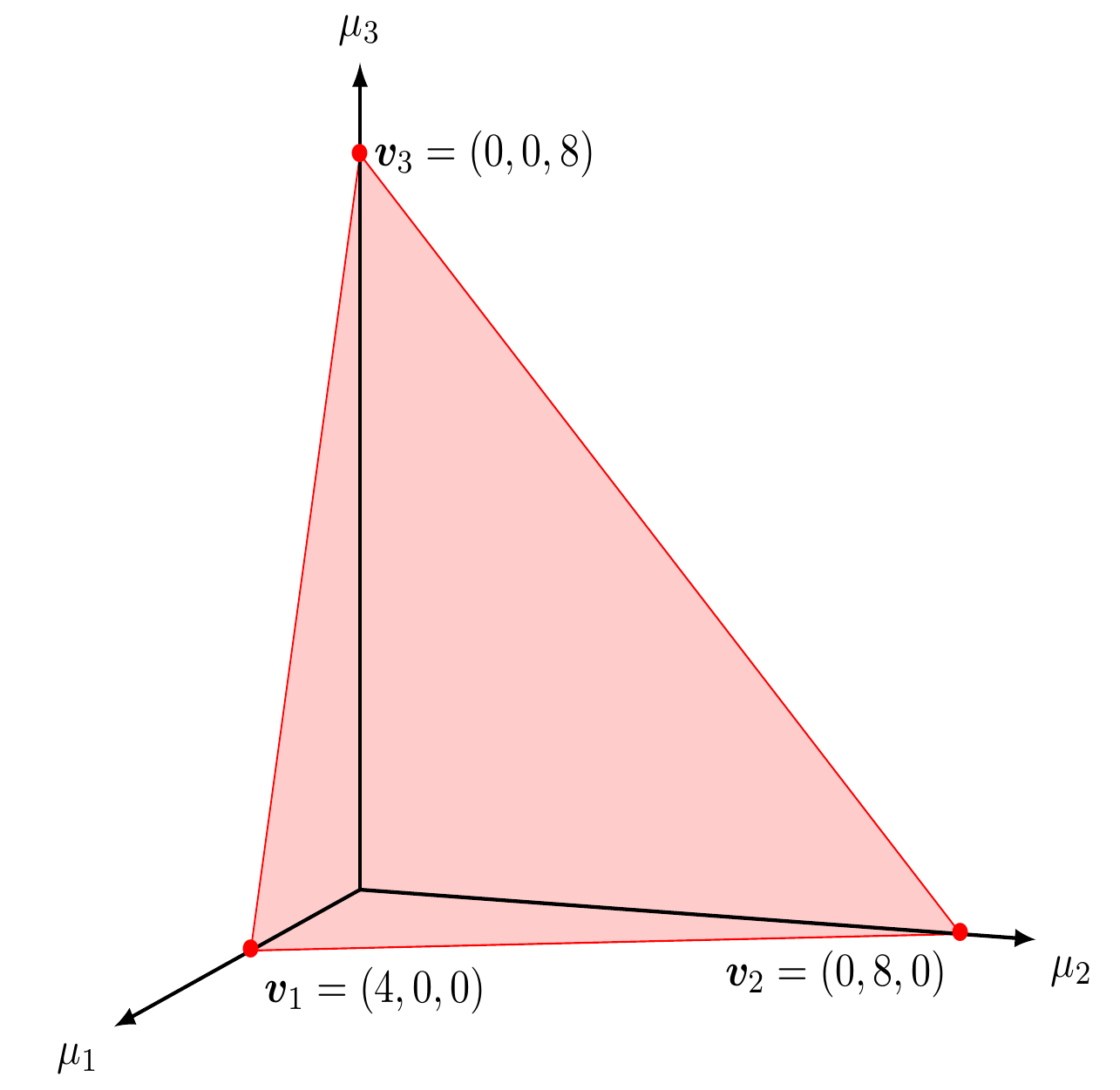}
  \caption{}
  \label{fig:vertices-pos}
\end{subfigure}
\begin{subfigure}[b]{.32\textwidth}
  \centering
  \includegraphics[width=0.95\textwidth]{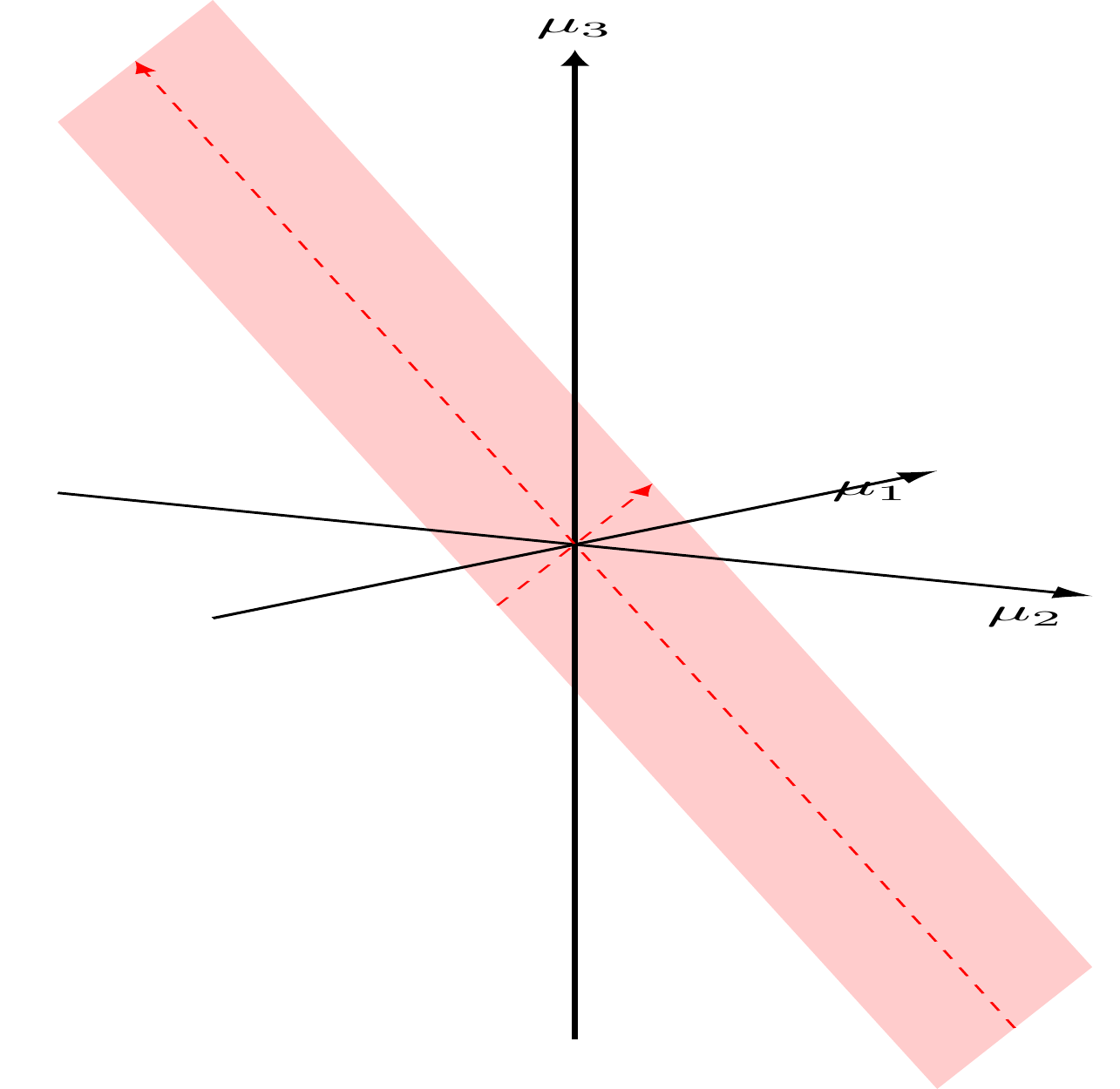}
  \caption{}
  \label{fig:vertices-real}
\end{subfigure}
\caption{Examples of the set $A(\vartheta, \vec{\pi})$ in dimension $J=3$: (\subref{fig:vertices-prob}) probability-valued means with $\vec{\pi} = (0.5, 0.3, 0.2)$ and $\vartheta = 0.65$, (\subref{fig:vertices-pos}) positive means with $\vec{\pi} = (0.5, 0.25, 0.25)$ and $\vartheta = 2$, (\subref{fig:vertices-real}) real-valued means with $\vec{\pi} = (0.5, 0.3, 0.2)$ and $\vartheta = 0$.}
\label{fig:vertices}
\end{figure}

\section{Probability-Valued Means}
\label{sec:mixlink-probs}
Consider the setting $\mathcal{M} = [0, 1]$, which is useful for Bernoulli or Binomial data where means represent probabilities. It is straightforward to verify that $A(\vartheta_i, \vec{\pi}) = \{ \vec{\mu} \in [0, 1]^J : \vec{\mu}^T \vec{\pi} = \vartheta_i \}$ is a bounded convex set in $\mathbb{R}^J$. Therefore, we have the decomposition
\begin{align}
A(\vartheta_i, \vec{\pi})
= \Big\{ \sum_{\ell=1}^{k_i} \lambda_\ell \vec{v}_\ell^{(i)} : \vec{\lambda} \in \mathcal{S}^{k_i} \Big\}
= \Big\{ \vec{V}^{(i)} \vec{\lambda} : \vec{\lambda} \in \mathcal{S}^{k_i} \Big\}.
\label{eqn:mink-weyl}
\end{align}
The $J \times k_i$ matrix $\vec{V}^{(i)}$ is composed of the columns $\vec{v}_1^{(i)}, \ldots, \vec{v}_{k_i}^{(i)}$ which are vertices of $A(\vartheta_i, \vec{\pi})$. Any element $\vec{\mu} \in A(\vartheta_i, \vec{\pi})$ can be written as a convex combination of these vertices. The matrix $\vec{V}^{(i)}$ depends on both $\vec{\pi}$ and $\vartheta_i$; both its elements and the dimension $k_i$ may vary with the observation $i = 1, \ldots, n$. The vector $\vec{\lambda}^{(i)}$ belongs to the probability simplex $\mathcal{S}^k$.

The Minkowski-Weyl decomposition of a polyhedron is
$
P = 
\{ \sum_{\ell=1}^k \lambda_\ell \vec{v}_\ell  : \vec{\lambda} \in \mathcal{S}^k \} +
\{ \sum_{\ell=1}^h \lambda_\ell \vec{\xi}_\ell  : \vec{\lambda} \geq 0 \},
$
relative to extreme points $\vec{v}_1, \ldots, \vec{v}_k$ (i.e.~vertices) and extreme directions $\vec{\xi}_1, \ldots, \vec{\xi}_h$ of $P$. The set $A_i$ in \eqref{eqn:mink-weyl} is a polytope, a bounded polyhedron not having extreme directions, for which we need only consider extreme points. Assuming a distribution on the coefficients of the Minkowski-Weyl decomposition has been advocated by \citet{DanaherEtAl2012}, who sought a class of priors to enforce biologically motivated polyhedral constraints in a Bayesian analysis.

A natural choice for a random effects distribution on $\mathcal{S}^{k_i}$ is $\vec{\lambda}^{(i)} \indep \text{Dirichlet}_{k_i}(\vec{\alpha})$. However, this choice leads to each component of $\vec{\mu}_i = \vec{V}^{(i)} \vec{\lambda}^{(i)}$ following the distribution of a linear combination of a $k$-dimensional Dirichlet. This distribution is computationally impractical; for example, its density has no known closed form for general $k$ \citep{ProvostCheong2000}. Our approach will first be to state the model using a Dirichlet random effect, then to state a more practical form of the model using Beta random effects with matched first and second moments. This ensures, for example, that $\E(\vec{\mu}_i) \in A(\vartheta_i, \vec{\pi})$. The Dirichlet formulation of the model is
\begin{align}
&Y_i \indep \sum_{j=1}^J \pi_j g(y_i \mid \mu_{ij}, \vec{\phi}_{ij}), \label{eqn:interval-hierarchy-dirichlet} \\
&\vec{\mu}_i = \vec{V}^{(i)} \vec{\lambda}^{(i)}, \quad
\text{where $\vec{V}^{(i)}$ contains vertices of $A(\vartheta_i, \vec{\pi})$},
\nonumber \\
&\vec{\lambda}^{(i)} \indep \text{Dirichlet}_{k_i}(\vec{\alpha}^{(i)}).
\nonumber
\end{align}
We restrict $\vec{\alpha}^{(i)}$ to the $k_i$-dimension vector $\kappa \vec{1}$ so that all $\vec{\lambda}^{(i)}$ follow a Symmetric Dirichlet distribution parameterized by a single scalar $\kappa$; this is done for several reasons. First, the dimension $k_i$ can vary with the observation so that an arbitrary $\vec{\alpha}$ would not be compatible with all observations. Second, the ordering of the vertices in $\vec{V}^{(i)}$ is somewhat arbitrary, and it is difficult to maintain a correspondence between individual vertices and the elements of $\vec{\alpha}$. Figure~\ref{fig:dirichletk3} plots the symmetric Dirichlet density for several $\kappa$ when $k=3$. Note that $\kappa = 1$ corresponds to the uniform distribution on the simplex, while $0 < \kappa < 1$ results in more density focused toward the vertices, and $\kappa > 1$ focuses density toward the interior.

\begin{figure}
\centering
\begin{subfigure}[b]{.40\textwidth}
  \centering
  \includegraphics[width=1.0\textwidth]{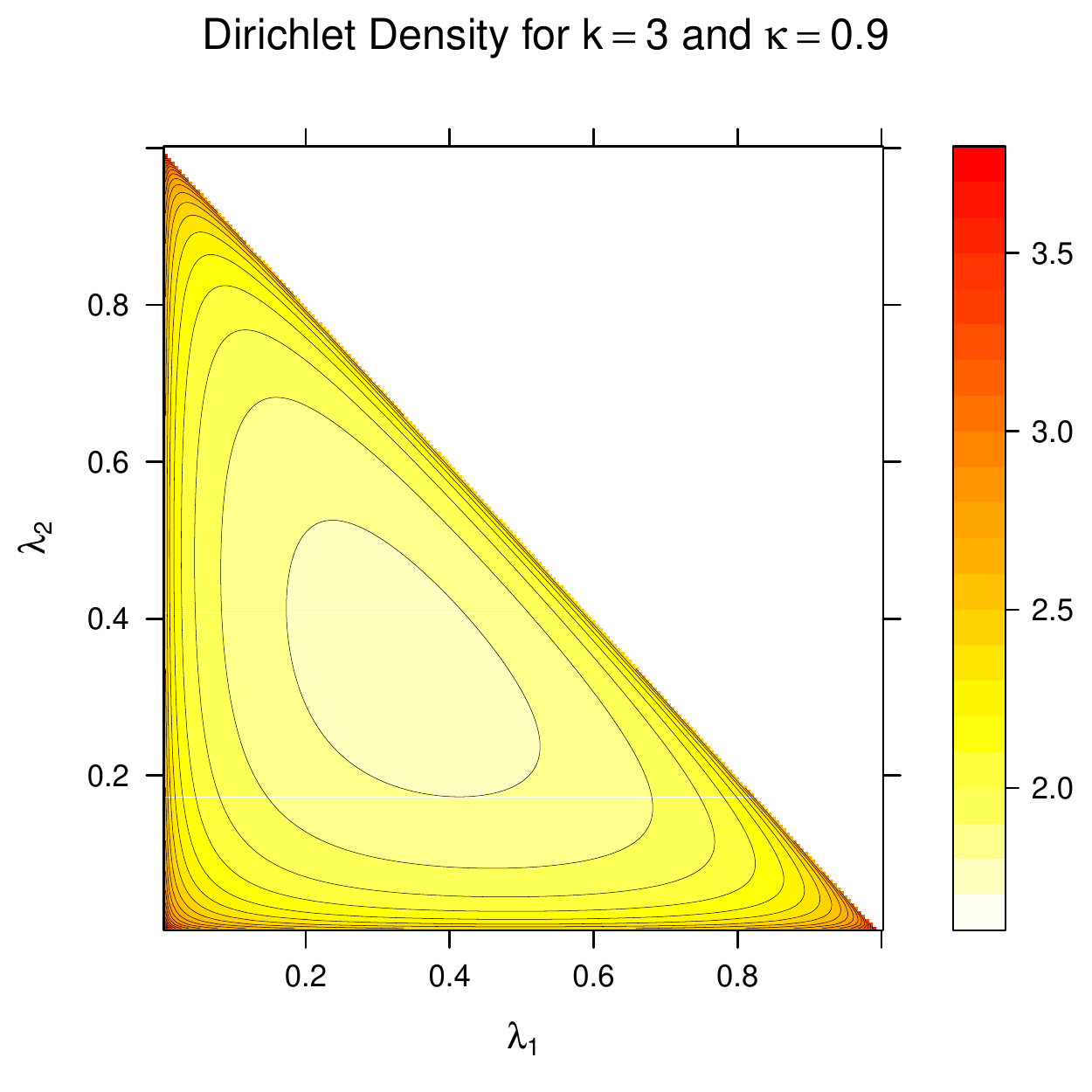}
  \caption{}
  \label{fig:dirichletk3-kappa0.9}
\end{subfigure}
\begin{subfigure}[b]{.40\textwidth}
  \centering
  \includegraphics[width=1.0\textwidth]{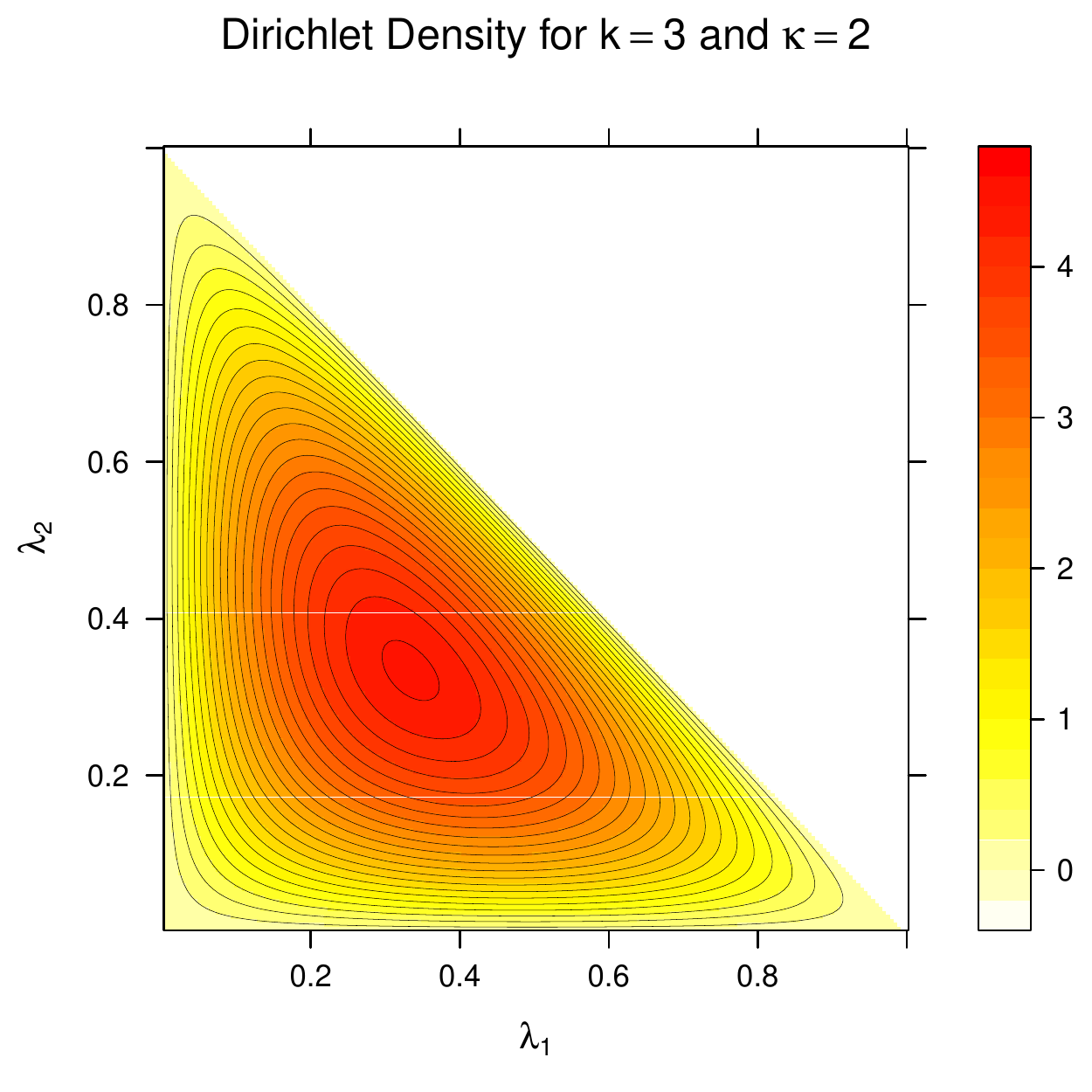}
  \caption{}
  \label{fig:dirichletk3-kappa2.0}
\end{subfigure}
\caption{The $\text{Dirichlet}_3(\vec{\lambda} \mid \kappa \vec{1})$ density for several settings of $\kappa$. Only $\lambda_1$ and $\lambda_2$ are plotted since $\lambda_3 = 1 - \lambda_1 - \lambda_2$.}
\label{fig:dirichletk3}
\end{figure}

Now, to obtain a Mixture Link density based on the more practical Beta distribution, define $\ell_{ij}$ and $u_{ij}$ as the smallest and largest elements respectively of the $j$th row $\vec{V}^{(i)}$; then $(\ell_{ij}, u_{ij})$ forms the support of $\mu_{ij}$. The Beta formulation of the model is
\begin{align}
&Y_i \indep \sum_{j=1}^J \pi_j g(y_i \mid \mu_{ij}, \vec{\phi}_{ij}), \label{eqn:interval-hierarchy-beta} \\
&\mu_{ij} = (u_{ij} - \ell_{ij})\psi_{ij} + \ell_{ij}, \quad \text{$j = 1, \ldots, J$}, \nonumber \\
&\psi_{ij} \sim \text{Beta}(a_{ij}, b_{ij}). \nonumber
\end{align}
To obtain $a_{ij}$ and $b_{ij}$, we first compute 
\begin{align*}
\E(\mu_{ij}) = (u_{ij} - \ell_{ij})\frac{a_{ij}}{a_{ij}+b_{ij}} + \ell_{ij},
\quad \text{and} \quad
\Var(\mu_{ij}) = \frac{ (u_{ij} - \ell_{ij})^2 a_{ij} b_{ij}}{(a_{ij}+b_{ij})^2 (a_{ij}+b_{ij}+1)}.
\end{align*}
Next, for $\vec{\lambda} \sim \text{Dirichlet}_{k_i}(\kappa \vec{1})$ and $\vec{v}_{j.}^{(i)T}$ denoting the $j$th row of $\vec{V}^{(i)}$, we can obtain
\begin{align*}
\E(\vec{v}_{j.}^{(i)T} \vec{\lambda}) = \bar{v}_{j.}^{(i)}
\quad \text{and} \quad
\Var(\vec{v}_{j.}^{(i)T} \vec{\lambda}) = \frac{\vec{v}_{j.}^{(i)T} \vec{v}_{j.}^{(i)} - k_i (\bar{v}_{j.}^{(i)})^2 }{ k_i (1 + k_i \kappa) },
\end{align*}
where $\bar{v}_{j.}^{(i)}$ denotes the mean of $\vec{v}_{j.}^{(i)T}$. Equating $\E(\mu_{ij})$ to $\E(\vec{v}_{j.}^{(i)T} \vec{\lambda})$ and $\Var(\mu_{ij})$ to  $\Var(\vec{v}_{j.}^{(i)T} \vec{\lambda})$ and solving for $a_{ij}$ and $b_{ij}$, we obtain that
\begin{align}
a_{ij} &= (\bar{v}_{j.}^{(i)} - \ell_{ij})^2 \left[ \frac{ k_i (1 + k_i \kappa) }{ \vec{v}_{j.}^{(i)T} \vec{v}_{j.}^{(i)}  - k_i (\bar{v}_{j.}^{(i)})^2 } \right] \frac{u_{ij} - \bar{v}_{j.}^{(i)}}{u_{ij} - \ell_{ij}} - \frac{\bar{v}_{j.}^{(i)} - \ell_{ij}}{u_{ij} - \ell_{ij}}, \\
b_{ij} &= a_{ij} \left( \frac{u_{ij} - \bar{v}_{j.}^{(i)}}{\bar{v}_{j.}^{(i)} - \ell_{ij}} \right).
\label{eqn:beta-moment-match}
\end{align}
In the special case that $k=2$, we have
\begin{align*}
&\bar{v}_{j.}^{(i)}
= \frac{1}{2} \left[ \min_{\ell \in \{1,2\}} v_{j\ell}^{(i)} +  \max_{\ell \in \{1,2\}} v_{j\ell}^{(i)} \right]
= \frac{1}{2} \left[ \ell_{ij} + u_{ij} \right], \\
&\bar{v}_{j.}^{(i)} - \ell_{ij} = u_{ij} - \bar{v}_{j.}^{(i)}, \\
&\vec{v}_{j.}^{(i)T} \vec{v}_{j.}^{(i)}
= u_{ij}^2 + \ell_{ij}^2,
\end{align*}
from which it can be shown that $a_{ij} = \kappa$ and $b_{ij} = \kappa$.

\citet{RaimPhDThesis2014} observes through simulation that, although the linear-combination-of-Dirichlet density can differ substantially from the moment-matched Beta density, the density of model~\eqref{eqn:interval-hierarchy-beta} is a close approximation to the density of model~\eqref{eqn:interval-hierarchy-dirichlet}. We have paid specific attention to the marginal distributions of the coordinates of $\vec{\mu}_{i}$ rather than the full joint distribution; it is seen from \eqref{eqn:mixture-link-density} that only the marginals influence the overall Mixture Link distribution. The density of model~\eqref{eqn:interval-hierarchy-beta} is now given by
\begin{align}
f(y_i \mid \bbeta, \vec{\pi}, \vec{\phi}_i, \kappa) = \sum_{j=1}^J \pi_j
\int_0^1 g(y_i \mid H_{ij}(w), \vec{\phi}_{ij}) \cdot
\mathcal{B}\left(w \given a_{ij}, b_{ij} \right) dw,
\label{eqn:mixture-link-density-probs}
\end{align}
where $\mathcal{B}(x \mid a, b)$ denotes the Beta density and $H_{ij}(x) = (u_{ij} - \ell_{ij}) x + \ell_{ij}$.

Computation of the Mixture Link density and its moments depends on the vertices of the set $A$. For the case $J=2$, it is easy to identify the vertices of $A$ graphically by plotting the line $\mu_1 \pi_1 + \mu_2 \pi_2 = \vartheta$, and visually identifying the points at which it intersects the unit rectangle. An illustration is given in Figure~\ref{fig:hyperplane-J2}. Formulas for the vertices in this case are stated now as a lemma.
\begin{lemma}
\label{result:vertices-J2-probs}
Suppose $J=2$ and $A$ has two distinct vertices $\vec{v}_1, \vec{v}_2$. Then the vertices are given by
\begin{align*}
&\vec{v}_1 =
\begin{cases}
\left( \frac{1}{\pi_1} \vartheta, 0 \right), \quad &\text{if $\frac{1}{\pi_1} \vartheta \leq 1$} \\
\left( 1, \frac{1}{\pi_2} (\vartheta - \pi_1) \right), \; \quad &\text{otherwise},
\end{cases}
\\
&\vec{v}_2 =
\begin{cases}
\left( \frac{1}{\pi_1} (\vartheta - \pi_2), 1 \right), \quad &\text{if $\frac{1}{\pi_1} (\vartheta - \pi_2) \geq 0$} \\
\left( 0, \frac{1}{\pi_2} \vartheta \right), \quad &\text{otherwise},
\end{cases}
\end{align*}
where $\pi_2 = 1 - \pi_1$.
\end{lemma}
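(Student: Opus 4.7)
The plan is to exploit the geometric picture: when $J=2$, the set $A$ is the intersection of the line $\pi_1 \mu_1 + \pi_2 \mu_2 = \vartheta$ with the unit square $[0,1]^2$. This intersection is either empty, a single point, or a closed line segment, and in the last case its two extreme points are precisely where the line crosses the boundary of the square. Since the hypothesis asserts that $A$ has two distinct vertices, it suffices to compute these two boundary intersections and verify that they agree with the stated formulas.

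First, I would enumerate the four candidate boundary intersections obtained by fixing one coordinate to $0$ or $1$: from $\mu_2 = 0$ the line gives $\mu_1 = \vartheta/\pi_1$; from $\mu_2 = 1$ it gives $\mu_1 = (\vartheta - \pi_2)/\pi_1$; from $\mu_1 = 0$ it gives $\mu_2 = \vartheta/\pi_2$; and from $\mu_1 = 1$ it gives $\mu_2 = (\vartheta - \pi_1)/\pi_2$. Each of these points automatically satisfies the linear constraint by construction, so the only question is which pair lies in $[0,1]^2$. Since $A$ has two distinct vertices, exactly two of the four candidates are admissible, and these are the vertices.

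Second, I would carry out the case analysis for $\vec{v}_1$. The candidate $(\vartheta/\pi_1, 0)$ is admissible precisely when $\vartheta/\pi_1 \leq 1$, giving the first branch. When $\vartheta/\pi_1 > 1$, the line must instead leave the square through the edge $\mu_1 = 1$, so the relevant intersection is $(1, (\vartheta - \pi_1)/\pi_2)$. I then verify admissibility: the second coordinate is nonnegative because $\vartheta > \pi_1$, and is at most one because $\vartheta \leq 1 = \pi_1 + \pi_2$. An entirely symmetric argument handles $\vec{v}_2$, with the two cases determined by the sign of $(\vartheta - \pi_2)/\pi_1$; if this quantity is nonnegative, the line exits through $\mu_2 = 1$, and otherwise it exits through $\mu_1 = 0$, yielding $(0, \vartheta/\pi_2)$, which lies in $[0,1]^2$ because $\vartheta < \pi_2$ forces $\vartheta/\pi_2 < 1$.

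There is no real conceptual obstacle here; the proof is essentially bookkeeping. The one point requiring care is to confirm the two-case partition is exhaustive and that, in every branch, the coordinate not fixed to $0$ or $1$ automatically falls in $[0,1]$ given the case condition. This reduces to elementary inequalities using $\pi_1, \pi_2 \geq 0$, $\pi_1 + \pi_2 = 1$, and the implicit nonemptiness condition $0 \leq \vartheta \leq 1$ that follows from $A$ containing two distinct points of $[0,1]^2$.
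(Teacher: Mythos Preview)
Your proposal is correct and follows essentially the same geometric argument as the paper: both identify the vertices as the points where the line $\pi_1\mu_1+\pi_2\mu_2=\vartheta$ meets the boundary of $[0,1]^2$, then carry out the case split on which edge is hit. The paper phrases this as ``take $\mu_1$ as large (respectively small) as possible'' whereas you enumerate all four edge intersections and filter by admissibility, but the underlying computation and case analysis are the same.
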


\begin{proof}
Using $\mu_1 \pi_1 + \mu_2 \pi_2 = \vartheta$ we have
\begin{align}
\mu_1 = \frac{1}{\pi_1}(\vartheta - \mu_2 \pi_2 ) \quad \text{and} \quad
\mu_2 = \frac{1}{\pi_2}(\vartheta - \mu_1 \pi_1 ),
\label{eqn:solnsJ2}
\end{align}
where $\mu_1 \in [0,1]$ and $\mu_2 \in [0,1]$ must hold.
To obtain $\vec{v}_1$, take $\mu_1$ as large as possible noting expressions \eqref{eqn:solnsJ2}. If $\mu_1 = 1$ is a valid solution (i.e. a point in $A$), then $\mu_2 = \frac{1}{\pi_1} (\vartheta - \pi_2)$. Otherwise, take $\mu_2$ as small as possible to maximize $\mu_1$; this yields $\mu_1 = \frac{1}{\pi_1} \vartheta$ and $\mu_2 = 0$. A similar argument taking $\mu_1$ as small as possible yields $\vec{v}_2$.
\end{proof}

\begin{figure}
\centering
\includegraphics[width = 0.45\textwidth]{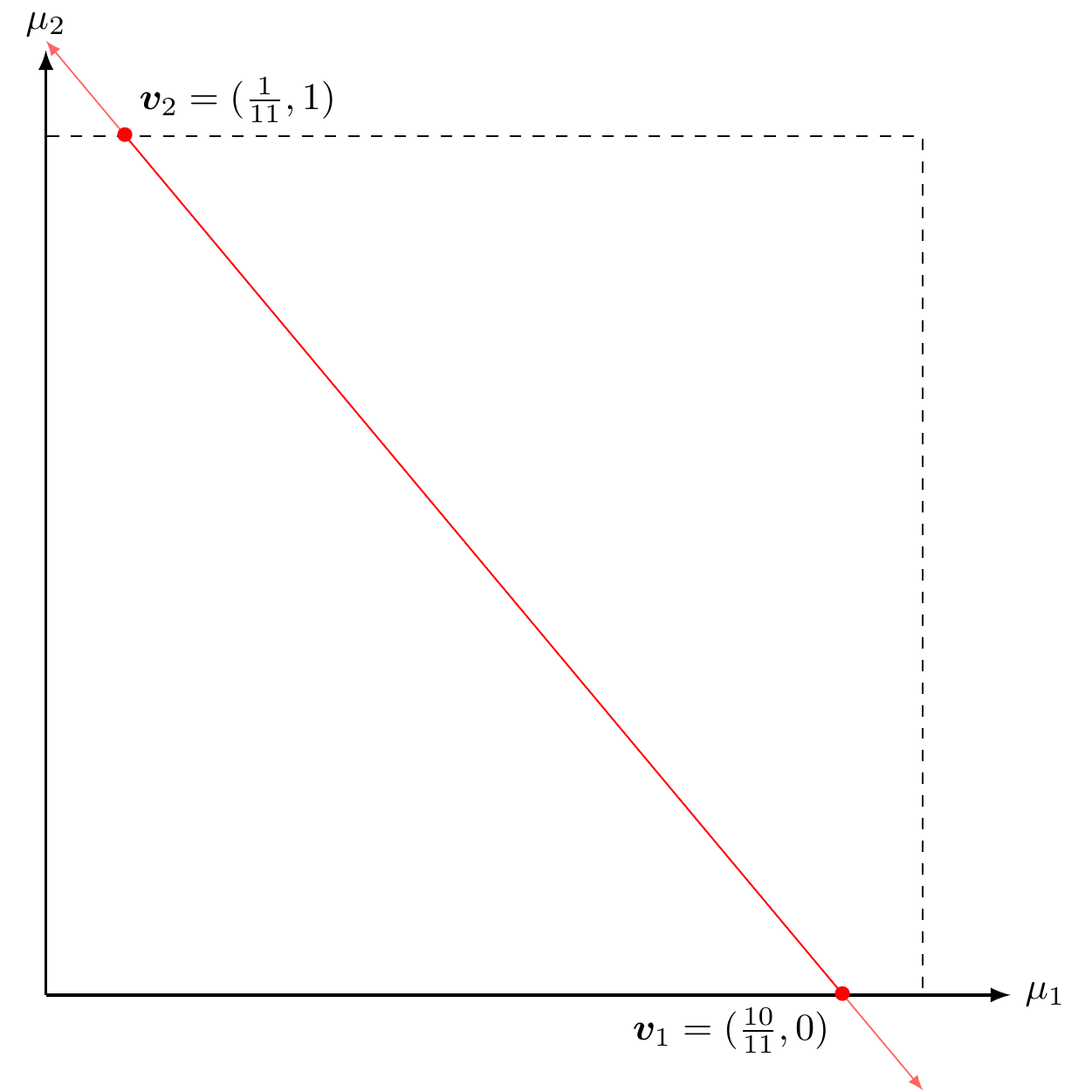}
\caption{An illustration of the set $A(\vartheta, \vec{\pi}) = \{ \vec{\mu} \in [0, 1]^J : \vec{\mu}^T \vec{\pi} = \vartheta \}$. Here we have selected $\vec{\pi} = (\frac{11}{20}, \frac{9}{20})$ and $\vartheta = \frac{1}{2}$.}
\label{fig:hyperplane-J2}
\end{figure}

\noindent
We may also locate the vertices $\vec{v}_1, \vec{v}_2$ systematically in the following way. Fix $\mu_2 = 0$ and solve for $\mu_1$ so that $\vec{\mu}^T \vec{\pi} = \vartheta$. Then fix $\mu_2 = 1$ and solve for $\mu_1$. Then fix $\mu_1$ at the values 0 and 1 and solve for $\mu_2$. At most two of these four solutions are contained in $A$; these are the vertices. We will soon see that this idea generalizes to $J > 2$. Note that it is also possible to have $k=1$ vertices when $J=2$. For example, if $\vec{\pi} = (1/2, 1/2)$ and $\vartheta = 1$, then $\mu_1 = 1, \mu_2 = 1$ is the only solution to $\mu_1 \pi_1 + \mu_2 \pi_2 = \vartheta$ in $[0,1]^2$, and therefore $A$ is a singleton set.

For the general ($J \geq 2$) case, Lemma~\ref{result:extreme-points-probs} characterizes points in $A$ which need to be considered when searching for the extreme points. In searching for extreme points, we must only consider those with at most one component not equal to 0 or 1.

\begin{lemma}[Characterization of Extreme Points of $A$]
\label{result:extreme-points-probs}
Suppose $\vec{v} = (v_1, \ldots, v_J)$ is a point in $A$ with two or more components strictly between 0 and 1. Then $\vec{v}$ is not an extreme point of $A$.
\end{lemma}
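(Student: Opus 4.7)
The plan is to show non-extremeness directly, by exhibiting a nontrivial line segment lying in $A$ that passes through $\vec{v}$. Recall that $A$ is the intersection of the cube $[0,1]^J$ with the hyperplane $\{\vec{\mu} : \vec{\mu}^T \vec{\pi} = \vartheta\}$. If I can find a nonzero direction $\vec{d}$ and an $\epsilon > 0$ with $\vec{v} \pm \epsilon \vec{d} \in A$, then $\vec{v} = \tfrac{1}{2}[(\vec{v}+\epsilon \vec{d}) + (\vec{v}-\epsilon \vec{d})]$ writes $\vec{v}$ as a proper convex combination of two distinct points of $A$, so $\vec{v}$ is not extreme.

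Let $r \neq s$ be two indices with $v_r, v_s \in (0,1)$, which exist by hypothesis. I would look for a direction $\vec{d}$ supported on $\{r, s\}$, so that the perturbed points differ from $\vec{v}$ only in coordinates that already lie strictly in the interior of $[0,1]$; then for all sufficiently small $\epsilon > 0$, the box constraints $\vec{v} \pm \epsilon \vec{d} \in [0,1]^J$ are automatic. The remaining requirement is that $\vec{d}$ respect the hyperplane, i.e.\ $\vec{d}^T \vec{\pi} = 0$.

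The natural choice, when $\pi_r > 0$ and $\pi_s > 0$, is $d_r = \pi_s$, $d_s = -\pi_r$, and $d_j = 0$ for $j \notin \{r,s\}$. This gives $\vec{d}^T \vec{\pi} = \pi_s \pi_r - \pi_r \pi_s = 0$ and $\vec{d} \neq 0$. I would then check the other sub-cases: if $\pi_r = 0$, taking $\vec{d} = \vec{e}_r$ (the $r$th standard basis vector) already satisfies $\vec{d}^T \vec{\pi} = 0$ and is nonzero, and the same argument handles $\pi_s = 0$. In every sub-case a valid nontrivial perturbation exists.

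The only subtle point is making sure the argument does not implicitly assume strict positivity of $\vec{\pi}$ (since the simplex $\mathcal{S}^J$ permits zero components), which is why I would split on whether $\pi_r, \pi_s$ are positive or zero; the calculation itself is routine. I do not expect any real obstacle beyond this bookkeeping, because the geometric picture is transparent: having two strictly interior coordinates leaves one free direction inside the hyperplane that the box constraints do not pin down.
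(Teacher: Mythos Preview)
Your argument is correct and is the standard convexity perturbation: pick two interior coordinates, build a null direction of $\vec{\pi}$ supported on those two, and move a small $\epsilon$ each way. The paper takes a slightly different route. It fixes the remaining $J-2$ coordinates, reduces to the two-dimensional constraint $\mu_1 \pi_1 + \mu_2 \pi_2 = \vartheta^*$, and then invokes Lemma~\ref{result:vertices-J2-probs} to write down the explicit endpoints $\vec{a}$ and $\vec{b}$ of the resulting line segment in $[0,1]^J$, with $\vec{v}$ strictly between them. So the paper produces the full segment, while you produce an arbitrarily short one; both suffice for non-extremeness. Your version is more self-contained (it does not rely on the earlier lemma) and, as you note, it handles the boundary case $\pi_r = 0$ or $\pi_s = 0$ explicitly, whereas the paper's invocation of Lemma~\ref{result:vertices-J2-probs} tacitly needs $\pi_1, \pi_2 > 0$ for its formulas to make sense. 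The paper's approach, on the other hand, makes the connection to the vertex-enumeration algorithm more transparent by reusing the $J=2$ machinery.
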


\begin{proof}
Suppose without loss of generality that $\vec{v} \in A$ with $v_1 \in (0,1)$ and $v_2 \in (0,1)$. We have that
\begin{align*}
\vec{v}^T \vec{\pi} = \vartheta
\quad &\iff \quad
v_1 \pi_1 + v_2 \pi_2 + (v_3 \pi_3 + \cdots + v_J \pi_J) = \vartheta \\
\quad &\iff \quad
v_1 \pi_1 + v_2 \pi_2 = \vartheta^*,
\end{align*}
where $\vartheta^* = \vartheta - (v_3 \pi_3 + \cdots + v_J \pi_J)$. We can now use Lemma~\ref{result:vertices-J2-probs} to obtain vertices, say $\vec{a}$ and $\vec{b}$, of the line segment
\begin{align*}
L = \left\{ (\mu_1, \mu_2, v_3, \ldots, v_J) \in [0,1]^J : \mu_1 \pi_1 + \mu_2 \pi_2 = \vartheta^* \right\},
\end{align*}
where $(v_3, \ldots, v_J)$ are held fixed and only $(\mu_1, \mu_2)$ may vary. Explicitly, we have
\begin{align*}
&\vec{a} =
\begin{cases}
\left( \frac{1}{\pi_1} \vartheta^*, 0, v_3, \ldots, v_J \right), \quad &\text{if $\frac{1}{\pi_1} \vartheta^* \leq 1$} \\
\left( 1, \frac{1}{\pi_2} (\vartheta^* - \pi_1), v_3, \ldots, v_J \right), \; \quad &\text{otherwise},
\end{cases}
\\
&\vec{b} =
\begin{cases}
\left( \frac{1}{\pi_1} (\vartheta^* - \pi_2), 1, v_3, \ldots, v_J \right), \quad &\text{if $\frac{1}{\pi_1} (\vartheta^* - \pi_2) \geq 0$} \\
\left( 0, \frac{1}{\pi_2} \vartheta^*, v_3, \ldots, v_J \right), \quad &\text{otherwise}.
\end{cases}
\end{align*}
By construction, we have that $\vec{v}$ is in the line segment strictly between $\vec{a}$ and $\vec{b}$, with $\vec{a} \neq \vec{b}$. Furthermore, since $L \subseteq A$, we have that $\vec{a}, \vec{b} \in A$. Therefore, $\vec{v}$ can not be an extreme point of $A$.
\end{proof}

This can be used to formulate a simple procedure to identify all extreme points of $A$, which is given as Algorithm~\ref{alg:find-vertices-probs}. Notice that it considers $J \cdot 2^{J-1}$ points; this would be impractical for large $J$, but is manageable for smaller values of $J$ that are commonly used in finite mixtures.

\begin{algorithm}
\caption{Find vertices of the set $A(\vartheta, \vec{\pi})$.}
\label{alg:find-vertices-probs}
\begin{algorithmic}
\Function{FindVertices}{$\vartheta, \vec{\pi}$}
\State $\mathcal{V} \gets \varnothing$
\For{$j = 1, \ldots, J$}
\If{$\pi_j > 0$}
\ForAll{$\vec{\mu}_{-j} \in \{0,1\}^{J-1}$}
\State $\mu_j^* \gets \pi_j^{-1} \left[ \vartheta - \vec{\mu}_{-j}^T \vec{\pi}_{-j} \right]$
\State $\vec{v}^* \gets (\mu_1, \ldots, \mu_{j-1}, \mu_j^*, \mu_{j+1}, \ldots, \mu_J)$
\State $\mathcal{V} \gets \mathcal{V} \cup \vec{v}^*$ if $\vec{v}^* \in A(\vartheta, \vec{\pi})$
\EndFor
\EndIf
\EndFor
\State \Return Matrix $\vec{V}$ with columns $\vec{v}^* \in \mathcal{V}$
\EndFunction
\end{algorithmic}
\end{algorithm}

We will now formulate a Mixture Link Binomial distribution. Suppose $g(y_i \mid w, \vec{\phi}_{ij}) = \text{Bin}(y_i \mid m_i, w)$ so that $y_i$ represents a count of successes out of $m_i$ independent trials. Model \eqref{eqn:interval-hierarchy-beta} becomes 
\begin{align}
Y_i &\indep \sum_{j=1}^J \pi_j \binom{m_i}{y_i} \mu_{ij}^{y_i} (1 - \mu_{ij})^{m_i-y_i}, \label{eqn:binomial-mixlink} \\
\mu_{ij} &= (u_{ij} - \ell_{ij})\psi_{ij} + \ell_{ij}, \quad \text{$j = 1, \ldots, J$}, \nonumber \\
\psi_{ij} &\sim \text{Beta}(a_{ij}, b_{ij}). \nonumber
\end{align}
To draw from this distribution,
\begin{enumerate*}
\item Compute matrix $\vec{V}$ given $\vec{x}$, $\bbeta$, and $\vec{\pi}$.
\item Compute $a_j$ and $b_j$ for $j = 1, \ldots, J$ according to \eqref{eqn:beta-moment-match}, and let $(\ell_j, u_j)$ be the minimum and maximum element, respectively, of the $j$th row of $\vec{V}$.
\item Let $\mu_j = (u_j - \ell_j) \psi_j + \ell_j$ with $\psi_j \sim \text{Beta}(a_j, b_j)$, for $j = 1, \ldots, J$.
\item Draw $Z \sim \text{Discrete}(1, \ldots, J; \vec{\pi})$.
\item Draw $Y \sim \text{Binomial}(m, \mu_Z)$.
\end{enumerate*}
Here, $\text{Discrete}(1, \ldots, k; \vec{p})$ denotes the discrete distribution with values $1, \ldots, k$ and corresponding probabilities $\vec{p} = (p_1, \ldots, p_k)$. Moments of $Y$ can be computed using moments of $\mu_j$ for $j = 1, \ldots, J$. In particular, after some algebra, we obtain
\begin{align*}
\Var(Y) = m \vartheta \left( 1 - m \vartheta \right)
+ m(m-1) \sum_{j=1}^J \pi_j \frac{ \vec{v}_{j.}^T \vec{v}_{j.} + \kappa (k \bar{v}_{j.})^2 }{ k (1 + \kappa k) }.
\end{align*}
Some remarks about the Mixture Link Binomial distribution follow.%
\footnote{Analogous statements for some of these remarks can be made about the Mixture Link Poisson and Mixture Link Normal distributions, discussed in Sections~\ref{sec:mixlink-positive} and \ref{sec:mixlink-real}. We have focused on the Binomial case for brevity.}

\begin{remark}
For the case $m = 1$ where $y$ represents a single success or failure, $\E(Y) = \vartheta$ implies $\Prob(Y = 1) = \vartheta^{y} (1 - \vartheta)^{1-y}$, and Mixture Link simplifies to the usual Bernoulli regression model. In this case, the distribution depends only on its $\bbeta$ parameter. When $m > 1$, this trivial simplification does not take place.
\end{remark}

\begin{remark}
Note that because $\vec{v}_{j.}^T \vec{v}_{j.} \leq k$ and $\bar{v}_{j.} \leq 1$, we have $\sum_{j=1}^J \pi_j \vec{v}_{j.}^T \vec{v}_{j.} + \kappa (k \bar{v}_{j.})^2 \leq k (1 + \kappa k)$, yielding the bound $\Var(Y) \leq m(m-1) - m \vartheta (m \vartheta - 1)$, which is free of $\vec{\pi}$ and $\kappa$.
\end{remark}

\begin{remark}
The expression $\Var(Y)$ is non-increasing in $\kappa$. This can be seen from
\begin{align*}
\frac{\partial}{\partial \kappa} \Var(Y) =
-\frac{m(m-1)}{(1 + \kappa k)^2} \sum_{j=1}^J \pi_j \sum_{\ell=1}^k (v_{j\ell} - \bar{v}_{j.})^2 \leq 0.
\end{align*}
\end{remark}

\begin{remark}
\label{remark:binomial-case}
$\text{Binomial}(m, \vartheta)$ is a special case of Mixture Link Binomial, when $\vec{\pi} = (\frac{1}{J}, \ldots, \frac{1}{J})$ and $\kappa \rightarrow \infty$. This can be seen directly from the Dirichlet formulation of Mixture Link \eqref{eqn:interval-hierarchy-dirichlet}. Let $\vec{\pi} = (\frac{1}{J}, \ldots, \frac{1}{J})$ so that $A(\vec{\pi}, \vartheta) = \{ \vec{\mu} \in [0,1]^J : \mu_1 + \cdots + \mu_J = J \vartheta \}$. A vertex $\vec{v}^*$ of $A(\vec{\pi}, \vartheta)$ is obtained by taking, say, the first $v_1^*, \ldots, v_{[J \vartheta]}^*$ to be 1, $v_{[J \vartheta] + 1}^* = J \vartheta - [J \vartheta]$, and the remaining elements of $\vec{v}^*$ to be zero. Here, $[x]$ represents the integer part of a real number $x$. By Lemma~\ref{result:extreme-points-probs}, $\vec{v}^*$ is a vertex of $A(\vec{\pi}, \vartheta)$. The remaining vertices can be obtained by permuting the elements of $\vec{v}^*$. If $\tilde{v}_1^*, \ldots, \tilde{v}_s^*$ are the unique elements of $\vec{v}^*$ with multiplicities $|\tilde{v}_1^*|, \ldots, |\tilde{v}_s^*|$, then there are $k = J! / \{ |\tilde{v}_1^*|! \cdots |\tilde{v}_s^*|! \}$ unique permutations of $\vec{v}^*$ to use as columns in the matrix $\vec{V}$. Notice that, for any $a, j \in \{ 1, \ldots, J \}$, the element $\tilde{v}_a^*$ appears in the $j$th row $\vec{v}_{j.}^T$ of $\vec{V}$ exactly $(J-1)! / \{ |\tilde{v}_a^* - 1|! \prod_{\ell \neq a} |\tilde{v}_\ell^*|! \}$ times.%
\footnote{This is the number of unique permutations of $\{ v_1^*, \ldots, v_J^* \}$, keeping one of the elements fixed.}
Then we have
\begin{align}
\vec{v}_{j.}^T \vec{1} = \sum_{a=1}^s \tilde{v}_a^* \frac{(J-1)! }{ |\tilde{v}_a^* - 1|! \prod_{\ell \neq a} |\tilde{v}_\ell^*|! }
= \sum_{a=1}^s \tilde{v}_a^* \frac{J! |\tilde{v}_a^*|}{ \prod_{\ell=1}^a |\tilde{v}_\ell^*|! } \frac{1}{J}
= \frac{k}{J} \sum_{a=1}^s \tilde{v}_a^* \cdot |\tilde{v}_a^*|
= \frac{k}{J} J \vartheta = k \vartheta.
\label{eqn:row-simplification}
\end{align}
When $\kappa \rightarrow \infty$, a draw $\vec{\lambda} \sim \text{Dirichlet}_k(\kappa \vec{1})$ becomes a point mass at its expected value $\frac{1}{k} \vec{1}$ so that \eqref{eqn:row-simplification} gives $\vec{\mu} = \vec{V} \vec{\lambda} = \frac{1}{k} \vec{V} \vec{1} = \vartheta \vec{1}$. It can now be seen that
\begin{align*}
f(y) = \sum_{j=1}^J \pi_j \binom{m}{y} \mu_{j}^{y} (1 - \mu_{j})^{m-y}
= \sum_{j=1}^J \frac{1}{J} \binom{m}{y} \vartheta^{y} (1 - \vartheta)^{m-y}
\end{align*}
is the $\text{Binomial}(m, \vartheta)$ distribution.
\end{remark}

\begin{remark}
\label{remark:zi-case}
Mixture Link Binomial becomes a zero- and/or $m$-inflated Binomial model when $\kappa \rightarrow 0$. As in Remark~\ref{remark:binomial-case}, we will work directly from the Dirichlet formulation. As $\kappa \rightarrow 0$, a draw $\vec{\lambda} \sim \text{Dirichlet}_k(\kappa \vec{1})$ behaves as a discrete uniform random variable on $\{ \vec{e}_1, \ldots, \vec{e}_k \}$, the columns of the $k \times k$ identity matrix which represent the vertices of the simplex $\mathcal{S}^k$. Here, the Mixture Link distribution becomes
\begin{align*}
f(y) &= \sum_{j=1}^J \pi_j \sum_{\ell=1}^k \frac{1}{k} \cdot \text{Bin}(y \mid m, \vec{v}_{j.}^T \vec{e}_\ell) \\
&= \sum_{j=1}^J \sum_{\ell=1}^k \frac{\pi_j}{k} \binom{m}{y} v_{j\ell}^{y} (1 - v_{j\ell})^{m-y}.
\end{align*}
Recall from Lemma~\ref{result:extreme-points-probs} that, for each $\ell = 1, \ldots, k$, at most one of $\{ v_{1 \ell}, \ldots, v_{J \ell} \}$ can take on a value outside of $\{0, 1\}$. Terms with $v_{J \ell} = 0$ represent a point mass at zero, while terms with $v_{J \ell} = 1$ represent a point mass at $m$.
\end{remark}

\begin{remark}
Mixture Link Binomial is closely related to two other Binomial models for overdispersion. Starting from \eqref{eqn:mixture-link-density-probs}, if we could take $\ell_{ij} = 0$ and $u_{ij} = 1$, we would have
\begin{align*}
f(y_i \mid \bbeta, \vec{\pi}, \vec{\phi}_i, \kappa) &= \sum_{j=1}^J \pi_j
\int_0^1 \text{Bin}(y_i \mid (u_{ij} - \ell_{ij})w + \ell_{ij}, \vec{\phi}_{ij}) \cdot
\mathcal{B}\left(w \given a_{ij}, b_{ij} \right) dw, \\
&= \sum_{j=1}^J \pi_j \binom{m_i}{y_i}
\frac{ B(a_{ij} + y_i, b_{ij} + m_i - y_i) }{ B(a_{ij}, b_{ij}) }.
\end{align*}
Therefore, Mixture Link Binomial can be seen as a constrained form of a finite mixture of $J$ Beta-Binomial densities. Also, recall the Random-Clumped Binomial (RCB) distribution \citep{MorelNagaraj1993}, whose density is given by
\begin{align*}
f(y \mid \pi, \rho) = \pi_1 \text{Bin}(y \mid \pi, \mu_1) + \pi_2 \text{Bin}(y \mid \pi, \mu_2),
\end{align*}
where $\pi_1 = \pi$, $\pi_2 = 1-\pi$, and $\mu_1 = (1-\rho)\pi + \rho$, $\mu_2 = (1-\rho)\pi$. The free parameters of the distribution are $\pi \in (0,1)$ and $\rho \in (0,1)$. Notice that $\pi_1 \mu_1 + \pi_2 \mu_2 = \pi$, so that this particular choice of $(\mu_1, \mu_2)$ is in the set $A(\pi_1, \vec{\pi})$. Therefore, RCB can be seen as a special case of Mixture Link Binomial.
\end{remark}

\section{Positive Means}
\label{sec:mixlink-positive}

The setting $\mathcal{M} = [0, \infty)$ is commonly required for count data and time-to-event data. Just as in Section~\ref{sec:mixlink-probs}, the set $A(\vartheta, \vec{\pi}) = \{ \vec{\mu} \in [0, \infty)^J : \vec{\mu}^T \vec{\pi} = \vartheta \}$ is a closed convex hyperplane segment within $\mathbb{R}^J$. Therefore, the decomposition \eqref{eqn:mink-weyl} also applies but the procedure to compute vertices is much simpler. First note that for $J=2$, $\vec{v}_1 = (\vartheta / \pi_1, 0)$ and $\vec{v}_2 = (0, \vartheta / \pi_2)$ are the vertices of $A$. To see this, suppose $\vec{\mu}^*$ is an arbitrary point in $A$. Then we must have, for some $\lambda \in [0,1]$,
\begin{align*}
\begin{pmatrix}
\mu_1^* \\
\mu_2^*
\end{pmatrix}
= \lambda \vec{v}_1 + (1-\lambda) \vec{v}_2 =
\begin{pmatrix}
\lambda \vartheta / \pi_1 \\
(1-\lambda) \vartheta / \pi_2
\end{pmatrix}.
\end{align*}
Taking $\lambda = \mu_1^* \pi_1 / \vartheta$ satisfies the first equation $\mu_1^* = \lambda \vartheta / \pi_1$, and also gives $(1-\lambda) \vartheta / \pi_2 = (\vartheta - \mu_1^* \pi_1) / \pi_2 = \mu_2^*$ to satisfy the second equation. Similarly to Lemma~\ref{result:extreme-points-probs}, we characterize the extreme points of $A$ for the case of positive means by Lemma~\ref{result:extreme-points-positive}. The proof is similar to that of Lemma~\ref{result:extreme-points-probs}, and therefore omitted.

\begin{lemma}[Characterization of Extreme Points of $A$]
\label{result:extreme-points-positive}
Suppose $\vec{v} = (v_1, \ldots, v_J)$ is a point in $A$ with two or more components which are strictly positive. Then $\vec{v}$ is not an extreme point of $A$.
\end{lemma}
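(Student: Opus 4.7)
The plan is to mimic the proof of Lemma~\ref{result:extreme-points-probs} almost verbatim, exchanging the role of the unit interval for the positive real line and updating the endpoints of the relevant line segment accordingly. Suppose without loss of generality that $v_1 > 0$ and $v_2 > 0$. First I would isolate the first two coordinates in the link constraint $\vec{v}^T \vec{\pi} = \vartheta$, rewriting it as $v_1 \pi_1 + v_2 \pi_2 = \vartheta^*$ with $\vartheta^* = \vartheta - \sum_{j=3}^J v_j \pi_j$, thereby reducing the problem to the two-dimensional subproblem whose vertex formulas have already been established earlier in this section.

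Next, under the working assumption $\pi_1, \pi_2 > 0$, I would introduce the line segment
\begin{align*}
L = \{ (\mu_1, \mu_2, v_3, \ldots, v_J) \in [0,\infty)^J : \mu_1 \pi_1 + \mu_2 \pi_2 = \vartheta^* \},
\end{align*}
in which only the first two coordinates are allowed to vary. By the $J=2$ computation preceding the lemma, $L$ has exactly two vertices, namely $\vec{a} = (\vartheta^*/\pi_1,\, 0,\, v_3, \ldots, v_J)$ and $\vec{b} = (0,\, \vartheta^*/\pi_2,\, v_3, \ldots, v_J)$, both of which lie in $A$ since $\vartheta^* \geq 0$ and the remaining coordinates are already feasible. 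Because $v_1 > 0$ and $v_2 > 0$, the point $\vec{v}$ lies strictly in the relative interior of the segment from $\vec{a}$ to $\vec{b}$, exhibiting $\vec{v}$ as a non-trivial convex combination of two distinct points of $A$, so $\vec{v}$ cannot be an extreme point.

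The main obstacle is the mild bookkeeping around the degenerate case in which one of $\pi_1, \pi_2$ equals zero, since then $\vartheta^*/\pi_j$ is not defined. This case is actually easier: if $v_j > 0$ and $\pi_j = 0$, then changing $v_j$ does not affect $\vec{v}^T \vec{\pi}$, so for every sufficiently small $\epsilon > 0$ both $\vec{v} + \epsilon \vec{e}_j$ and $\vec{v} - \epsilon \vec{e}_j$ remain in $A$, and $\vec{v}$ is their midpoint. Combining the two cases covers every configuration in which at least two coordinates of $\vec{v}$ are strictly positive, and the result follows. The argument is otherwise a routine adaptation of the probability-valued case, which is presumably why the authors omit it.
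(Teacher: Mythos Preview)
Your proposal is correct and is exactly the adaptation the paper has in mind: the authors omit the proof entirely, stating only that it is ``similar to that of Lemma~\ref{result:extreme-points-probs},'' and your argument carries out that similarity in the obvious way by replacing the cube $[0,1]^J$ with the orthant $[0,\infty)^J$ and using the two-point vertex description for $J=2$ established just before the lemma. Your treatment of the degenerate case $\pi_j = 0$ is an extra bit of care that neither the paper's proof of Lemma~\ref{result:extreme-points-probs} nor the surrounding discussion addresses explicitly, but it is harmless and makes the argument complete without an implicit positivity assumption on $\vec{\pi}$.
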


Now, if $\vec{v} = (0, \ldots, 0, v_j, 0, \ldots, 0)$ is a point in $A$, $\vec{v}^T \vec{\pi} = \vartheta$ implies $v_j \pi_j = \vartheta$. There are exactly $J$ such points in $A$, yielding $\vec{V} = \Diag(\vartheta / \pi_1, \ldots, \vartheta / \pi_J)$. Poisson Mixture Link can now be formulated similarly as in Section~\ref{sec:mixlink-probs}. Note that, in this case, the Dirichlet and Beta assumptions on $\mu_{i}$ lead to exactly the same model. Taking $g(y_i \mid w, \vec{\phi}_{ij}) = \text{Poisson}(y_i \mid w)$, the model becomes
\begin{align*}
Y_i &\indep \sum_{j=1}^J \pi_j \frac{e^{-\mu_{ij}} \mu_{ij}^{y_i}}{y_i!} \\
&\vec{\mu}_i = \vec{V}^{(i)} \vec{\lambda}^{(i)}, \\
&\vec{\lambda}^{(i)} \indep \text{Dirichlet}_{k_i}(\kappa \vec{1}).
\end{align*}
Expressions involving the vertices simplify in the case of positive means, with $J = k_i$, $\ell_{ij} = 0$, $u_{ij} = v_{jj}^{(i)}$, $\bar{v}_{j.}^{(i)} = v_{jj}^{(i)} / J$, $\vec{v}_{j.}^{(i)T} \vec{v}_{j.}^{(i)} = (v_{jj}^{(i)})^2$, $H_{ij}(w) = v_{jj}^{(i)} w$, $a_{ij} = \kappa$, and $b_{ij} = \kappa (J-1)$. Recalling that the marginal distribution of a single coordinate of $\text{Dirichlet}_{J}(\kappa \vec{1})$ is $\text{Beta}(\kappa, \kappa (J-1))$, the Mixture Link density becomes
\begin{align*}
f(y_i \mid \bbeta, \vec{\pi}, \kappa)
&= \sum_{j=1}^J \pi_j
\int_0^1 \frac{e^{-H_{ij}(w)} H_{ij}(w)^{y_i}}{y_i!} \cdot
\mathcal{B}\left(w \given \kappa, \kappa (J-1) \right) dw \\
&= \sum_{j=1}^J \pi_j
\int_0^1 \frac{e^{-v_{jj}^{(i)} w} [v_{jj}^{(i)} w]^{y_i}}{y_i!} \cdot
\frac{w^{\kappa-1} (1-w)^{\kappa (J-1) - 1}}{B(\kappa, \kappa (J-1))} dw \\
&= \frac{\vartheta_i^{y_i} \Gamma(y_i + \kappa) \Gamma(\kappa J) }{ \Gamma(y_i + \kappa J) \Gamma(\kappa) \Gamma(y_i+1) }
\sum_{j=1}^J \pi_j^{1 - y_i} \cdot
\mathcal{F}\left( -\frac{\vartheta_i}{\pi_j}; y_i + \kappa, y_i + J\kappa \right)
\end{align*}
where $\mathcal{F}(x; a, b) = [B(a, b-a)]^{-1} \int_0^1 w^{a-1} (1-w)^{b-a-1} e^{xw} dw$ is the confluent hypergeometric function of the first order and $B(a, b) = \Gamma(a) \Gamma(b) / \Gamma(a+b)$ is the beta function \citep[Chapter 1]{JohnsonKotzKemp2005}. Implementations of $\mathcal{F}(x; a, b)$ are available in computing packages such as the GNU Scientific Library.%
\footnote{\url{www.gnu.org/software/gsl}} The variance of $Y$ becomes
\begin{align*}
\Var(Y) &= \vartheta + \left[ \sum_{j=1}^J \pi_j \bar{v}_{j.}^2 - \vartheta^2 \right] + 
\sum_{j=1}^J \pi_j \frac{\vec{v}_{j.}^T \vec{v}_{j.} - k(\bar{v}_{j.})^2}{k (1 + \kappa k)} \\
&= \vartheta + \vartheta^2 \left[ \frac{ \kappa+1}{J (1 + J \kappa)} \sum_{j=1}^J \frac{1}{\pi_j} - 1 \right].
\end{align*}
Drawing random variables from Mixture Link Poisson is similar to the method given in Section~\ref{sec:mixlink-probs} for Mixture Link Binomial:
\begin{enumerate*}
\item Compute matrix of vertices $\vec{V}$ given $\vec{x}$, $\bbeta$, and $\vec{\pi}$.
\item Let $\mu_j = \psi_j \cdot \vartheta / \pi_j$ with $\psi_j \sim \text{Beta}(\kappa, \kappa (J-1))$, for $j = 1, \ldots, J$.
\item Draw $Z \sim \text{Discrete}(1, \ldots, J; \vec{\pi})$.
\item Draw $Y \sim \text{Binomial}(m, \mu_Z)$.
\end{enumerate*}

\begin{remark}
The expression $\Var(Y)$ is decreasing in $\kappa$ since
\begin{align*}
\frac{\partial}{\partial \kappa} \Var(Y) = -\frac{\vartheta (J-1)}{J(1 + J\kappa)} \sum_{j=1}^J \frac{1}{\pi_j} < 0.
\end{align*}
\end{remark}

\section{Real-valued Means}
\label{sec:mixlink-real}
In the case $\mathcal{M} = \mathbb{R}$, the set $A(\vartheta, \vec{\pi}) = \{ \vec{\mu} \in \mathbb{R}^J : \vec{\mu}^T \vec{\pi} = \vartheta \}$ forms a hyperplane in $\mathbb{R}^J$ and can be decomposed as $A(\vartheta, \vec{\pi}) = \{ \bar{\vec{\mu}} \in \mathbb{R}^J : \bar{\vec{\mu}}^T \vec{\pi} = 0 \} + \vartheta \vec{1}$. For any $\bar{\vec{\mu}}$ in the subspace $\{ \bar{\vec{\mu}} \in \mathbb{R}^J : \bar{\vec{\mu}}^T \vec{\pi} = 0 \}$, we can write $\bar{\mu}_J = -\pi_J^{-1} (\pi_1 \bar{\mu}_1 + \cdots + \pi_{J-1} \bar{\mu}_{J-1})$ with $\bar{\mu}_j$ unrestricted for $j = 1, \ldots, J-1$. Therefore a basis for the subspace is given by the $J \times (J-1)$ matrix
\begin{align*}
\vec{V} =
\begin{pmatrix}
1 & 0 & \cdots & 0 \\
0 & 1 & \cdots & 0 \\
  &   & \ddots &   \\
0 & 0 & \cdots & 1 \\
-\pi_1/\pi_J & -\pi_2/\pi_J & \cdots & -\pi_{J-1}/\pi_J 
\end{pmatrix}.
\end{align*}
We can therefore represent any $\vec{\mu} \in A(\vartheta, \vec{\pi})$ as
\begin{align*}
\vec{\mu} = \vec{V} \vec{\lambda} + \vartheta \vec{1} \quad \text{for some $\vec{\lambda} \in \mathbb{R}^{J-1}$.}
\end{align*}
A natural choice for a random effects distribution on $A(\vartheta, \vec{\pi})$ is to take $\lambda_j \iid \text{N}(0, \kappa^2)$ for $j = 1, \ldots, J-1$. This leads to
\begin{align*}
\vec{\mu} \sim \text{N}(\vartheta \vec{1}, \kappa^2 \vec{V} \vec{V}^T), \quad \text{where} \quad
\vec{V} \vec{V}^T =
\begin{pmatrix}
\vec{I}                      & -\pi_J^{-1} \vec{\pi}_{-J} \\
-\pi_J^{-1} \vec{\pi}_{-J}^T &  \pi_J^{-2} \vec{\pi}_{-J}^T \vec{\pi}_{-J}
\end{pmatrix},
\end{align*}
$\vec{I}$ denotes the $(J-1) \times (J-1)$ identity matrix, and $\vec{\pi}_{-J} = (\pi_1, \ldots, \pi_{J-1})$. The Mixture Link density depends only on the diagonal terms of the random effect variance,
\begin{align}
f(y_i \mid \bbeta, \vec{\pi}, \vec{\phi}_i, \kappa)
&= \sum_{j=1}^J \pi_j \int g(y_i \mid w, \vec{\phi}_{ij}) \cdot \text{N}(w \mid \vartheta_i, \kappa^2 a_{ij}) dw,
\label{eqn:mixture-link-density-unconstr}
\end{align}
where $a_{ij} = 1$ for $j = 1, \ldots, J-1$ and $a_{iJ} = \pi_J^{-2} \vec{\pi}_{-J}^T \vec{\pi}_{-J}$.

To obtain a Mixture Link analogue to the commonly used ordinary least squares model, suppose $g(y_i \mid w, \phi_{ij}) = \text{N}(y_i \mid w, \sigma_j^2)$. In this case, it can be shown that \eqref{eqn:mixture-link-density-unconstr} simplifies to the finite mixture
\begin{align}
f(y_i \mid \bbeta, \vec{\pi}, \sigma_1^2, \ldots, \sigma_J^2, \kappa)
&= \sum_{j=1}^J \pi_j \text{N}(y_i \mid \vartheta_i, \kappa^2 a_{ij} + \sigma_j^2),
\label{eqn:mixture-link-density-normal}
\end{align}
where each of the subpopulations has a common mean. If the $J$ subpopulations are assumed to be homoskedastic, \eqref{eqn:mixture-link-density-normal} further simplifies to a finite mixture of two densities,
\begin{align*}
f(y_i \mid \bbeta, \vec{\pi}, \sigma^2, \kappa)
= (1-\pi_J) \text{N}(y_i \mid \vartheta_i, \kappa^2 + \sigma^2)
+ \pi_J \text{N}(y_i \mid \vartheta_i, \kappa^2 \pi_J^{-2} (1 - \pi_J)^2 + \sigma^2).
\end{align*}
Focusing on the homoskedastic model, it is straightforward to draw from the distribution:
\begin{enumerate*}
\item Draw $Z_i \sim \text{Discrete}(1, 2; (1-\pi_J, \pi_J))$,
\item Draw $Y_i$ from $\text{N}(y_i \mid \vartheta_i, \kappa^2 a_{ij} + \sigma^2)$ where $Z_i = j$.
\end{enumerate*}
An expression for the variance is given by
\begin{align*}
\Var(Y_i) = \kappa^2 \frac{1 - \pi_J}{\pi_J} + \sigma^2.
\end{align*} 

\section{Data Analysis Examples}
\label{sec:data-examples}
We now present two examples of data analysis with the Mixture Link distribution. The Hiroshima data discussed in Section~\ref{sec:data-hiroshima} features a Binomial outcome. The Arizona Medpar data has a count outcome, and is discussed in Section~\ref{sec:data-azpro}.

For a complete Bayesian specification of Mixture Link Binomial and Mixture Link Poisson, we assume priors
\begin{align*}
\bbeta &\sim \text{N}(\vec{0}, \Vbeta), \\
\vec{\pi} &\sim \text{Dirichlet}(\vec{\gamma}), \\
\kappa &\sim \text{Gamma}(a_\kappa, b_\kappa),
\end{align*}
where the parameterization of Gamma is taken to have $\E(\kappa) = a_\kappa / b_\kappa$. In the absence of a-priori knowledge, a somewhat vague choice of hyperparameters is $\Vbeta = 1000 \vec{I}_{d}$, $\vec{\gamma} = \vec{1}$, and $a_\kappa = 1, b_\kappa = 2$.

To diagnose the fit of models with non-Normal outcomes, we make use of the randomized quantile residuals \citep{DunnSmyth1996}. Interpretation of quantile residuals is similar to the routine residual analysis from ordinary least squares regression. Quantile residuals from an adequate model fit appear to behave as an independent sample from the standard Normal distribution. For $y_i$ drawn independently from a continuous distribution $F(\cdot \mid \btheta)$ with estimate $\hat{\btheta}$, the quantile residual is defined as $r_i = \Phi^{-1}\{ F(y_i \mid \hat{\btheta}) \}$. For $y_i$ drawn independently from a discrete distribution, there is an additional randomization where the residual is defined by $r_i = \Phi^{-1}\{ u_i \}$, using $u_i$ drawn uniformly on the interval between $\lim_{\varepsilon \downarrow 0} F(y_i - \varepsilon \mid \hat{\btheta})$ and $F(y_i \mid \hat{\btheta})$. A Bayesian version of the quantile residual using draws $\btheta^{(1)}, \ldots, \btheta^{(R)}$ from the posterior distribution $f(\btheta \mid \vec{y})$ is $r_i = \frac{1}{R} \sum_{r=1}^R \Phi^{-1}\{ u_i^{(r)} \}$, where each $u_i^{(r)}$ is drawn uniformly on the interval between $\lim_{\varepsilon \downarrow 0} F(y_i - \varepsilon \mid \btheta^{(r)})$ and $F(y_i \mid \btheta^{(r)})$.

We will also evaluate models using prediction intervals computed from the posterior predictive distribution. Recall that the posterior predictive distribution for a new sample $\tilde{\vec{y}}$ given the observed sample $\vec{y}$ is
\begin{align*}
f(\tilde{\vec{y}} \mid \vec{y})
= \int f(\tilde{\vec{y}} \mid \btheta, \vec{y}) f(\btheta \mid \vec{y}) d\nu(\btheta)
= \int f(\tilde{\vec{y}} \mid \btheta) f(\btheta \mid \vec{y}) d\nu(\btheta),
\end{align*}
where $\nu$ denotes an appropriate dominating measure. Then to sample from $f(\tilde{\vec{y}} \mid \vec{y})$:
\begin{enumerate*}
\item Draw $\btheta^{(1)}, \ldots, \btheta^{(R)}$ from posterior $f(\btheta \mid \vec{y})$.
\item Draw $\tilde{\vec{y}}^{(r)}$ from $f(\tilde{\vec{y}} \mid \btheta^{(r)})$ for $r = 1, \ldots, R$.
\end{enumerate*}
Now $(\tilde{\vec{y}}^{(1)}, \ldots, \tilde{\vec{y}}^{(R)})$ is a draw from the posterior predictive distribution. A prediction for the $i$th observation is given by $\frac{1}{R} \sum_{r=1}^R \tilde{y}_i^{(r)}$, and a prediction interval with coverage probability $1-\alpha$ for the $i$th observation is given by the $\alpha/2$ and $1-\alpha/2$ quantiles of $(\tilde{y}_i^{(1)}, \ldots, \tilde{y}_i^{(R)})$.

Label switching is a common issue in Bayesian analysis of finite mixtures \citep{JasraEtAl2005}. For Mixture Link, the $\vec{\pi}$ parameters are susceptible to this problem. Because finite mixtures are invariant to permutation of the labels, the parameters corresponding to labels $\{ 1, \ldots, J \}$ can change during the course of an MCMC computation. Therefore, special care must be taken when summarizing parameters using MCMC draws. In this work, we take the simple approach of reordering the components within each draw $\vec{\pi}^{(r)}$, in ascending order, for each $r = 1, \ldots, R$.

\subsection{Hiroshima Data}
\label{sec:data-hiroshima}

\citet{AwaEtAl1971} and \citet{SofuniEtAl1978} study the effects of radiation exposure on chromosome aberrations in survivors of the atomic bombs that were used in Hiroshima and Nagasaki. We consider a subset of the data, as presented in \citet{MorelNeerchalSAS2012}, on $n = 648$ subjects in Hiroshima. For the $i$th subject, a chromosome analysis has been carried out on $m_i$ circulating lymphocytes to determine the number $y_i$ containing chromosome aberrations. Neutron and gamma radiation exposure (measured in rads) are available as potential covariates. As in \citet{MixLinkJSM2015}, we consider the regression
\begin{align}
\vartheta_i = G(\beta_0 + \beta_1 x_i + \beta_2 x_i^2),
\label{eqn:hiroshima-regression}
\end{align}
where $x_i$ is a normalized sum of neutron and gamma doses, and we take $G$ to be the logistic CDF (as in logistic regression).

We compare six Binomial-type models with \eqref{eqn:hiroshima-regression} as the regression function: Binomial, Random-Clumped Binomial (RCB), Beta-Binomial (BB), and Mixture Link with $J = 2, 3, 4$ mixture components (MixLinkJ2, MixLinkJ3, MixLinkJ4). Because of the complicated manner in which parameters enter the Mixture Link Binomial likelihood, conjugate priors leading to closed-form Gibbs samplers do not appear possible. We considered a simple Random Walk Metropolis-Hastings (RWMH) sampler \citep[Section 7.5]{RobertCasella2010}; however, sampling with RWMH is time consuming because it requires computation of the likelihood to determine whether each proposed jump will be accepted. Recall that, for Mixture Link Binomial, evaluation of the likelihood consists of evaluating $J$ integrals numerically for each of the $n$ observations. Alternatively, Appendix~\ref{sec:appendix-mcmc-binomial} proposes a Metropolis-within-Gibbs (MWG) sampler \citep[Section 10.3]{RobertCasella2010} where $\vec{\psi}_{i}$ are taken as augmented data \citep{TannerWong1987} to avoid the expensive integration.

An RWMH sampler was used to obtain posterior draws under the Binomial, RCB, and BB models, while the MWG sampler from Appendix~\ref{sec:appendix-mcmc-binomial} was used for Mixture Link. For each Mixture Link model, we carried out a preliminary ``pilot'' MCMC, which was used to tune the proposal distribution for a final MCMC run and achieve satisfactory mixing. Mixing was assessed primarily through trace plots and autocorrelation plots of the saved draws. Trace plots for the selected Mixture Link model are shown in Figure~\ref{fig:hiroshima-trace}. For all models, a multivariate Normal proposal distribution was selected by hand to achieve acceptance rates between about 15\% and 30\%. Final MCMC runs for Mixture Link were carried out for 55,000 iterations; the first 5,000 were discarded as a burn-in sample, and 1 of every 50 remaining draws from the chain were saved. For Binomial, BB, and RCB, we used 50,000 iterations overall with the first 5,000 discarded as burn-in and saved 1 of every 50 remaining.

Table~\ref{tab:hiroshima-model-selection} shows the Deviance Information Criterion (DIC) for these models. The three Mixture Link models fit best according to DIC; BB has a smaller DIC than RCB by a large margin, and Binomial gives the worst fit as expected. Table~\ref{tab:hiroshima-est} reports means, standard deviations, 2.5\% quantiles, and 97.5\% quantiles for each parameter from the posterior draws. Generally, signs and magnitudes of the $\bbeta$ estimates agree between models. Standard deviations and credible intervals are a bit larger for BB and MixLink models than RCB and Binomial. Figure~\ref{fig:hiroshoma-rqres} displays quantile residuals for the Binomial, BB, and MixLinkJ2 models. Residuals from BB and MixLinkJ2 are markedly closer to a $\text{N}(0,1)$ sample than Binomial residuals, as can be seen from the Q-Q plots. For all models, there is a systematic pattern in residuals vs.~predicted proportions, which is an indication that the mean is not fully explained by regression function~\eqref{eqn:hiroshima-regression}. Finally, Figure~\ref{fig:hiroshima-predict} plots $x_i$ against observed $y_i / m_i$, along with 95\% prediction intervals for Binomial, BB, and MixLinkJ2. The intervals computed by MixLinkJ2, and to a lesser extent BB, express variability from the observed data into wider prediction intervals.

\begin{table}
\centering
\small
\caption{DIC for Hiroshima models.}
\label{tab:hiroshima-model-selection}
\tt
\begin{tabular}{lr}
\multicolumn{1}{c}{Model} &
\multicolumn{1}{c}{DIC} \\
\hline
Binomial   & 3625.34 \\
RCB        & 3148.05 \\
BB         & 2984.49 \\
MixLinkJ2  & 2876.64 \\
MixLinkJ3  & 2878.01 \\
MixLinkJ4  & 2875.93 \\
\hline
\end{tabular}
\end{table}

\begin{table}
\centering
\small
\caption{Posterior summaries for Hiroshima models.}
\label{tab:hiroshima-est}
\tt
\begin{tabular}{l|rrrr}
\hline
\multicolumn{1}{l|}{Binomial} &
\multicolumn{1}{r}{mean} &
\multicolumn{1}{r}{SD} &
\multicolumn{1}{r}{2.5\%} &
\multicolumn{1}{r}{97.5\%} \\
\hline
intercept & -3.0241 & 0.0241 & -3.0695 & -2.9723 \\
$x$       &  0.9494 & 0.0244 &  0.9014 &  0.9938 \\
$x^2$     & -0.1611 & 0.0080 & -0.1762 & -0.1459 \\
\hline
\hline
\multicolumn{1}{l|}{BB} &
\multicolumn{1}{r}{mean} &
\multicolumn{1}{r}{SD} &
\multicolumn{1}{r}{2.5\%} &
\multicolumn{1}{r}{97.5\%} \\
\hline
intercept & -2.9437 & 0.0461 & -3.0368 & -2.8589 \\
$x$       &  0.8165 & 0.0395 &  0.7346 &  0.8950 \\
$x^2$     & -0.1416 & 0.0139 & -0.1681 & -0.1146 \\
$\rho$    &  0.1666 & 0.0079 &  0.1515 &  0.1823 \\
\hline
\hline
\multicolumn{1}{l|}{RCB} &
\multicolumn{1}{r}{mean} &
\multicolumn{1}{r}{SD} &
\multicolumn{1}{r}{2.5\%} &
\multicolumn{1}{r}{97.5\%} \\
\hline
intercept & -2.9761 & 0.0360 & -3.0449 & -2.9051 \\
$x$       &  0.8859 & 0.0298 &  0.8296 &  0.9430 \\
$x^2$     & -0.1817 & 0.0121 & -0.2052 & -0.1578 \\
$\rho$    &  0.1526 & 0.0081 &  0.1366 &  0.1678 \\
\hline
\hline
\multicolumn{1}{l|}{MixLinkJ2} &
\multicolumn{1}{r}{mean} &
\multicolumn{1}{r}{SD} &
\multicolumn{1}{r}{2.5\%} &
\multicolumn{1}{r}{97.5\%} \\
\hline
intercept & -3.0030 & 0.0440 & -3.0857 & -2.9110 \\
$x$       &  0.9989 & 0.0426 &  0.9155 &  1.0880 \\
$x^2$     & -0.1771 & 0.0167 & -0.2114 & -0.1450 \\
$\pi_1$   &  0.3336 & 0.0178 &  0.3004 &  0.3687 \\
$\pi_2$   &  0.6664 & 0.0178 &  0.6313 &  0.6996 \\
$\kappa$  &  1.6200 & 0.2489 &  1.2154 &  2.1959 \\
\hline
\end{tabular}
\end{table}

\begin{figure}
\begin{subfigure}[b]{0.32\textwidth}
  \centering
  \includegraphics[width=1.0\textwidth]{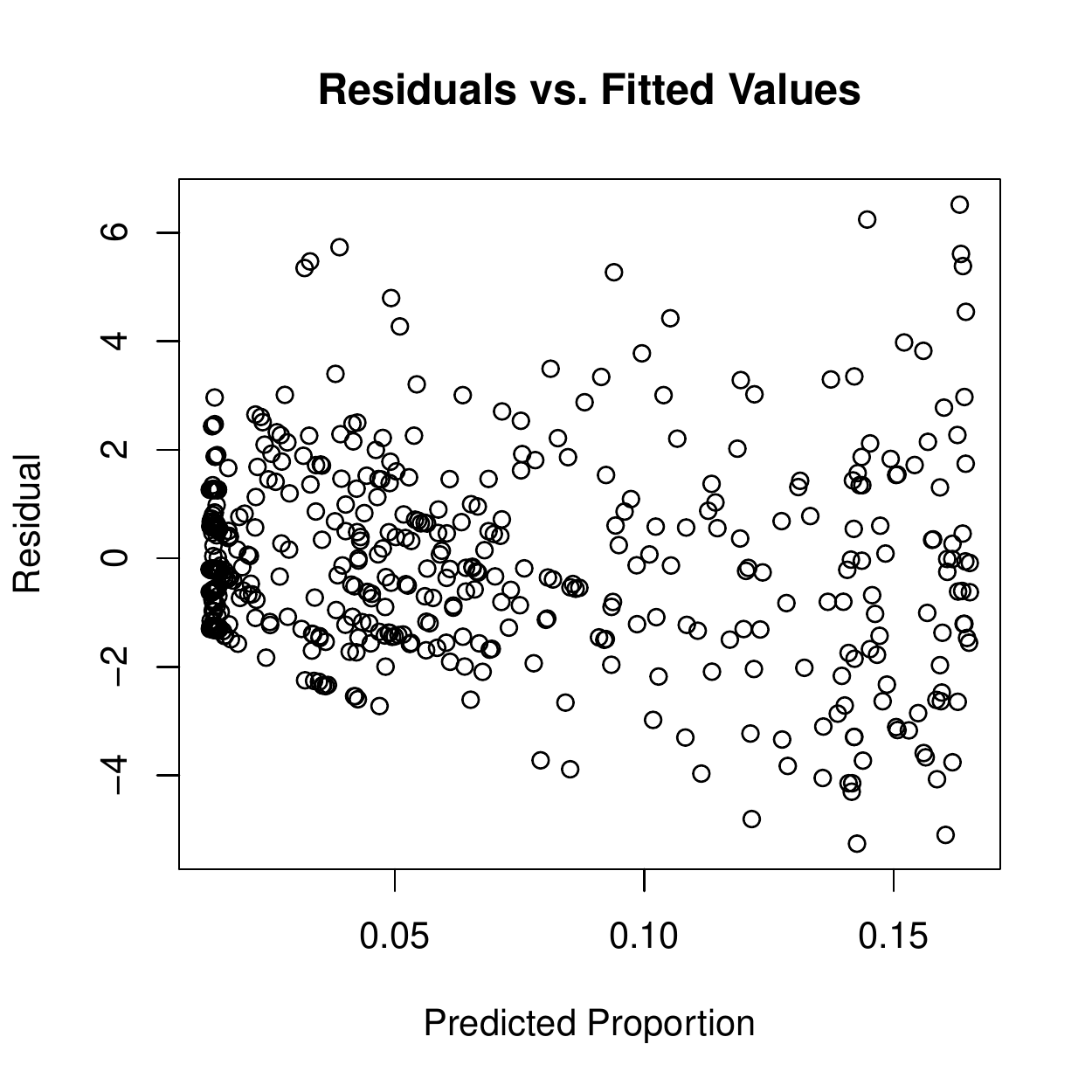}
  \caption{Binomial}
  \label{fig:pois-rqres-vsfit}
\end{subfigure}
\begin{subfigure}[b]{0.32\textwidth}
  \centering
  \includegraphics[width=1.0\textwidth]{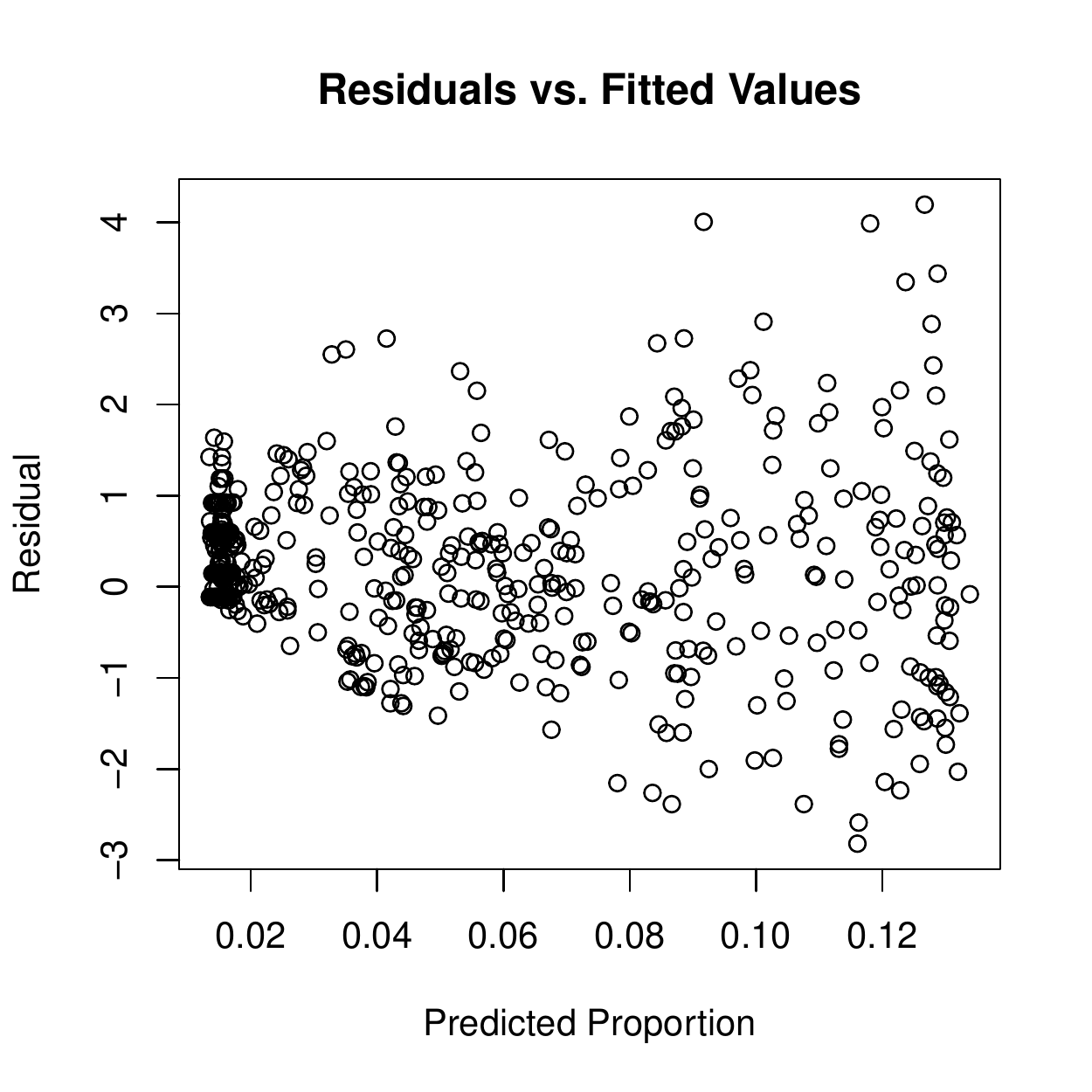}
  \caption{BB}
  \label{fig:negbin-rqres-vsfit}
\end{subfigure}
\begin{subfigure}[b]{0.32\textwidth}
  \centering
  \includegraphics[width=1.0\textwidth]{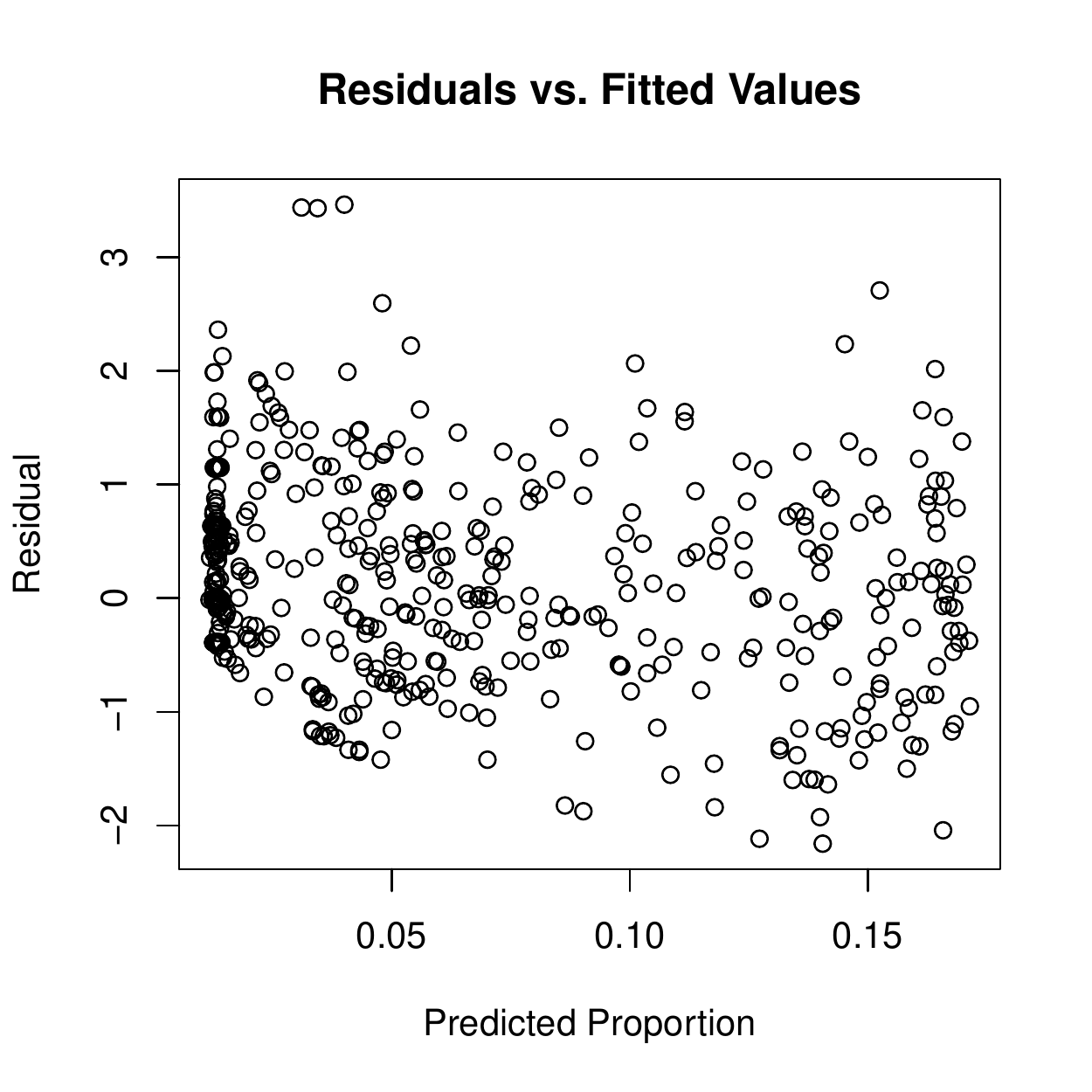}
  \caption{MixLinkJ2}
  \label{fig:mixlinkJ8-rqres-vsfit}
\end{subfigure}

\begin{subfigure}[b]{0.32\textwidth}
  \centering
  \includegraphics[width=1.0\textwidth]{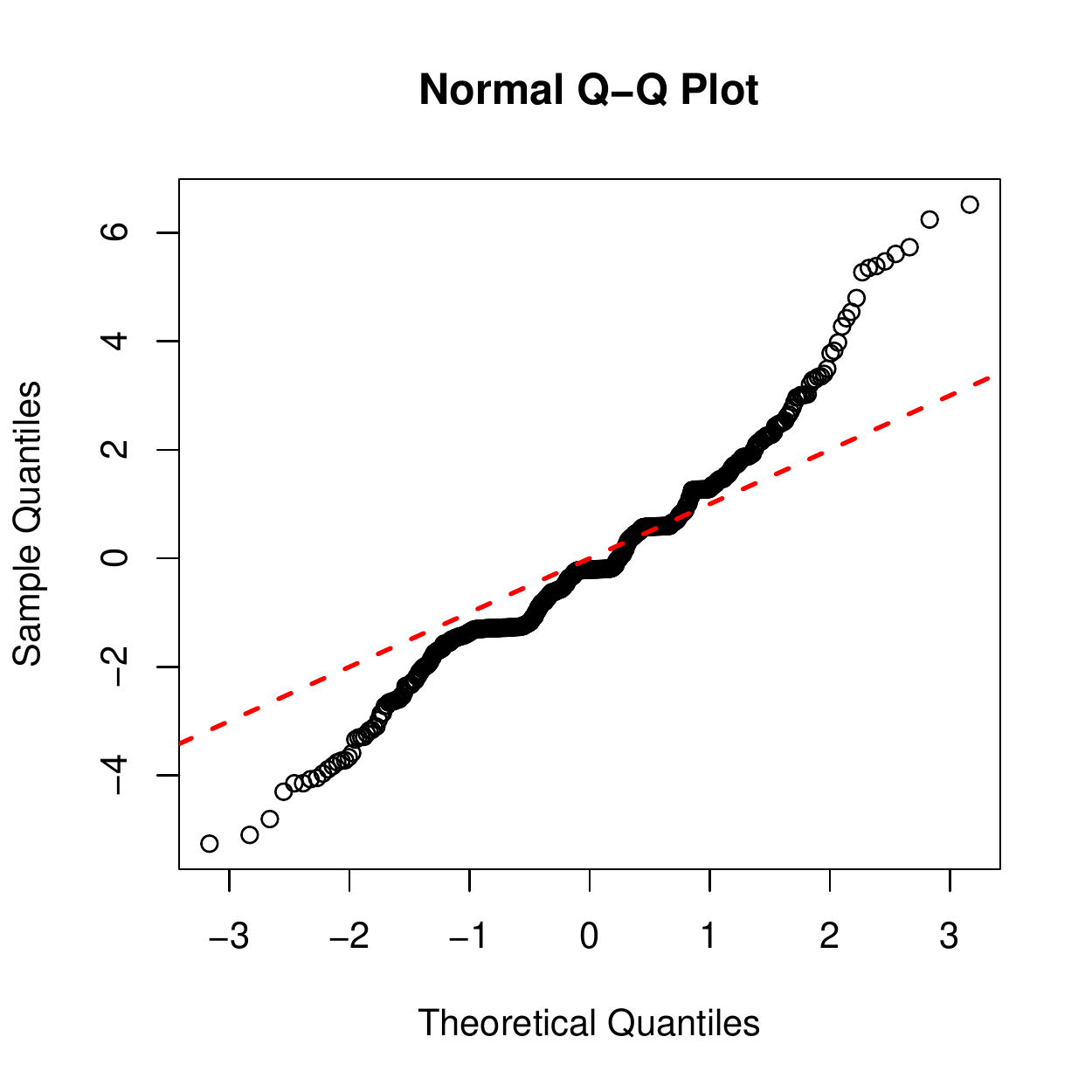}
  \caption{Binomial}
  \label{fig:binom-rqres-qq}
\end{subfigure}
\begin{subfigure}[b]{0.32\textwidth}
  \centering
  \includegraphics[width=1.0\textwidth]{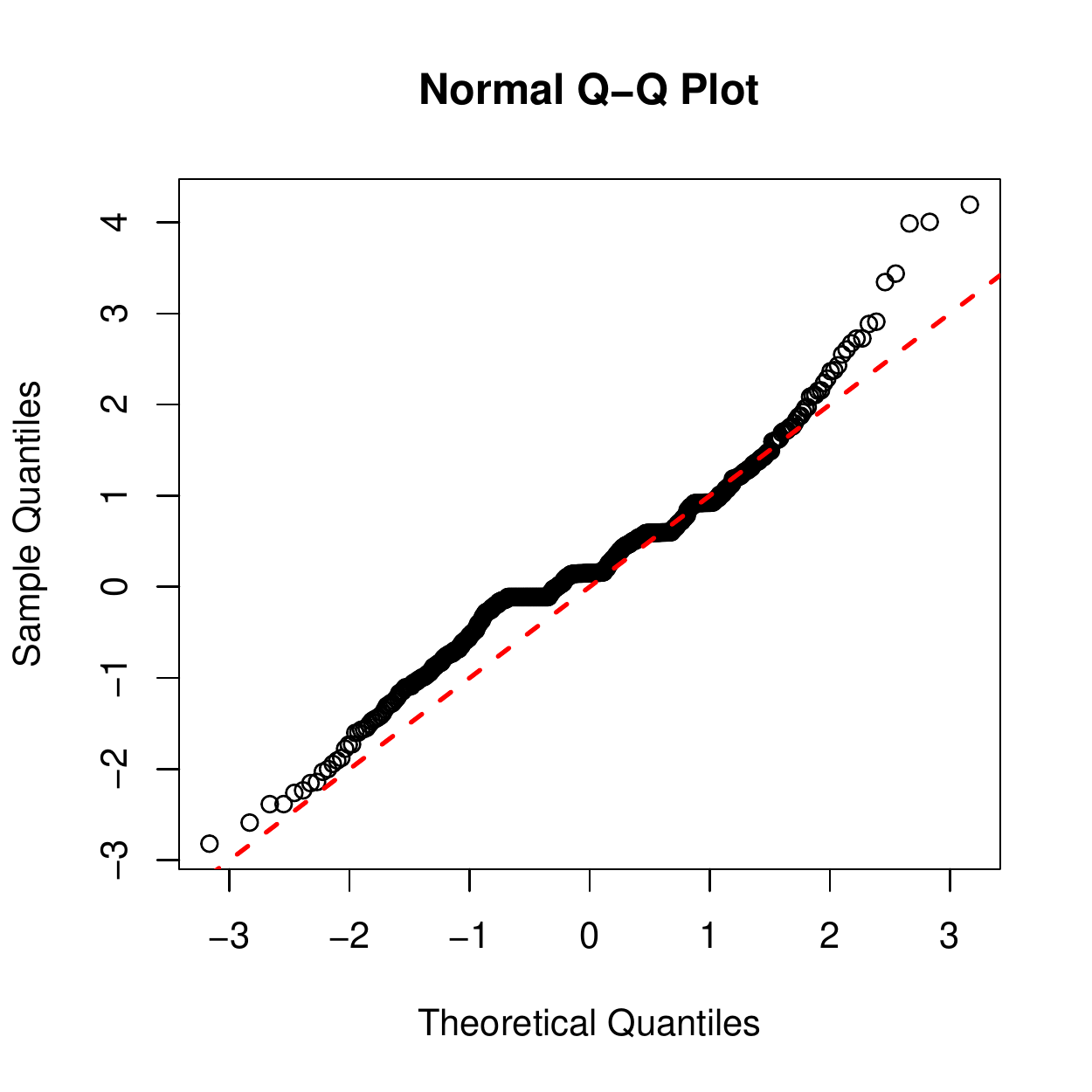}
  \caption{BB}
  \label{fig:bb-rqres-qq}
\end{subfigure}
\begin{subfigure}[b]{0.32\textwidth}
  \centering
  \includegraphics[width=1.0\textwidth]{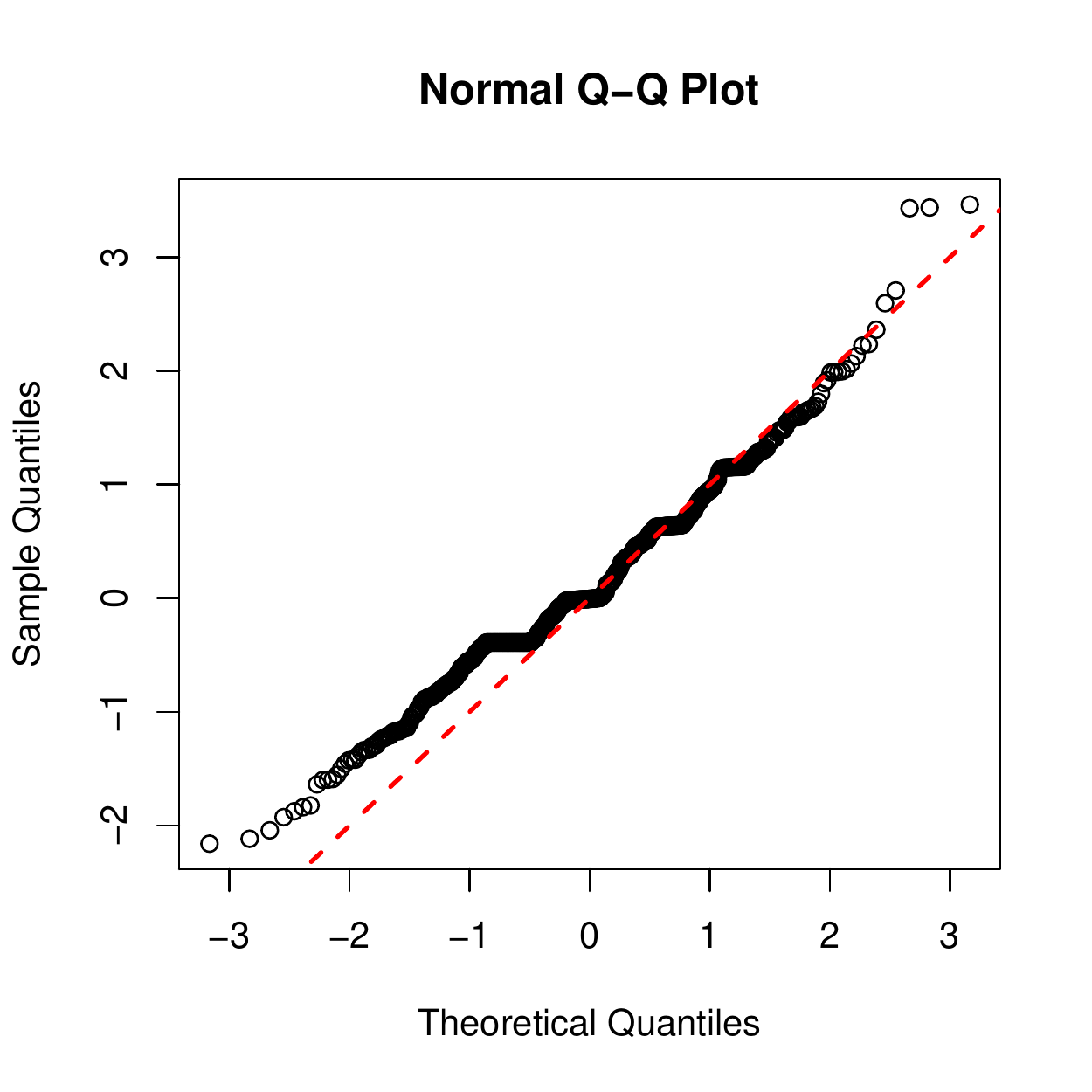}
  \caption{MixLinkJ2}
  \label{fig:mixlinkJ2-rqres-qq}
\end{subfigure}
\caption{Quantile residuals for Hiroshima models.}
\label{fig:hiroshoma-rqres}
\end{figure}

\begin{figure}
\begin{subfigure}[b]{0.32\textwidth}
  \centering
  \includegraphics[width=1.0\textwidth]{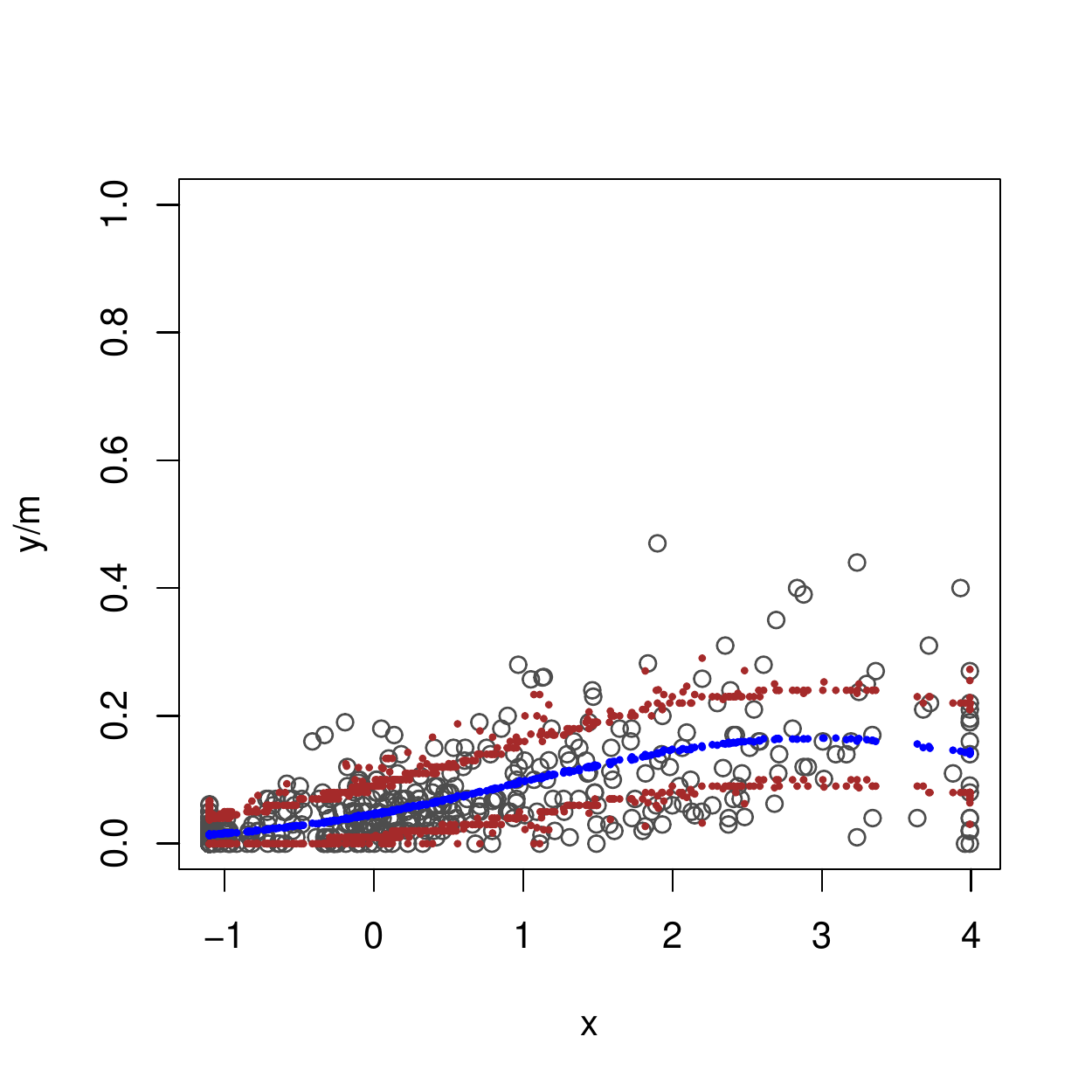}
  \caption{Binomial}
  \label{fig:hiroshima-binomial-predict}
\end{subfigure}
\begin{subfigure}[b]{0.32\textwidth}
  \centering
  \includegraphics[width=1.0\textwidth]{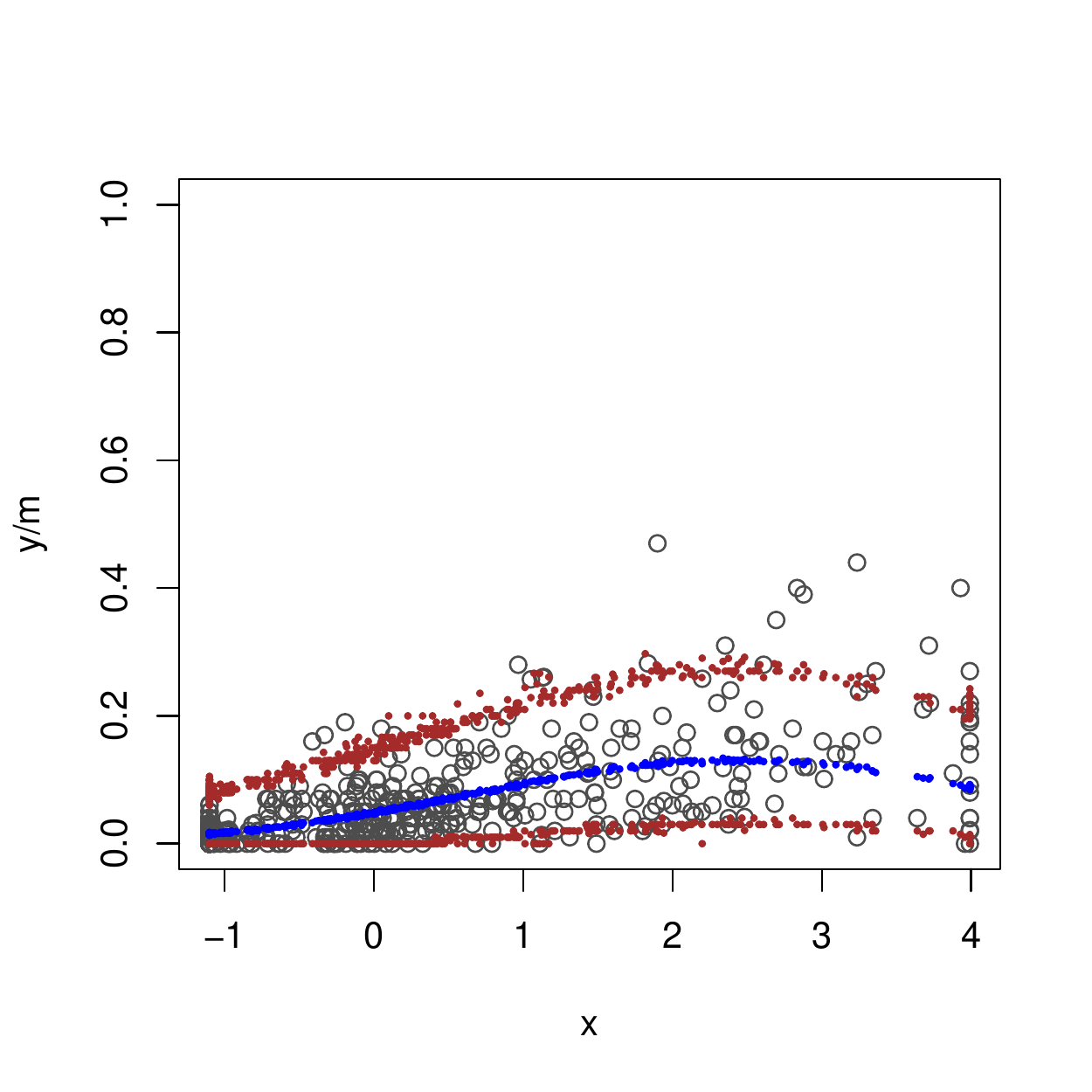}
  \caption{BB}
  \label{fig:hiroshima-bb-predict}
\end{subfigure}
\begin{subfigure}[b]{0.32\textwidth}
  \centering
  \includegraphics[width=1.0\textwidth]{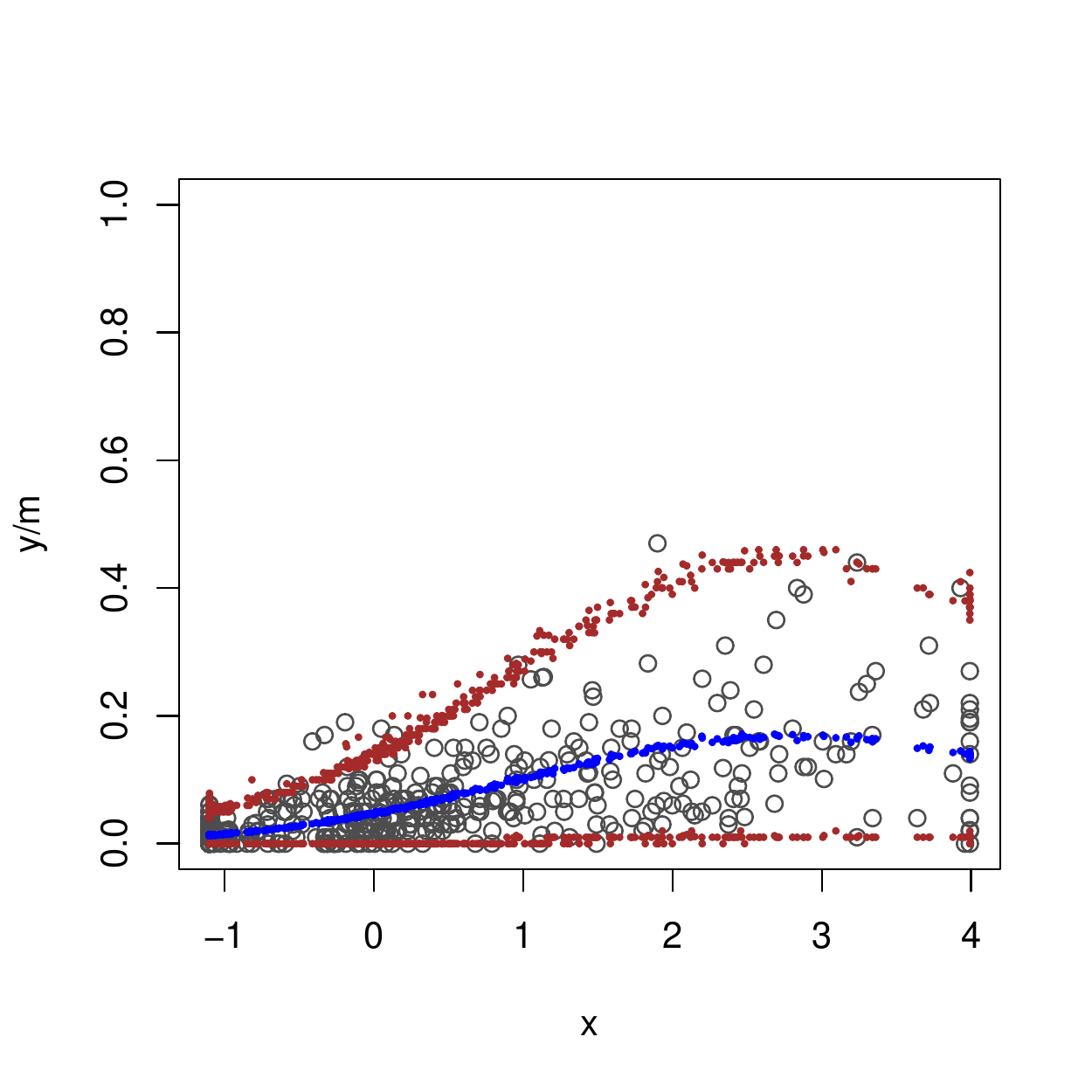}
  \caption{MixLinkJ2}
  \label{fig:hiroshima-mixlinkJ2-predict}
\end{subfigure}
\caption{Observed proportions $y_i / m_i$ vs.~$x_i$ for Hiroshima data are plotted as open circles. Smaller solid dots represent 95\% prediction intervals (upper and lower curves) and predictions (middle curve) from the respective model.}
\label{fig:hiroshima-predict}
\end{figure}

\begin{figure}
\begin{subfigure}[t]{0.32\textwidth}
  \centering
  \includegraphics[width=1.0\textwidth]{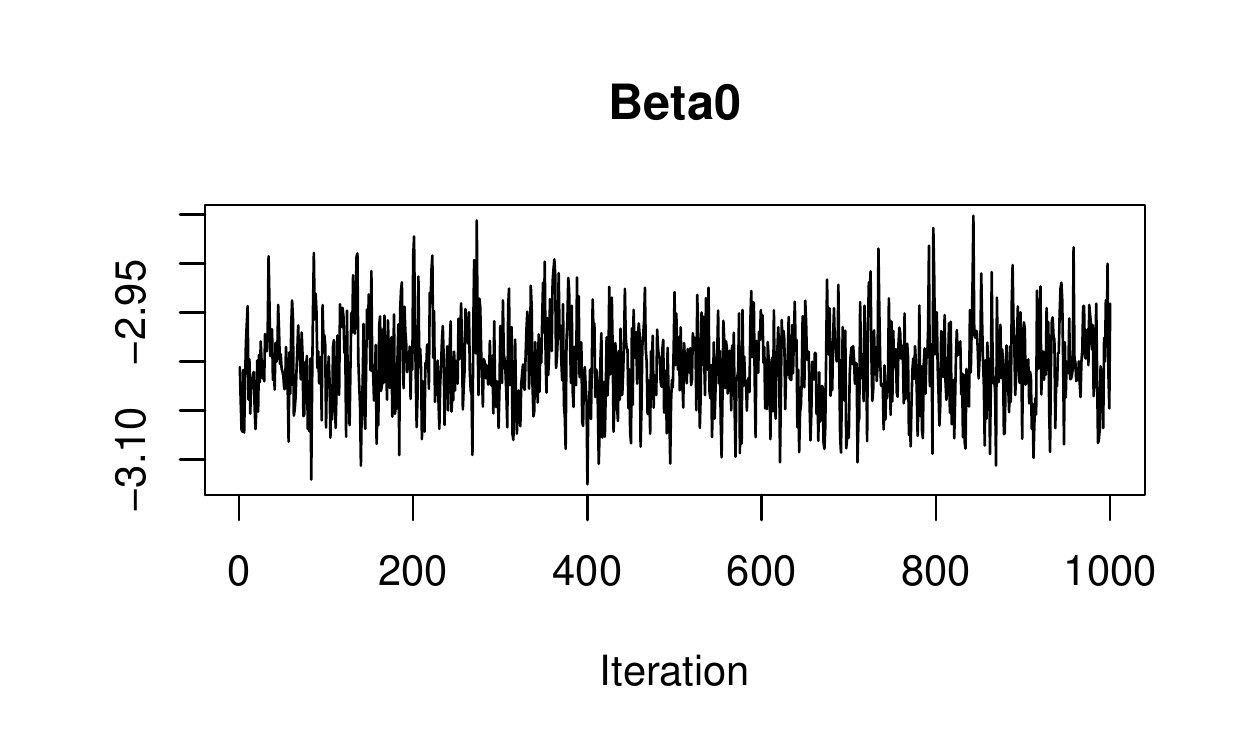}
  \label{fig:binom-trace-Beta0}
\end{subfigure}
\begin{subfigure}[t]{0.32\textwidth}
  \centering
  \includegraphics[width=1.0\textwidth]{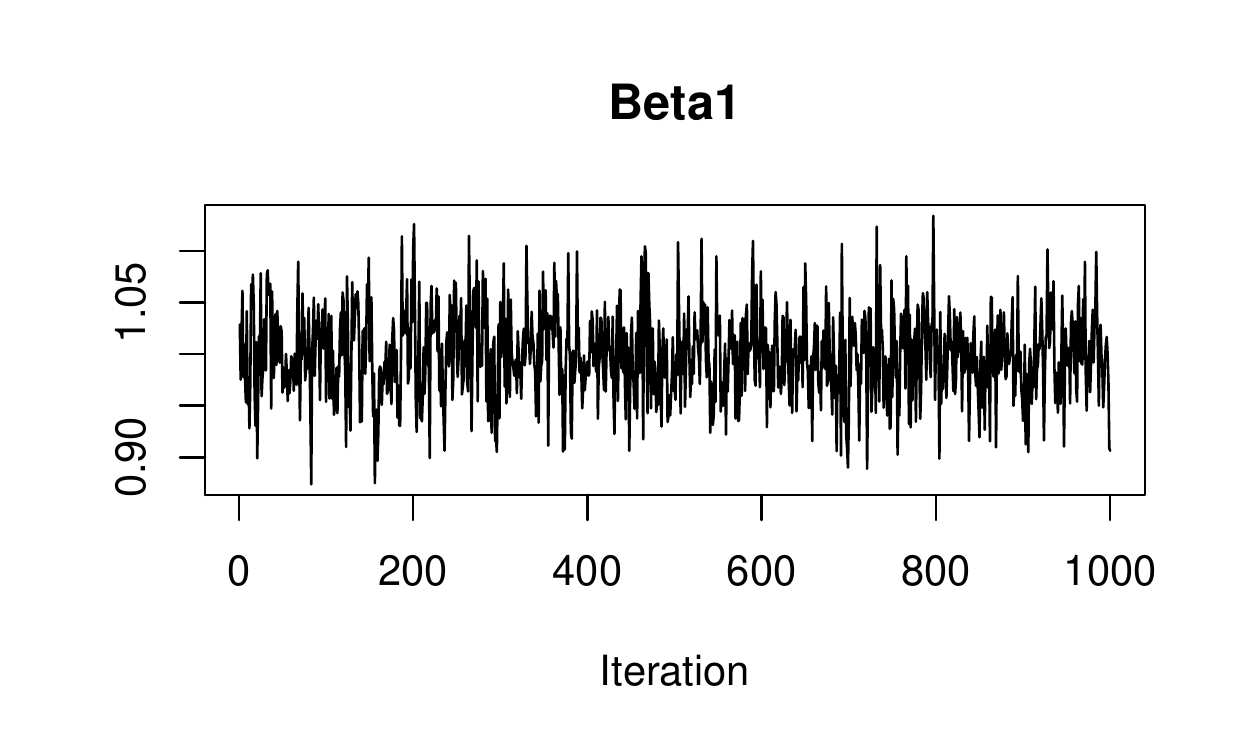}
  \label{fig:mixlinkJ2-trace-Beta1}
\end{subfigure}
\begin{subfigure}[t]{0.32\textwidth}
  \centering
  \includegraphics[width=1.0\textwidth]{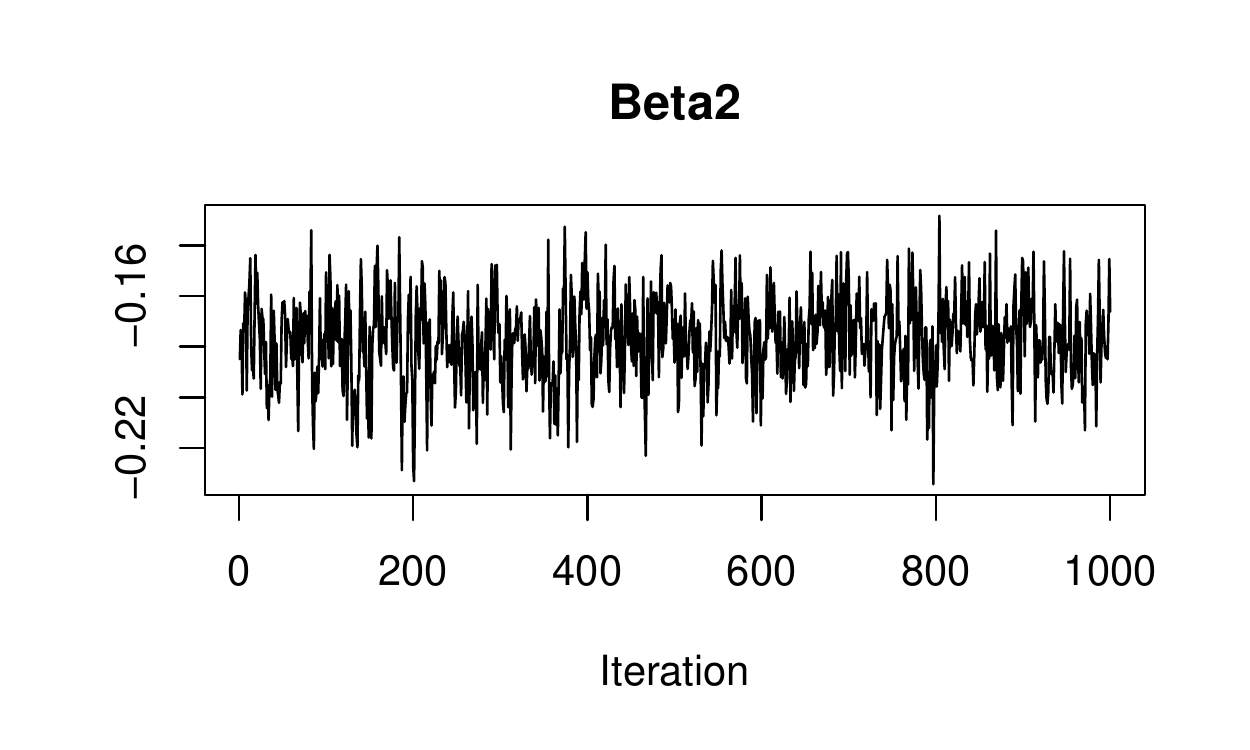}
  \label{fig:mixlinkJ2-trace-Beta2}
\end{subfigure}

\begin{subfigure}[t]{0.32\textwidth}
  \centering
  \includegraphics[width=1.0\textwidth]{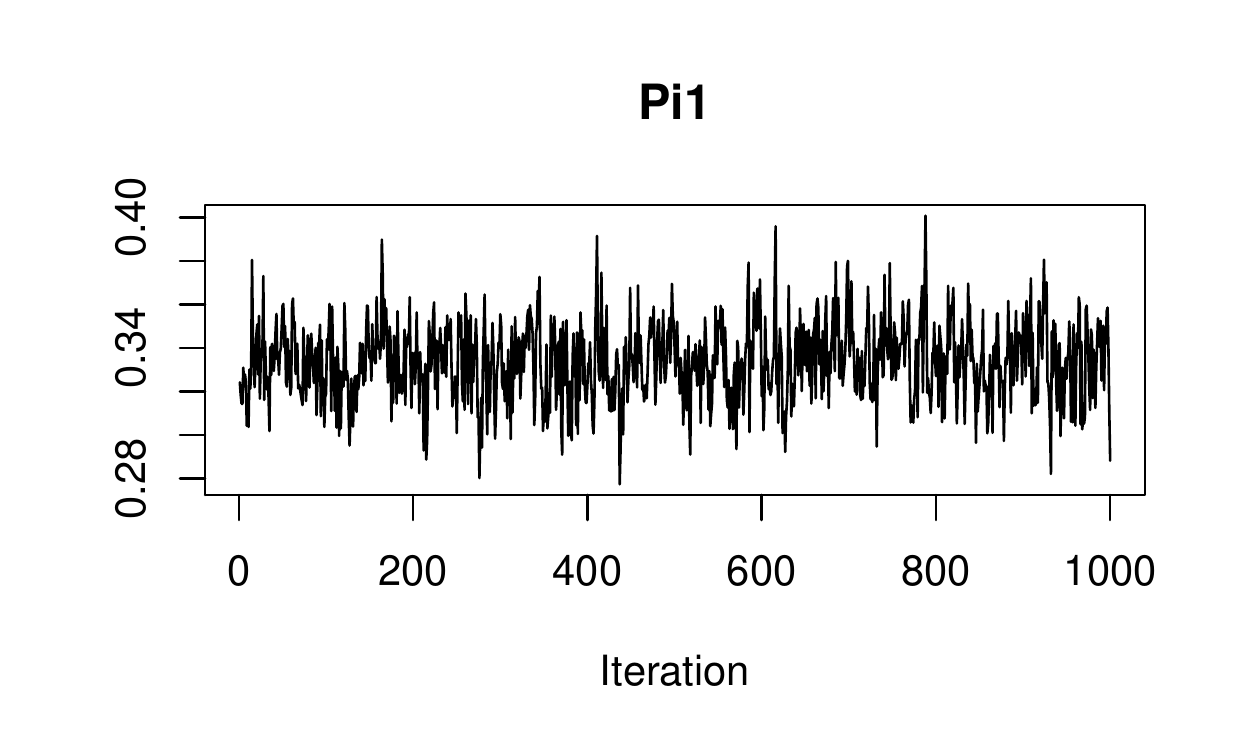}
  \label{fig:mixlinkJ2-trace-Pi1}
\end{subfigure}
\begin{subfigure}[t]{0.32\textwidth}
  \centering
  \includegraphics[width=1.0\textwidth]{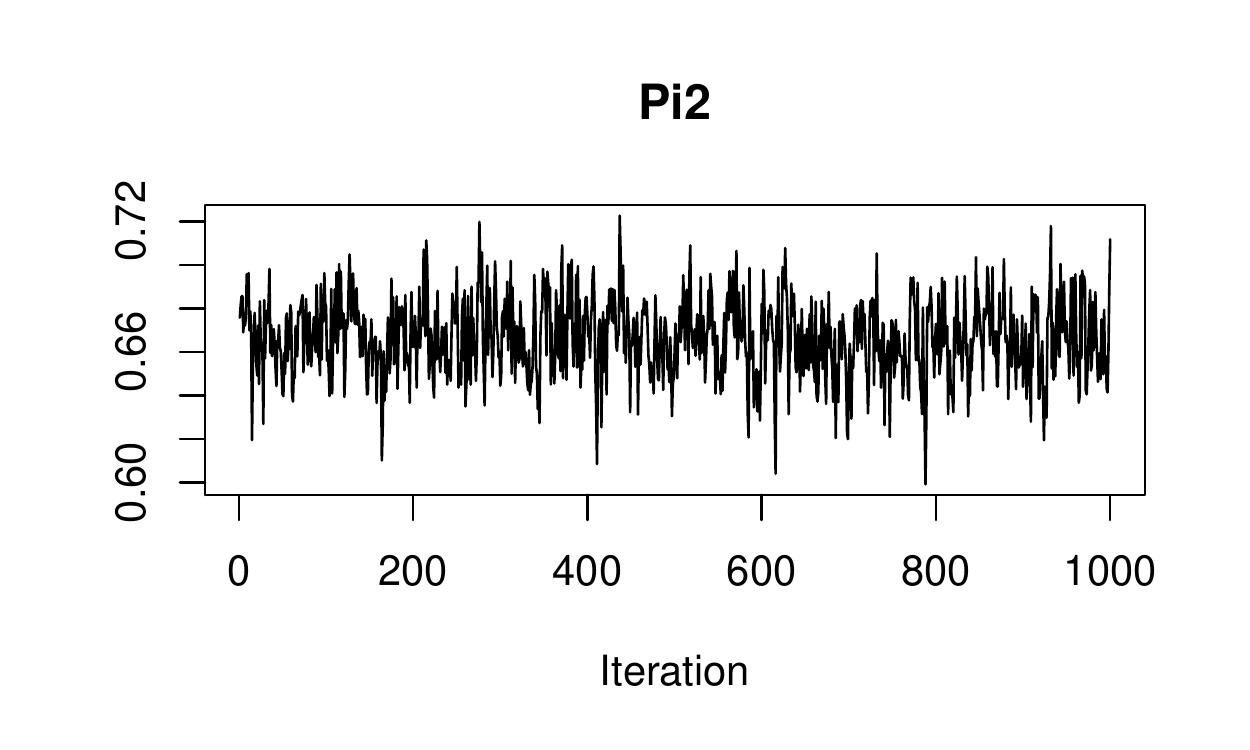}
  \label{fig:mixlinkJ2-trace-Pi2}
\end{subfigure}
\begin{subfigure}[t]{0.32\textwidth}
  \centering
  \includegraphics[width=1.0\textwidth]{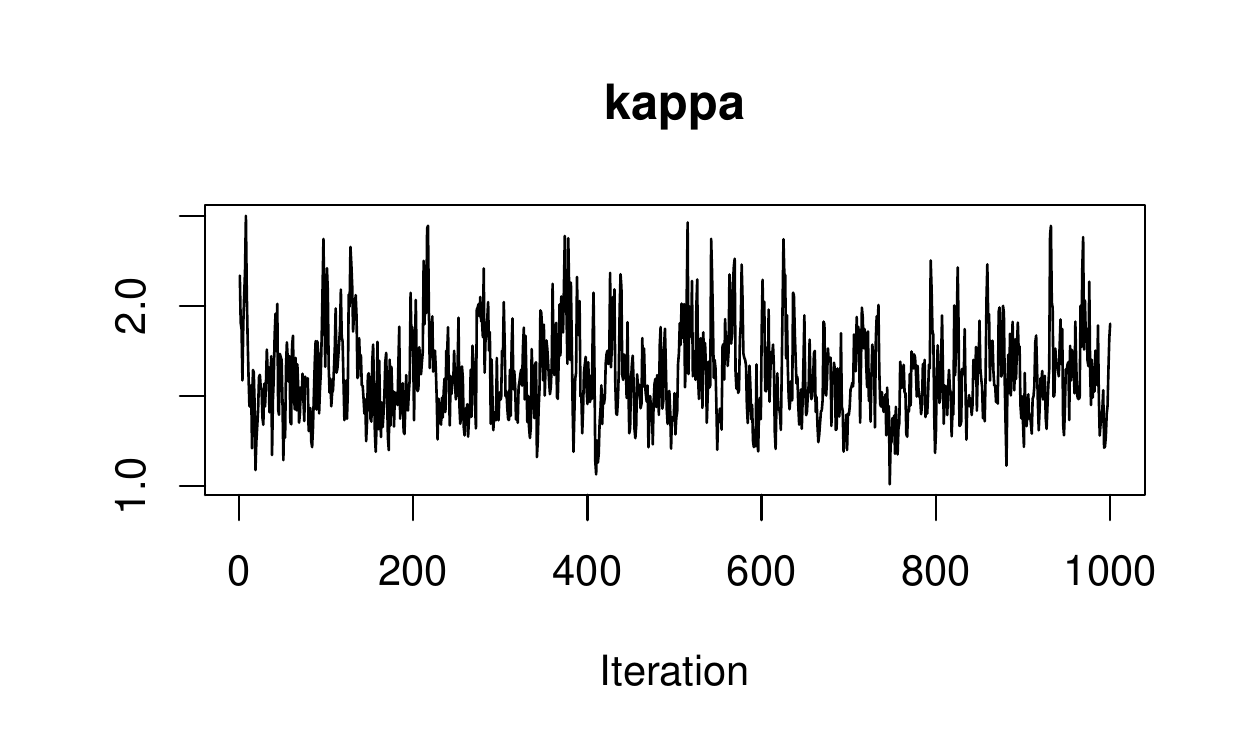}
  \label{fig:mixlinkJ2-trace-kappa}
\end{subfigure}
\caption{Trace plots for MixLinkJ2 fit to Hiroshima data.}
\label{fig:hiroshima-trace}
\end{figure}

\subsection{Arizona Medpar Data}
\label{sec:data-azpro}
The \code{azpro} data in the \code{COUNT} \code{R} package are taken from Arizona cardiovascular patient files in 1991. It contains 3,589 observations on subjects from 17 hospitals. The outcome of interest, length of hospital stay $y$, is a count. Several indicator variables are available as covariates: \code{procedure} takes values 1 for Coronary Artery Bypass Graft and 0 for Percutaneous Transluminal Coronary Angioplasty, \code{sex} is 1 for male and 0 for female, type of admission \code{admit} is 1 if emergency and 0 if elective, \code{age75} is 1 if patient's age is at least 75 and 0 otherwise, and \code{hospital} is a code to identify hospital. For this example, we consider only the 376 observations with \code{hospital = 6.5}, and take the regression function to be
\begin{align*}
\E(y_i) = \exp\{ \beta_0 + \beta_1 \cdot \text{procedure}_i + \beta_2 \cdot \text{sex}_i + \beta_3 \cdot \text{age75}_i \}.
\end{align*}
We compare count regression models based on Poisson, NegBin, and Mixture Link with $J = 2, \ldots, 8$ mixture components. All models used a simple RWMH sampler to obtain draws from the posterior. For Mixture Link models, proposals for $\btheta$ were drawn in a partitioned manner to improve mixing of the chain: a proposal for either $\bbeta$, $\vec{\pi}$, or $\kappa$ was drawn at a time, keeping other parameters fixed, and either accepted or rejected. In some cases where $J > 2$, the components of $\vec{\pi}$ were also drawn individually to further improve mixing. We assessed mixing primarily through trace plots and autocorrelation plots of the saved draws. For all models, the multivariate Normal proposal distribution was tuned by hand to achieve acceptance rates between about 15\% and 30\%. MCMC was carried out for 55,000 iterations; the first 5,000 were discarded as a burn-in sample, and 1 of every 20 remaining draws from the chain were saved.

Table~\ref{tab:azpro-model-selection} compares DIC across all fitted models. Because Poisson is a special case of NegBin, it is not surprising that the DIC of NegBin indicates a superior fit. It is interesting that the DIC of MixLink appears to improve gradually as the number of mixture components $J$ are increased. Taking $J > 2$ required additional hand-tuning of the sampler for some cases to yield acceptable diagnostics. Initial attempts to fit MixLink with $J = 9$ resulted in poor diagnostics, so these results are not shown. Figure~\ref{fig:azpro-trace} displays the trace plots for MixLinkJ8, which was selected among the seven Mixture Link models for further analysis.

We proceed by comparing the Poisson, NegBin, and MixLinkJ8 models. Table~\ref{tab:azpro-est} reports means, standard deviations, 2.5\% quantiles, and 97.5\% quantiles of each parameter computed from the posterior draws. Generally, the signs and magnitudes of the means of $\bbeta$ are similar. The standard deviations of $\bbeta$ are smallest for Poisson and largest for NegBin. The credible intervals based on the quantiles are correspondingly narrowest for Poisson and widest for NegBin. For MixLinkJ8, $\kappa$ takes on rather large values which effectively reduces $\Var(Y_i)$ over $i = 1, \ldots, n$.

Figure~\ref{fig:azpro-rqres} plots quantile residuals against predictions and also displays Q-Q plots to assess Normality. The predictions have been computed by taking means of draws from the posterior predictive distribution. Note that there are only 16 distinct values of the covariate $\vec{x}$ and observations with a common covariate are likely to obtain similar predictions. The residuals produced by MixLinkJ8 exhibit the best behavior of the three models, with the least departure from standard Normality. There is still a pattern where smaller predictions tend to have more variable residuals, which indicates that further refinement of the regression function may be needed.

Finally, Figure~\ref{fig:azpro-pi-boxplot} displays boxplots of $y$ for each of the 16 possible covariate values, with 95\% prediction intervals from both the Poisson and MixLinkJ8 models. These intervals were computed from 2.5\% and 97.5\% quantiles of the posterior predictive distribution. Intervals for the NegBin model are not shown because the upper limits are far above the range of the plots in all cases. In some cases, the Poisson intervals appear to be too narrow to capture the observed variability of the data, while MixLinkJ8 widens the intervals to reflect the variability.

\begin{table}
\centering
\small
\caption{DIC for Arizona Medpar models.}
\label{tab:azpro-model-selection}
\tt
\begin{tabular}{lr}
\multicolumn{1}{c}{Model} &
\multicolumn{1}{c}{DIC} \\
\hline
Poisson    & 2392.62 \\
NegBin     & 2125.11 \\
MixLinkJ2  & 2095.07 \\
MixLinkJ3  & 2096.85 \\
MixLinkJ4  & 2065.76 \\
MixLinkJ5  & 2061.04 \\
MixLinkJ6  & 2062.23 \\
MixLinkJ7  & 2059.73 \\
MixLinkJ8  & 2059.39 \\
\hline
\end{tabular}
\end{table}

\begin{table}
\centering
\small
\caption{Posterior summaries for Arizona Medpar models.}
\label{tab:azpro-est}
\tt
\begin{tabular}{l|rrrr}
\hline
\multicolumn{1}{l|}{Poisson} &
\multicolumn{1}{r}{mean} &
\multicolumn{1}{r}{SD} &
\multicolumn{1}{r}{2.5\%} &
\multicolumn{1}{r}{97.5\%} \\
\hline
intercept  &  1.4947 & 0.0541 &  1.3885 & 1.6012 \\
procedure  &  0.8447 & 0.0369 &  0.7713 & 0.9161 \\
sex        & -0.0292 & 0.0370 & -0.1024 & 0.0429 \\
admit      &  0.2813 & 0.0469 &  0.1896 & 0.3749 \\
age75      &  0.0366 & 0.0388 & -0.0402 & 0.1092 \\
\hline
\hline
\multicolumn{1}{l|}{NegBin} &
\multicolumn{1}{r}{mean} &
\multicolumn{1}{r}{SD} &
\multicolumn{1}{r}{2.5\%} &
\multicolumn{1}{r}{97.5\%} \\
\hline
intercept  &  1.4972 & 0.0861 &  1.3323 & 1.6698 \\
procedure  &  0.8492 & 0.0593 &  0.7333 & 0.9634 \\
sex        & -0.0422 & 0.0626 & -0.1651 & 0.0781 \\
admit      &  0.2889 & 0.0750 &  0.1391 & 0.4366 \\
age75      &  0.0335 & 0.0649 & -0.0960 & 0.1628 \\
$\kappa$   &  0.1938 & 0.0229 &  0.1519 & 0.2416 \\
\hline
\hline
\multicolumn{1}{l|}{MixLinkJ8} &
\multicolumn{1}{r}{mean} &
\multicolumn{1}{r}{SD} &
\multicolumn{1}{r}{2.5\%} &
\multicolumn{1}{r}{97.5\%} \\
\hline
intercept  &  1.5246 & 0.0759 &  1.3751 & 1.6759 \\
procedure  &  0.9451 & 0.0507 &  0.8452 & 1.0470 \\
sex        & -0.0974 & 0.0526 & -0.2013 & 0.0035 \\
admit      &  0.2578 & 0.0627 &  0.1390 & 0.3858 \\
age75      &  0.0849 & 0.0548 & -0.0266 & 0.1891 \\
$\pi_1$    &  0.0393 & 0.0055 &  0.0280 & 0.0495 \\
$\pi_2$    &  0.0631 & 0.0113 &  0.0458 & 0.0931 \\
$\pi_3$    &  0.1145 & 0.0158 &  0.0775 & 0.1376 \\
$\pi_4$    &  0.1364 & 0.0085 &  0.1181 & 0.1512 \\
$\pi_5$    &  0.1472 & 0.0069 &  0.1338 & 0.1609 \\
$\pi_6$    &  0.1562 & 0.0071 &  0.1431 & 0.1707 \\
$\pi_7$    &  0.1654 & 0.0081 &  0.1515 & 0.1828 \\
$\pi_8$    &  0.1779 & 0.0103 &  0.1601 & 0.2008 \\
$\kappa$   & 17.0029 & 3.5466 & 11.0783 & 24.6940 \\
\hline
\end{tabular}
\end{table}

\begin{figure}
\begin{subfigure}[b]{0.32\textwidth}
  \centering
  \includegraphics[width=1.0\textwidth]{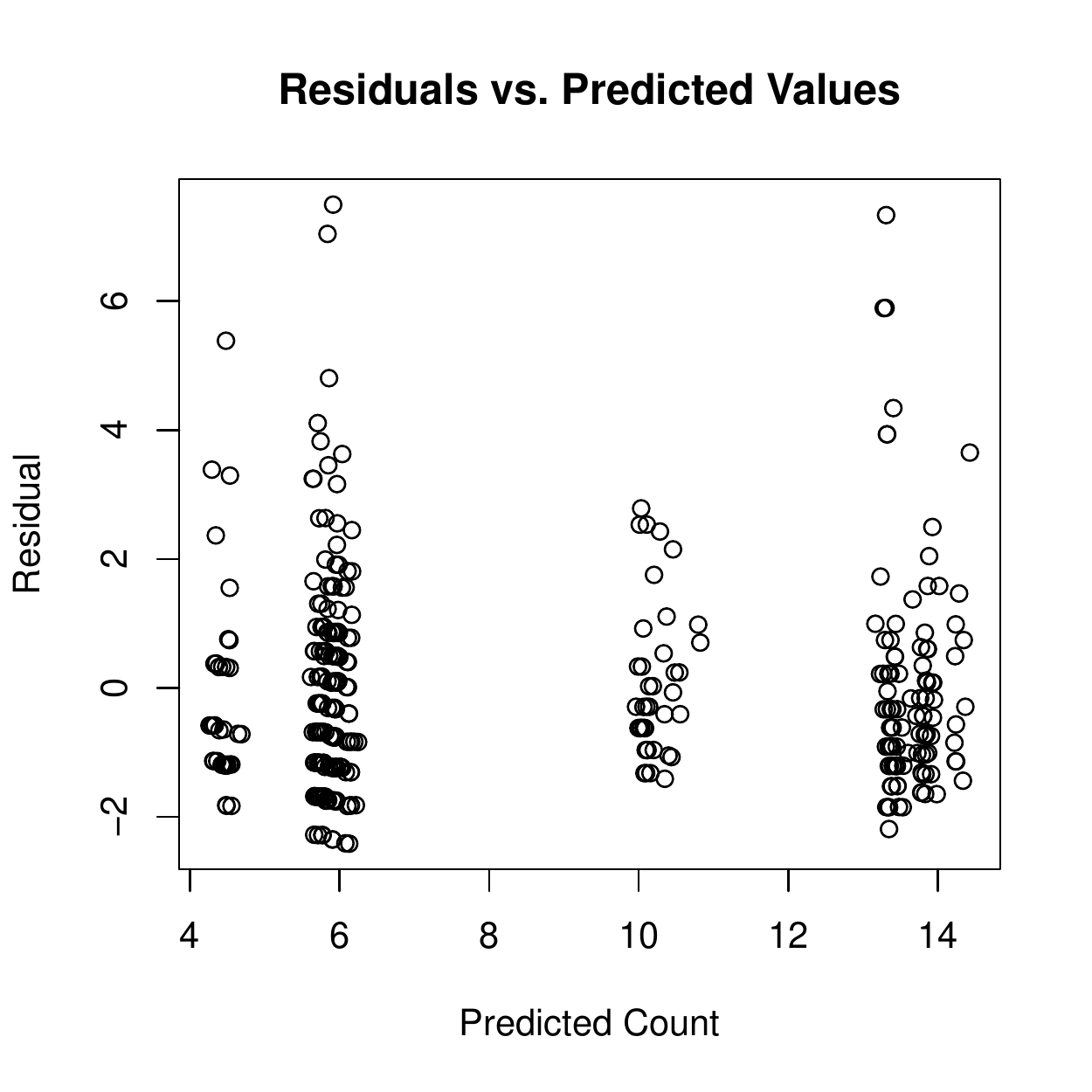}
  \caption{Poisson}
  \label{fig:azpro-pois-rqres-vsfit}
\end{subfigure}
\begin{subfigure}[b]{0.32\textwidth}
  \centering
  \includegraphics[width=1.0\textwidth]{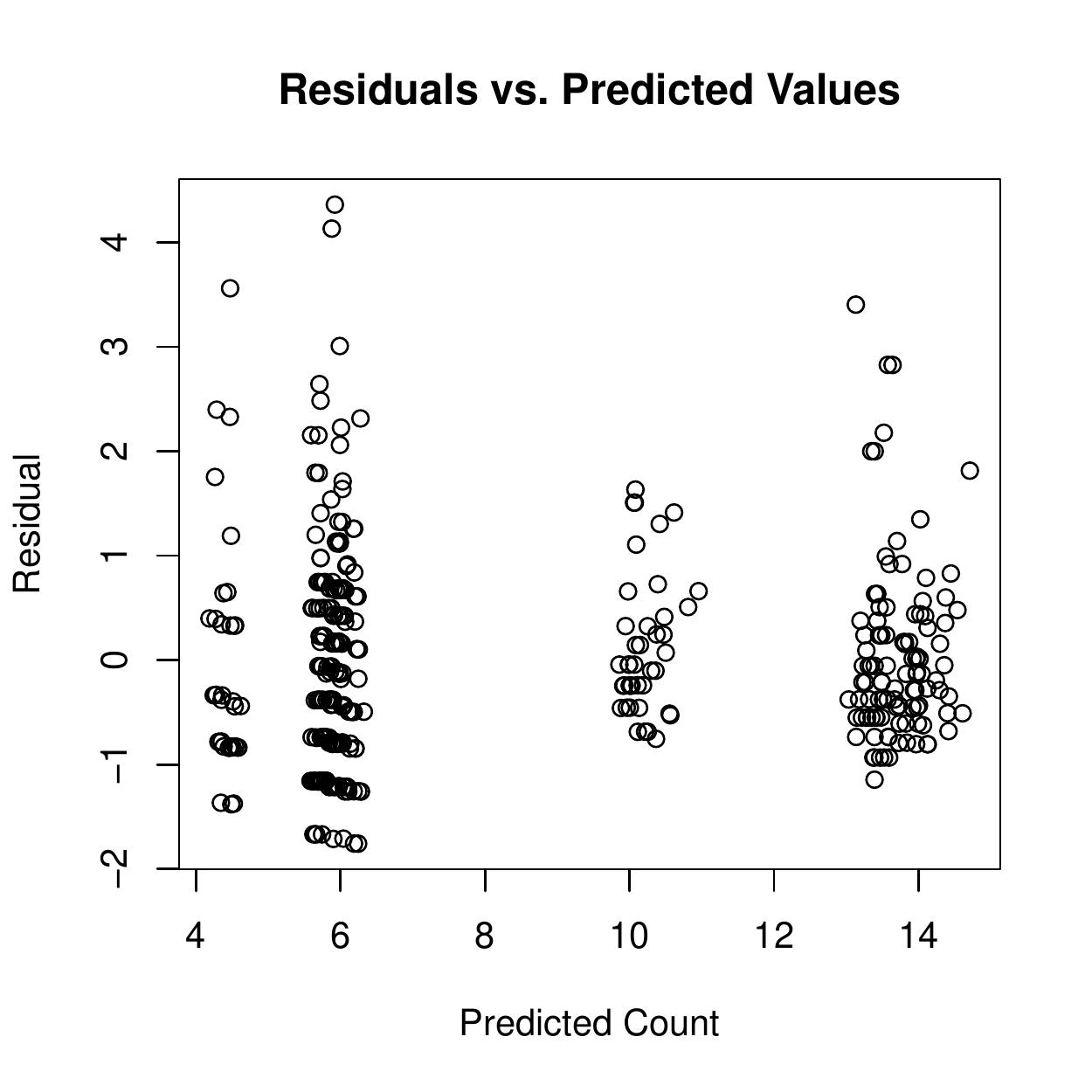}
  \caption{NegBin}
  \label{fig:azpro-negbin-rqres-vsfit}
\end{subfigure}
\begin{subfigure}[b]{0.32\textwidth}
  \centering
  \includegraphics[width=1.0\textwidth]{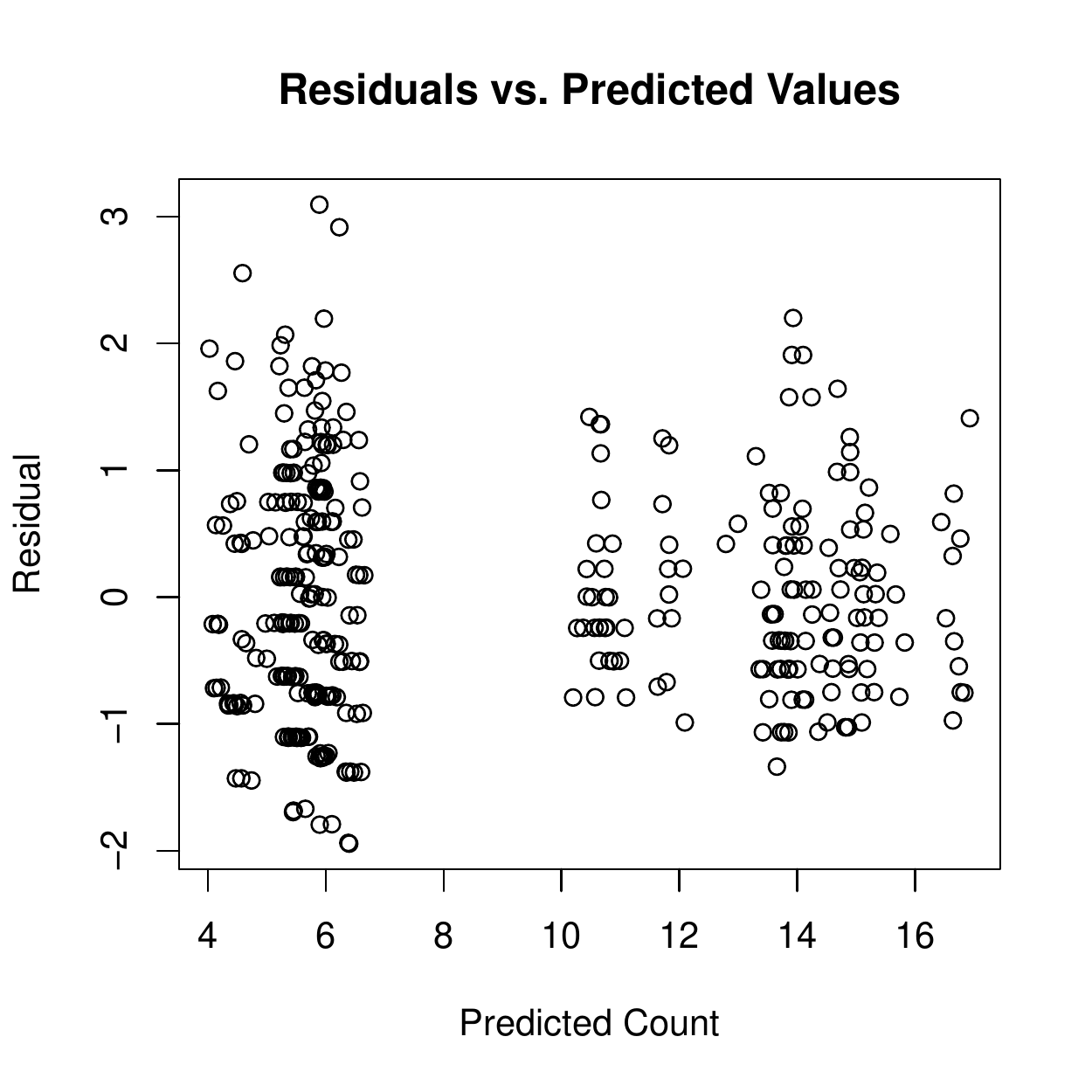}
  \caption{MixLinkJ8}
  \label{fig:azpro-mixlinkJ8-rqres-vsfit}
\end{subfigure}

\begin{subfigure}[b]{0.32\textwidth}
  \centering
  \includegraphics[width=1.0\textwidth]{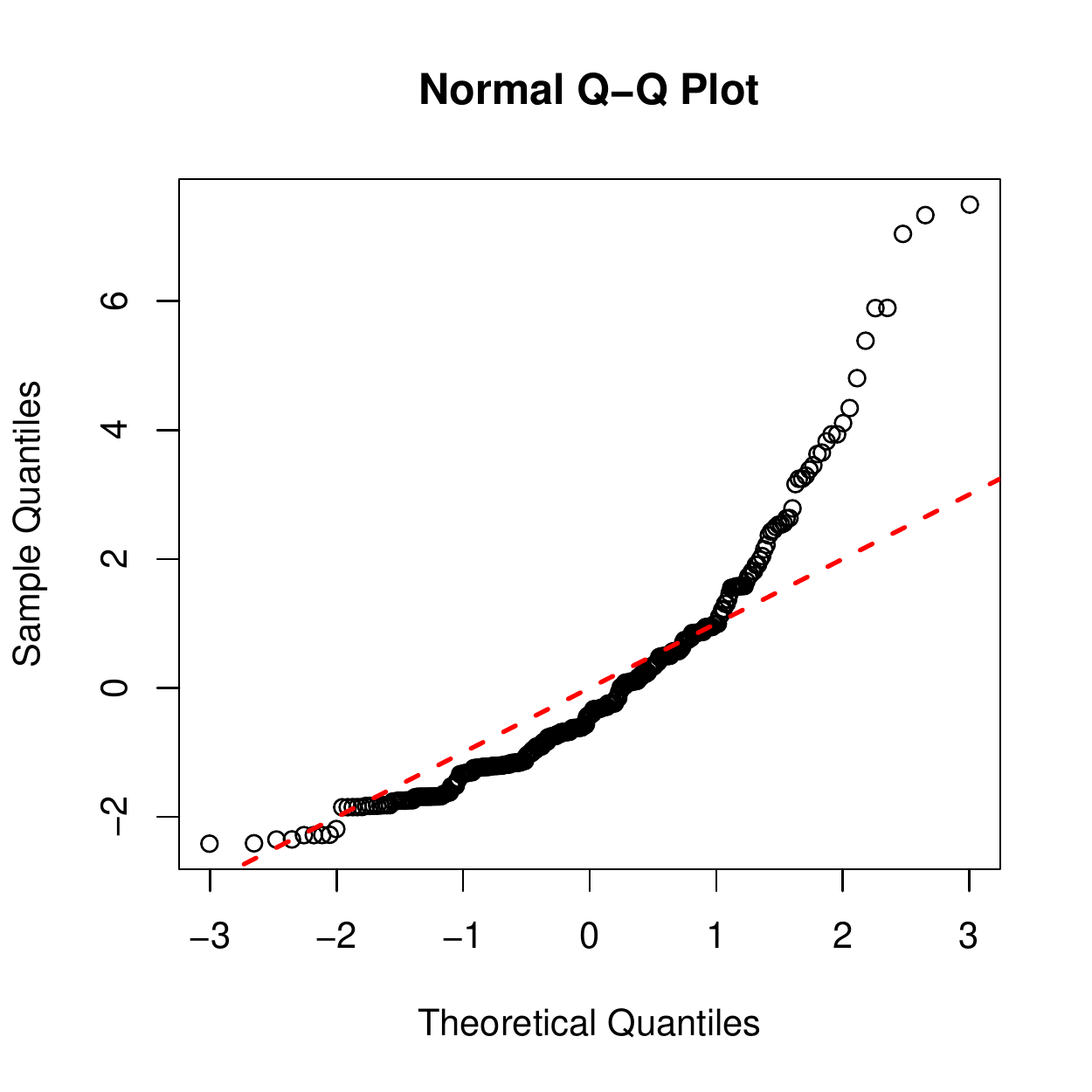}
  \caption{Poisson}
  \label{fig:azpro-pois-rqres-qq}
\end{subfigure}
\begin{subfigure}[b]{0.32\textwidth}
  \centering
  \includegraphics[width=1.0\textwidth]{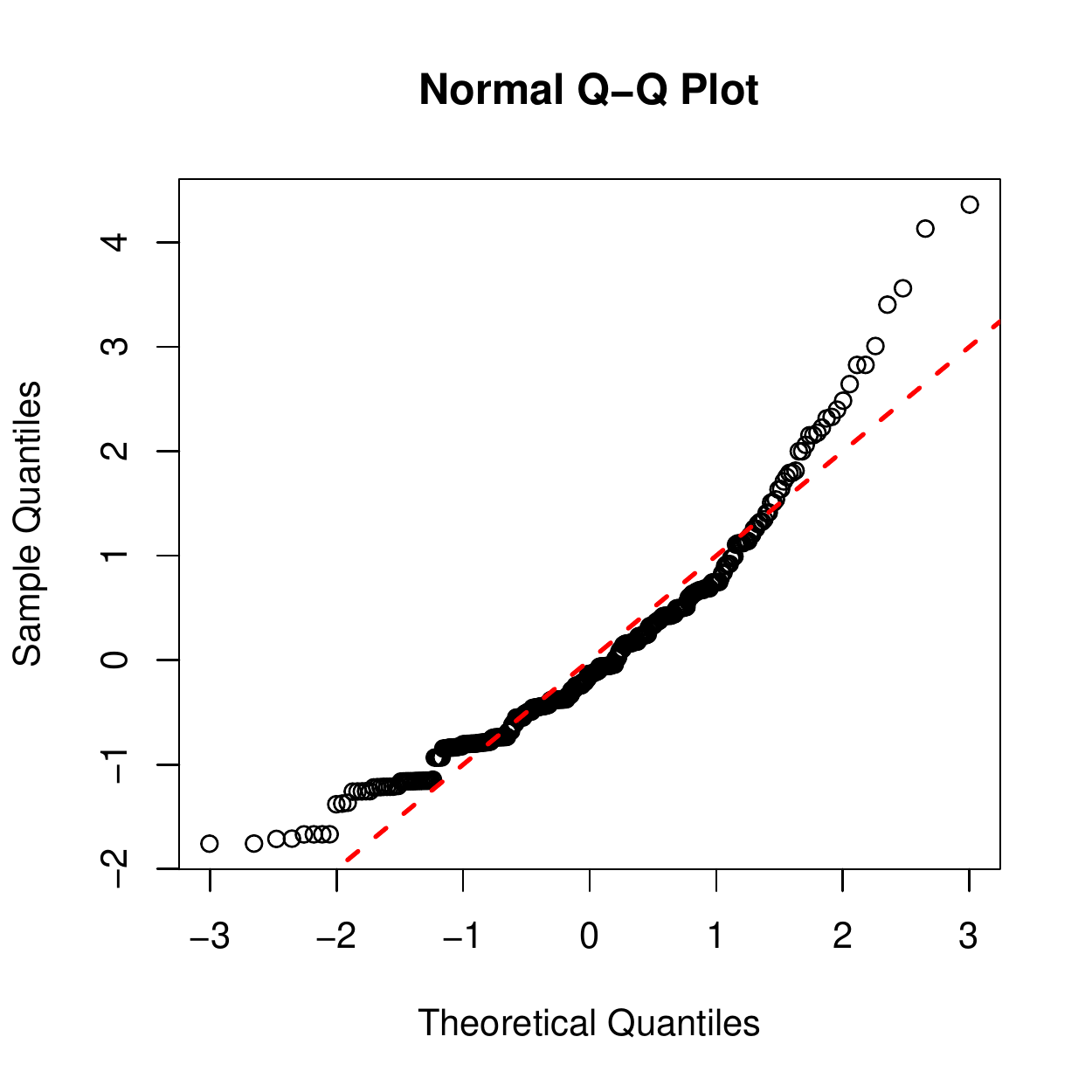}
  \caption{NegBin}
  \label{fig:azpro-negbin-rqres-qq}
\end{subfigure}
\begin{subfigure}[b]{0.32\textwidth}
  \centering
  \includegraphics[width=1.0\textwidth]{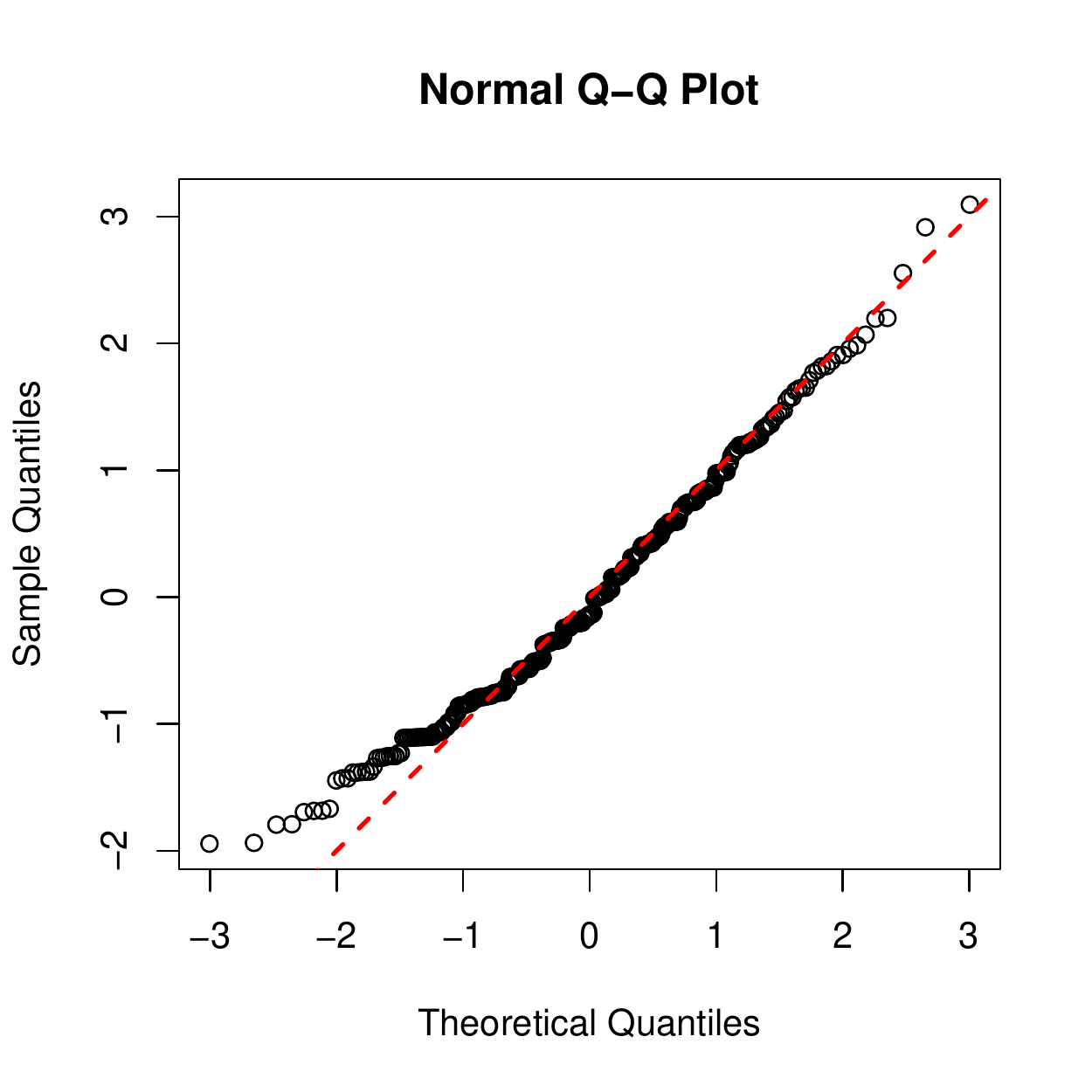}
  \caption{MixLinkJ8}
  \label{fig:azpro-mixlinkJ8-rqres-qq}
\end{subfigure}
\caption{Quantile residuals for Arizona Medpar data.}
\label{fig:azpro-rqres}
\end{figure}

\begin{figure}
\centering
\begin{subfigure}[b]{0.49\textwidth}
  \centering
  \includegraphics[width=1.0\textwidth]{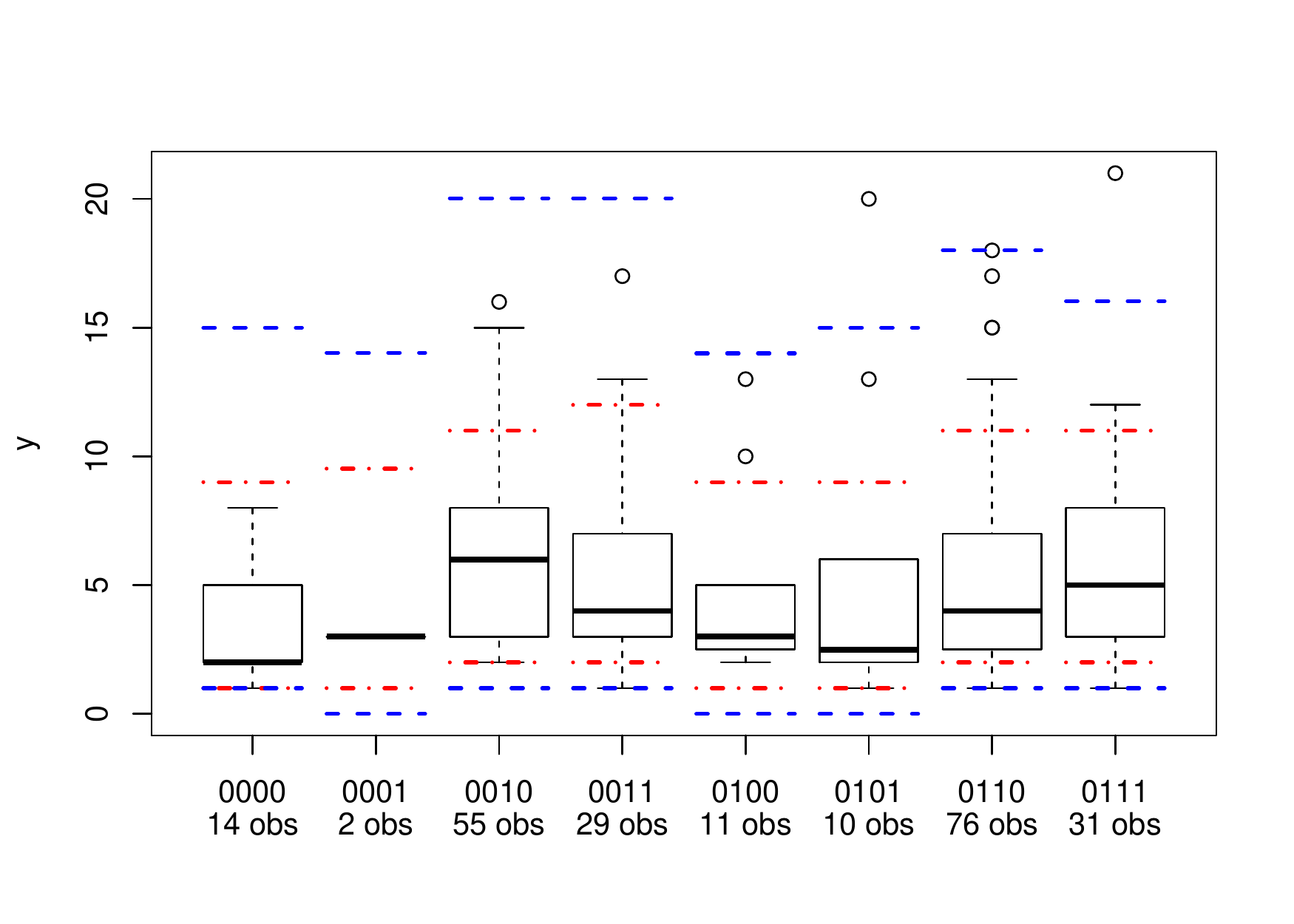}
  \label{fig:azpro-pi-boxplot1}
\end{subfigure}
\begin{subfigure}[b]{0.49\textwidth}
  \centering
  \includegraphics[width=1.0\textwidth]{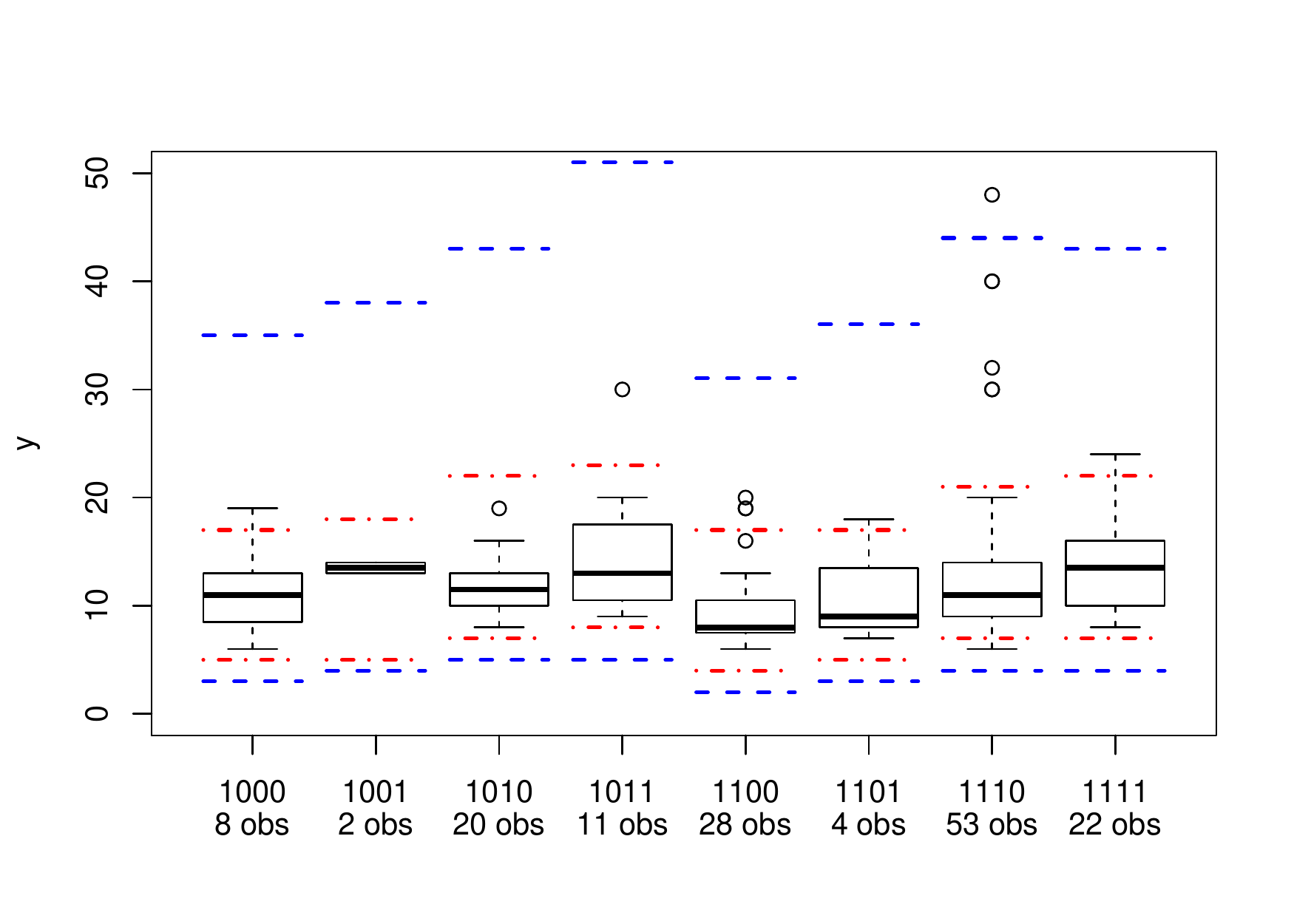}
  \label{fig:azpro-pi-boxplot2}
\end{subfigure}
\caption{Boxplots of observed $y_i$ for each of the 16 possible covariate values in the Arizona Medpar data. Covariate values are displayed as a string representing (procedure, sex, admit, age75). For example, ``1010'' represents $\text{procedure} = \text{admit} = 1$ and $\text{sex} = \text{age75} = 0$. Red dash-dot lines represent 95\% prediction limits from Poisson and blue dashed lines are from MixLink.}
\label{fig:azpro-pi-boxplot}
\end{figure}

\begin{figure}
\begin{subfigure}[t]{0.32\textwidth}
  \centering
  \includegraphics[width=1.0\textwidth]{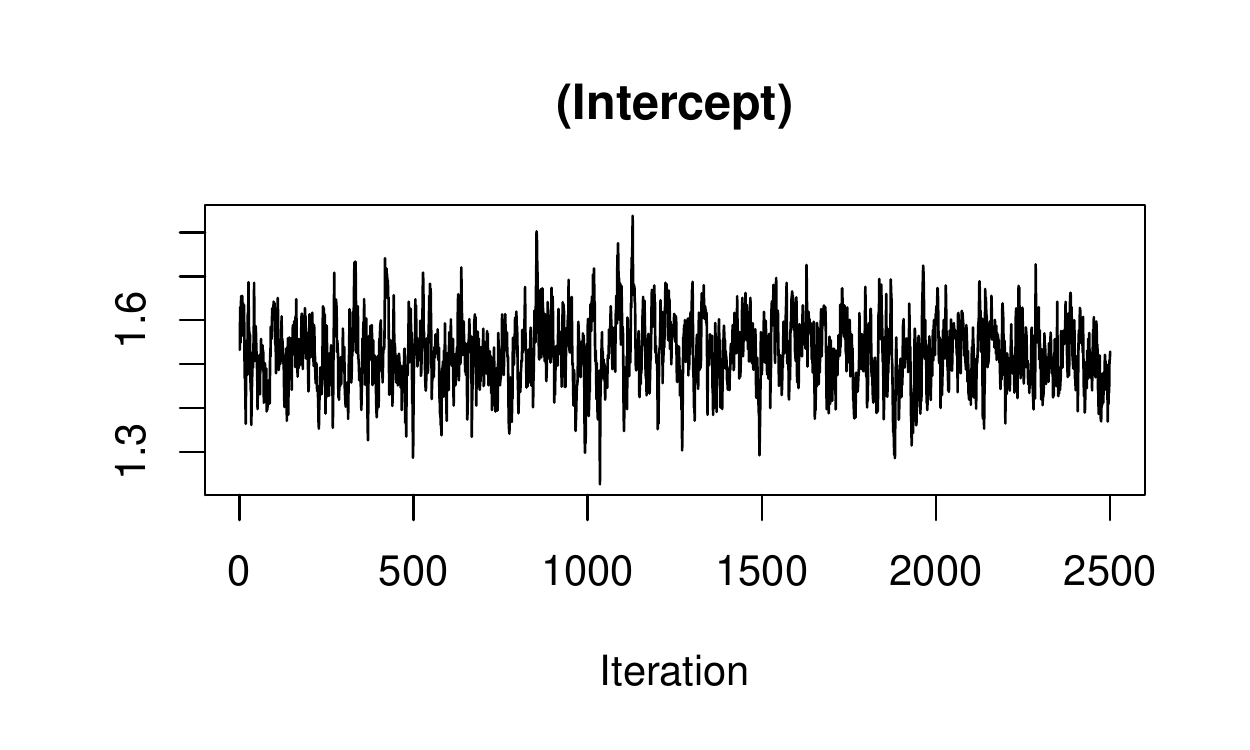}
  \label{fig:azpro-trace-Beta0}
\end{subfigure}
\begin{subfigure}[t]{0.32\textwidth}
  \centering
  \includegraphics[width=1.0\textwidth]{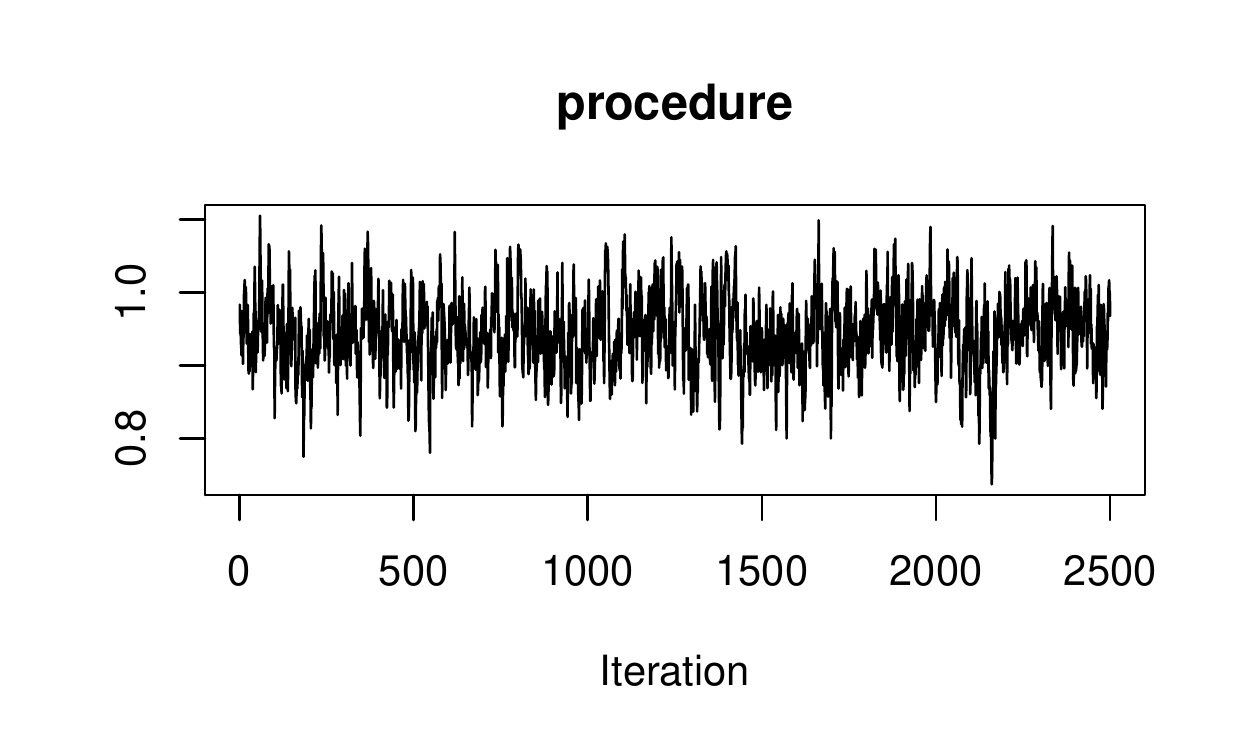}
  \label{fig:azpro-trace-Beta1}
\end{subfigure}
\begin{subfigure}[t]{0.32\textwidth}
  \centering
  \includegraphics[width=1.0\textwidth]{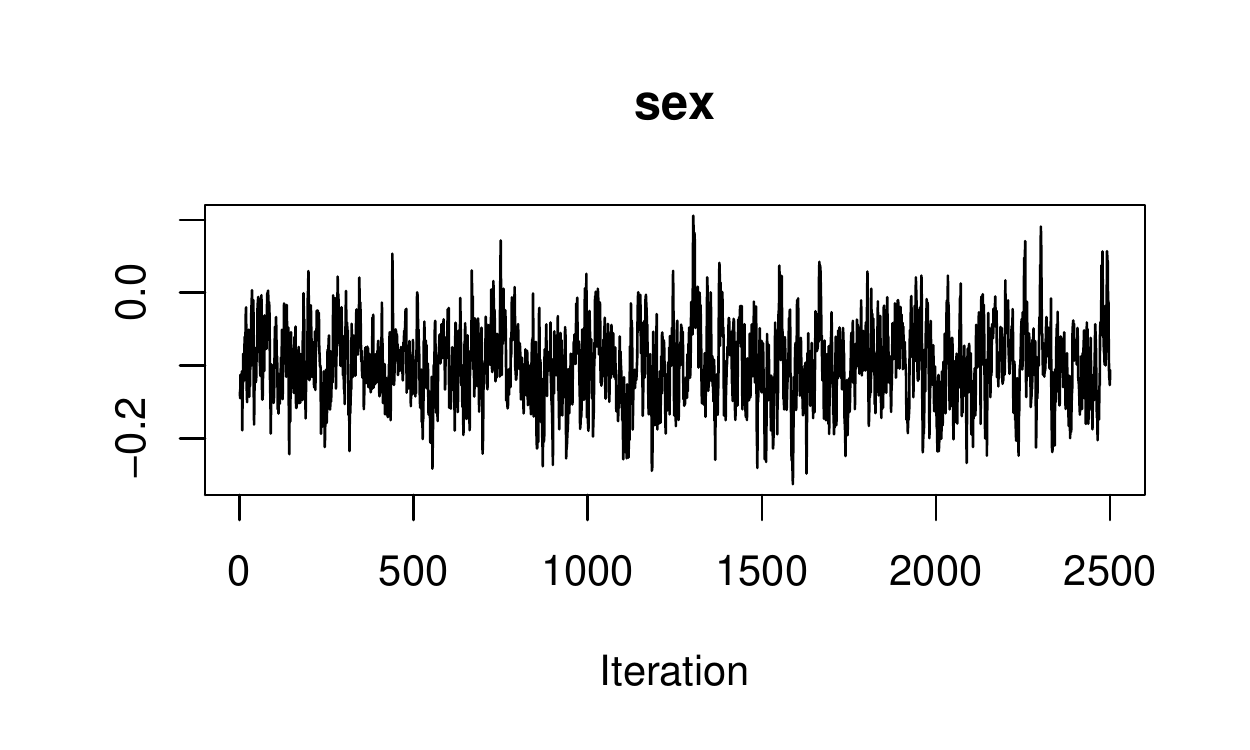}
  \label{fig:azpro-trace-Beta2}
\end{subfigure}
\begin{subfigure}[t]{0.32\textwidth}
  \centering
  \includegraphics[width=1.0\textwidth]{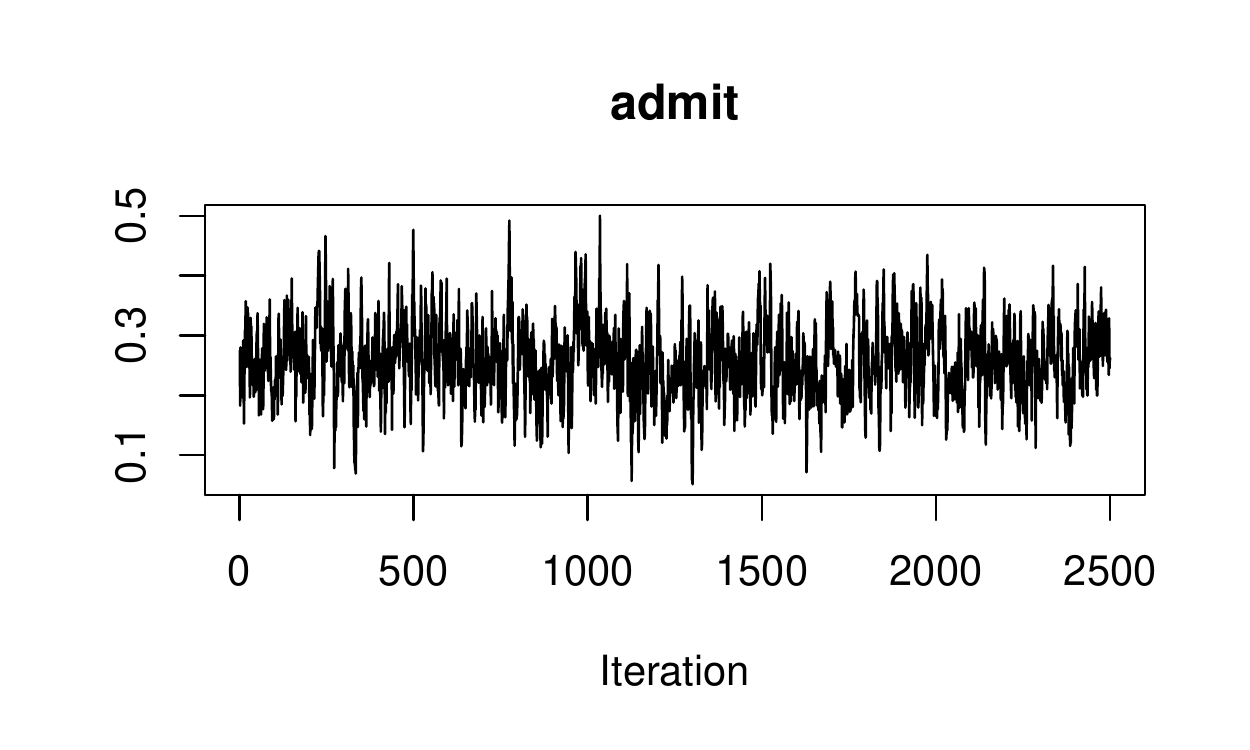}
  \label{fig:azpro-trace-Beta3}
\end{subfigure}
\begin{subfigure}[t]{0.32\textwidth}
  \centering
  \includegraphics[width=1.0\textwidth]{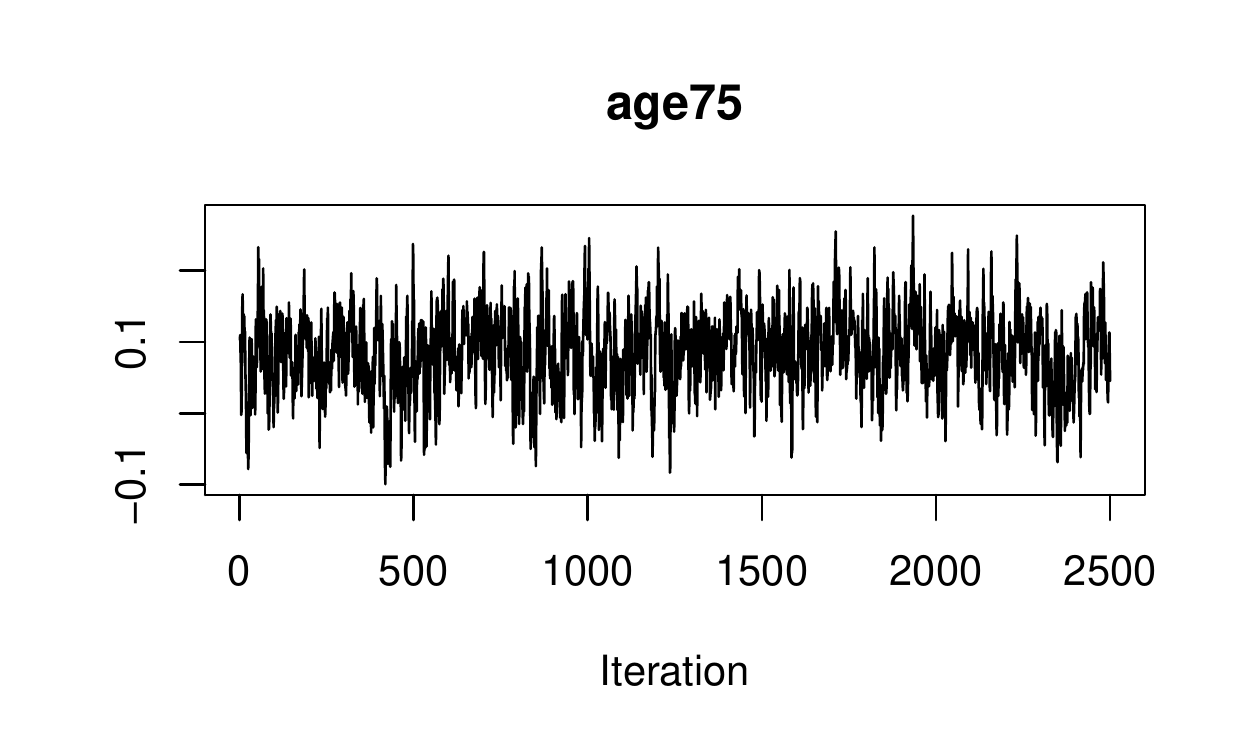}
  \label{fig:azpro-trace-Beta4}
\end{subfigure}
\begin{subfigure}[t]{0.32\textwidth}
  \centering
  \includegraphics[width=1.0\textwidth]{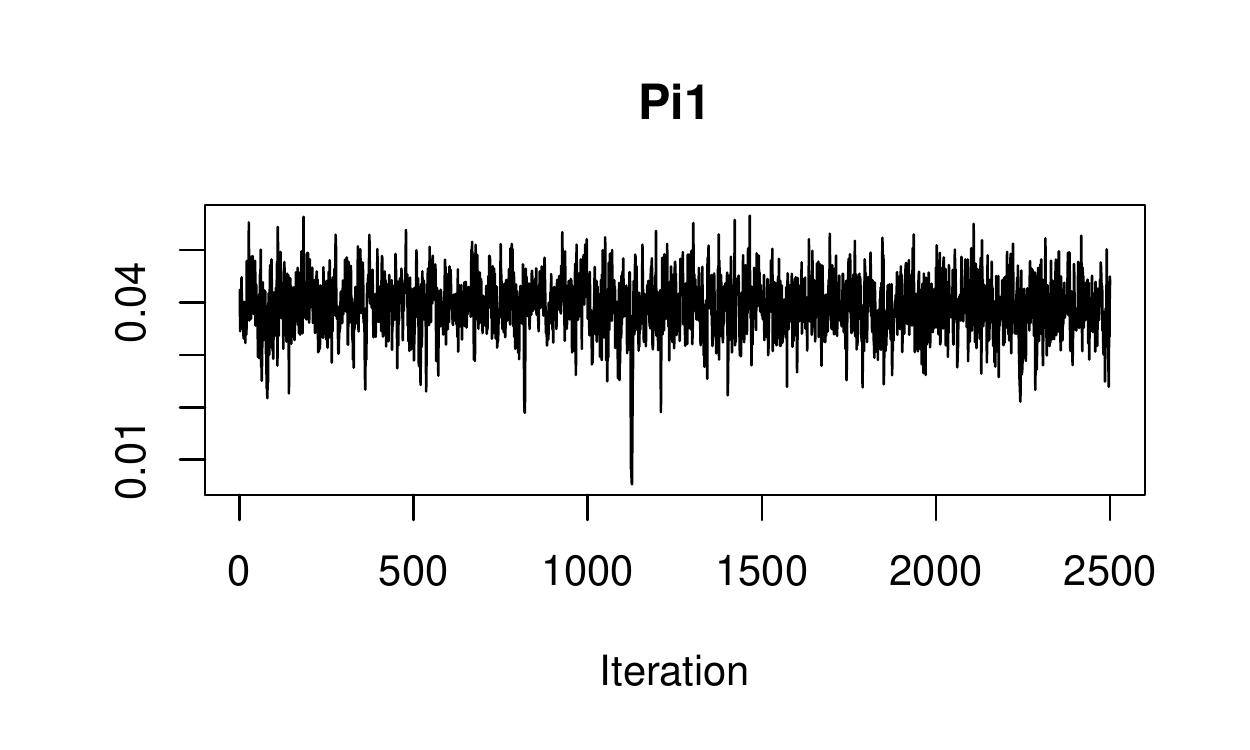}
  \label{fig:azpro-trace-Pi1}
\end{subfigure}
\begin{subfigure}[t]{0.32\textwidth}
  \centering
  \includegraphics[width=1.0\textwidth]{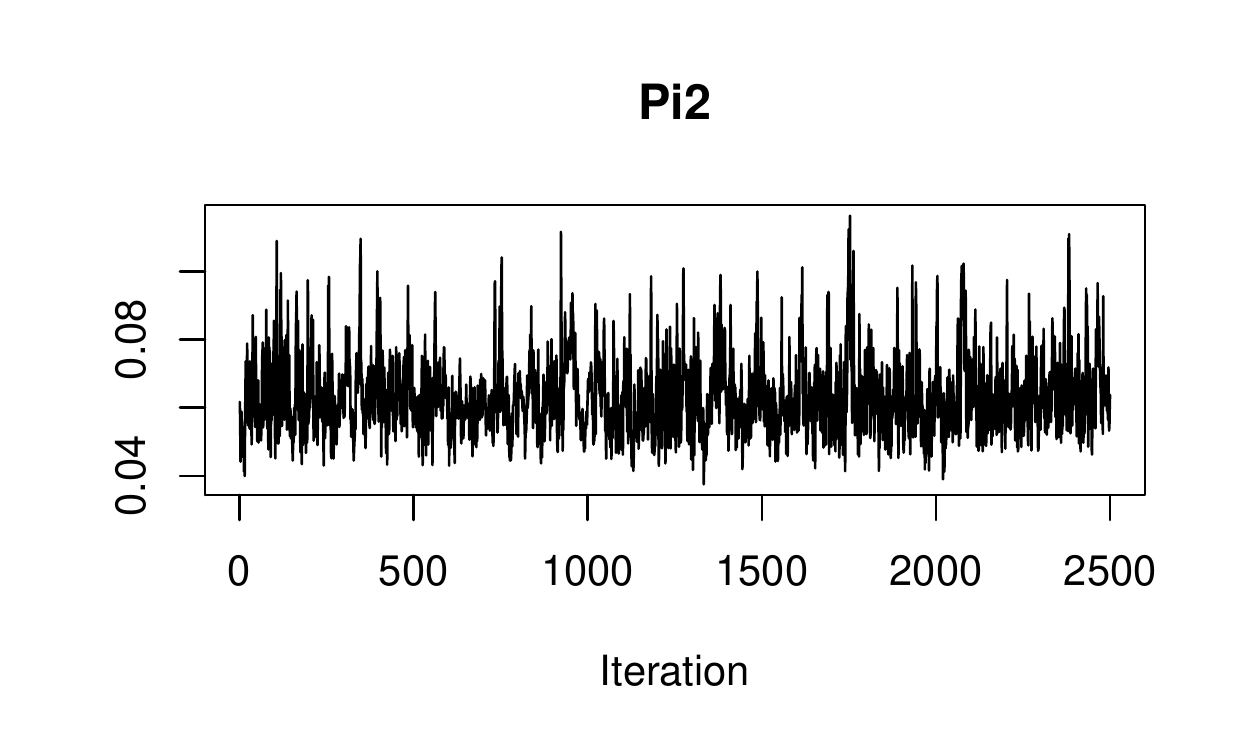}
  \label{fig:azpro-trace-Pi2}
\end{subfigure}
\begin{subfigure}[t]{0.32\textwidth}
  \centering
  \includegraphics[width=1.0\textwidth]{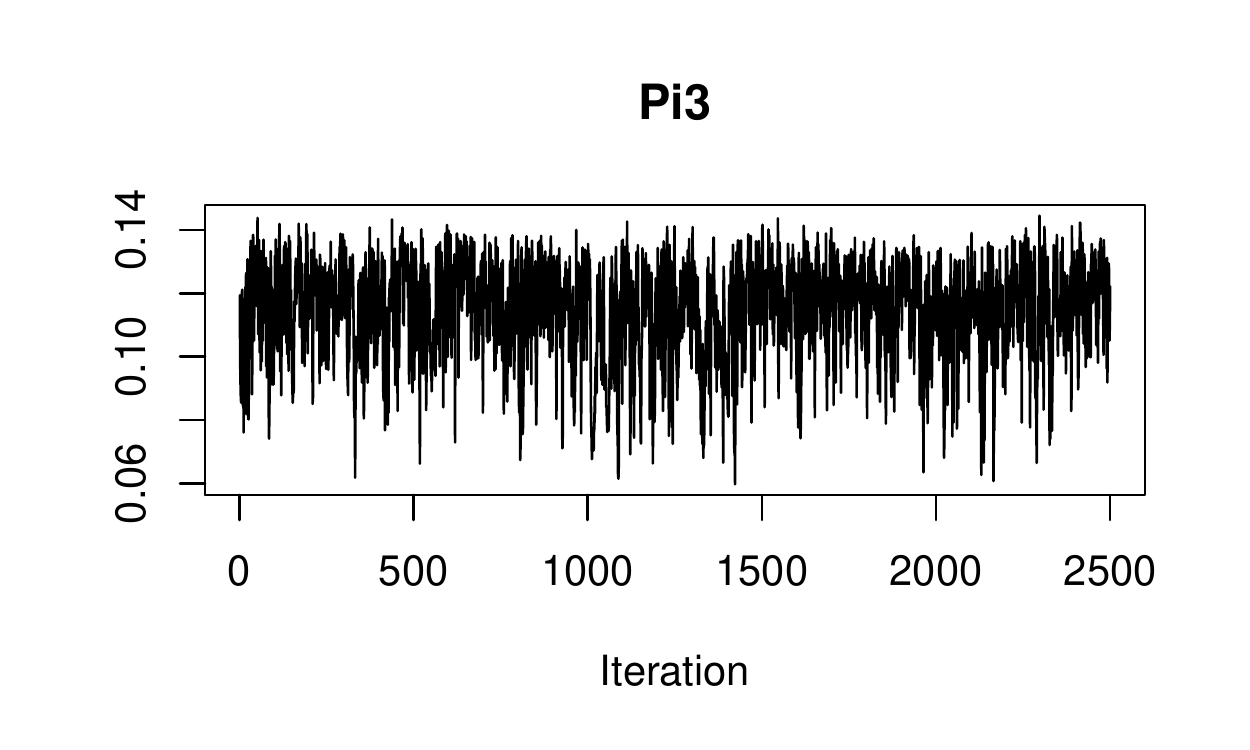}
  \label{fig:azpro-trace-Pi3}
\end{subfigure}
\begin{subfigure}[t]{0.32\textwidth}
  \centering
  \includegraphics[width=1.0\textwidth]{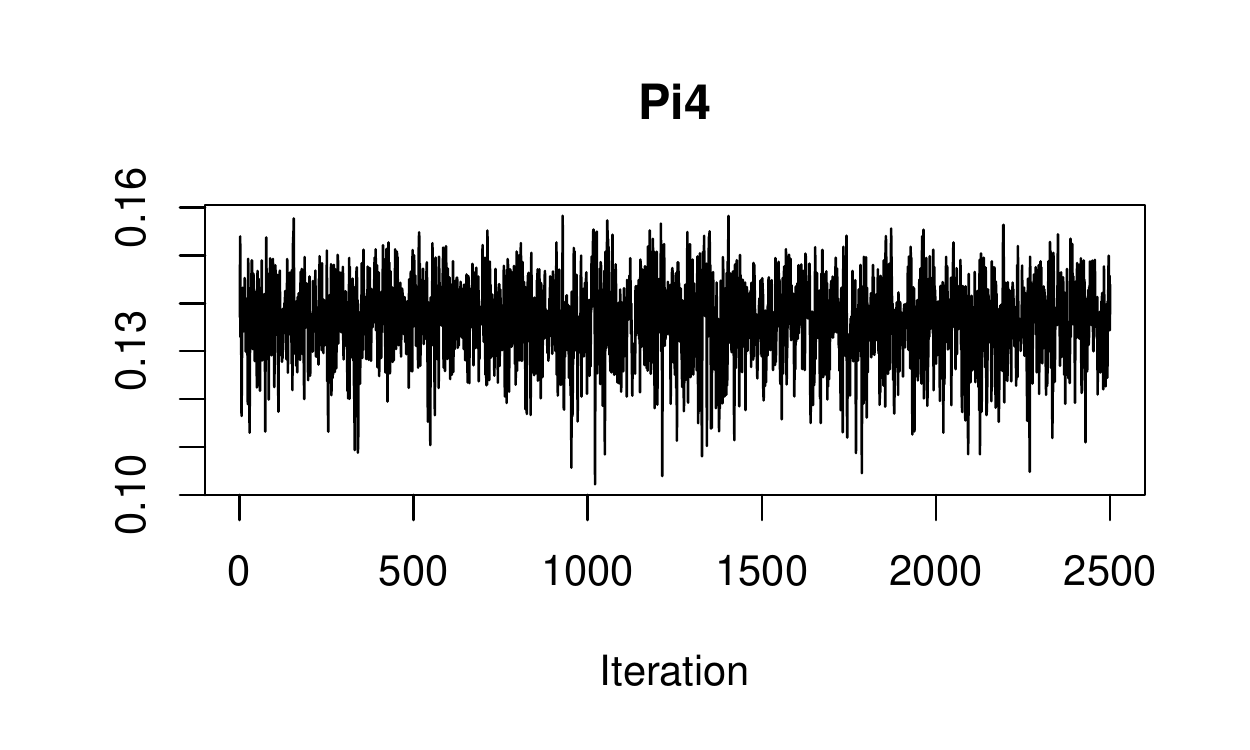}
  \label{fig:azpro-trace-Pi4}
\end{subfigure}
\begin{subfigure}[t]{0.32\textwidth}
  \centering
  \includegraphics[width=1.0\textwidth]{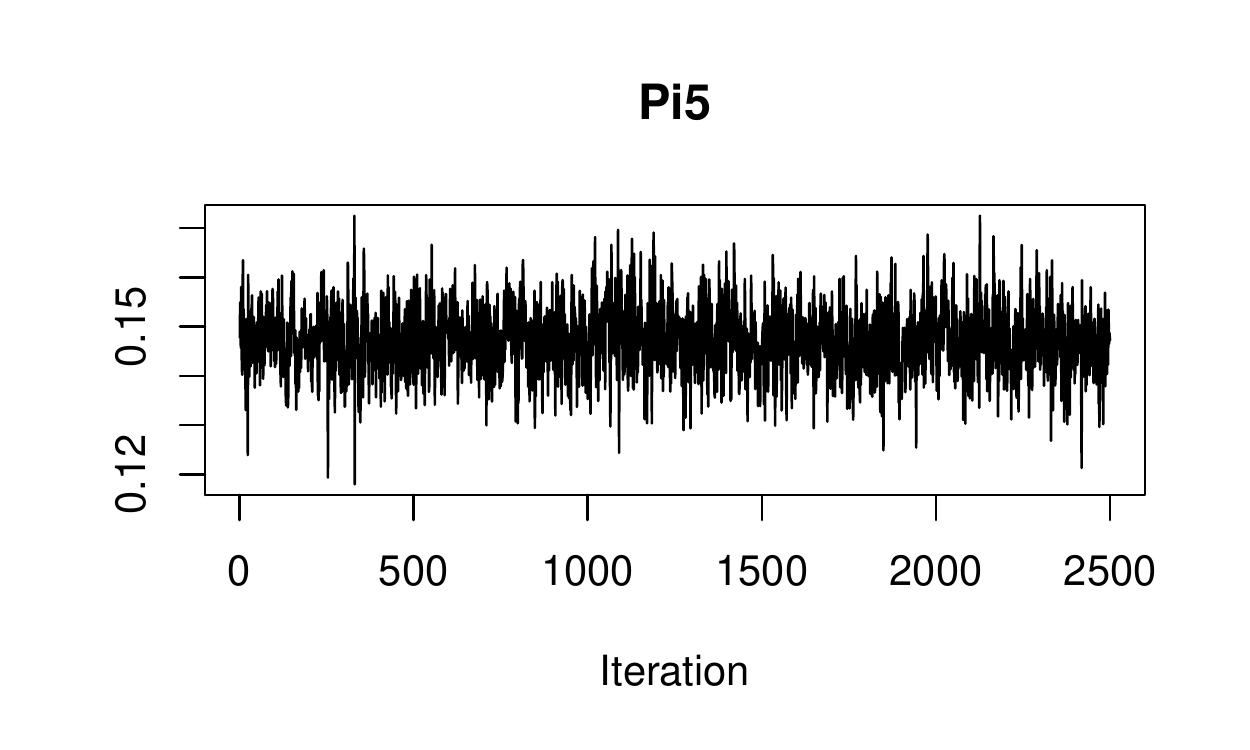}
  \label{fig:azpro-trace-Pi5}
\end{subfigure}
\begin{subfigure}[t]{0.32\textwidth}
  \centering
  \includegraphics[width=1.0\textwidth]{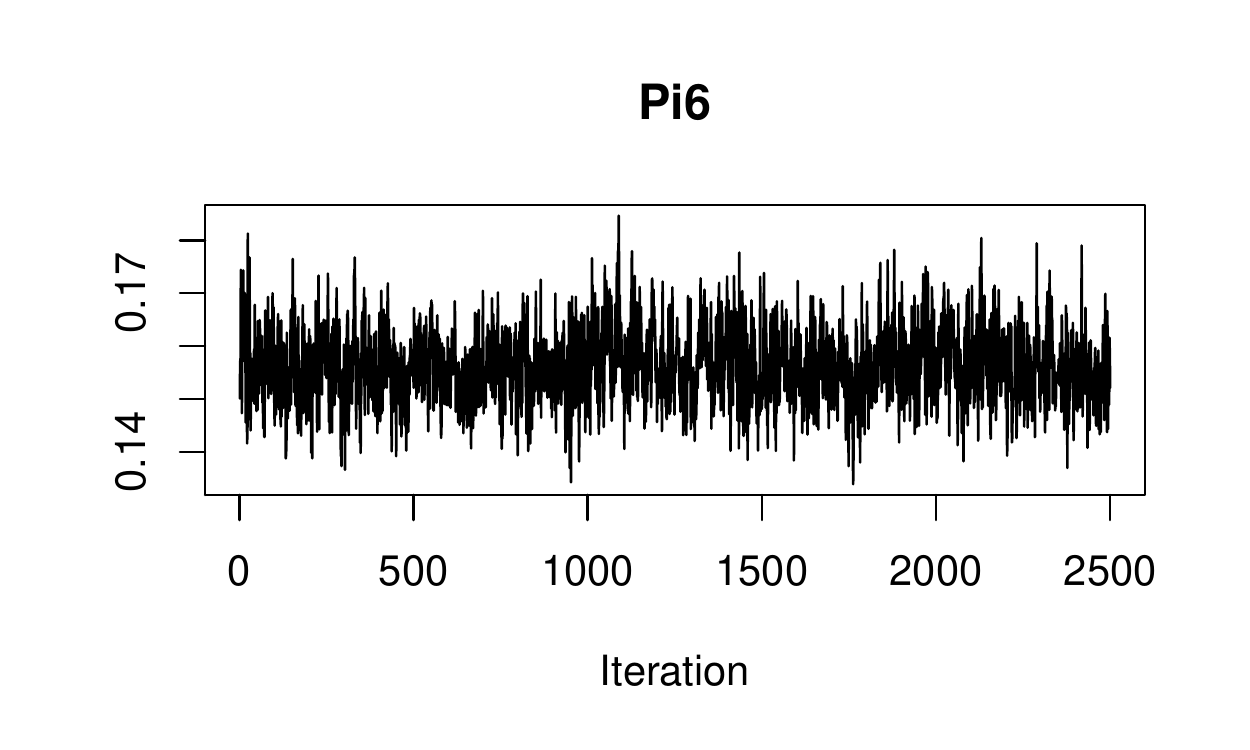}
  \label{fig:azpro-trace-Pi6}
\end{subfigure}
\begin{subfigure}[t]{0.32\textwidth}
  \centering
  \includegraphics[width=1.0\textwidth]{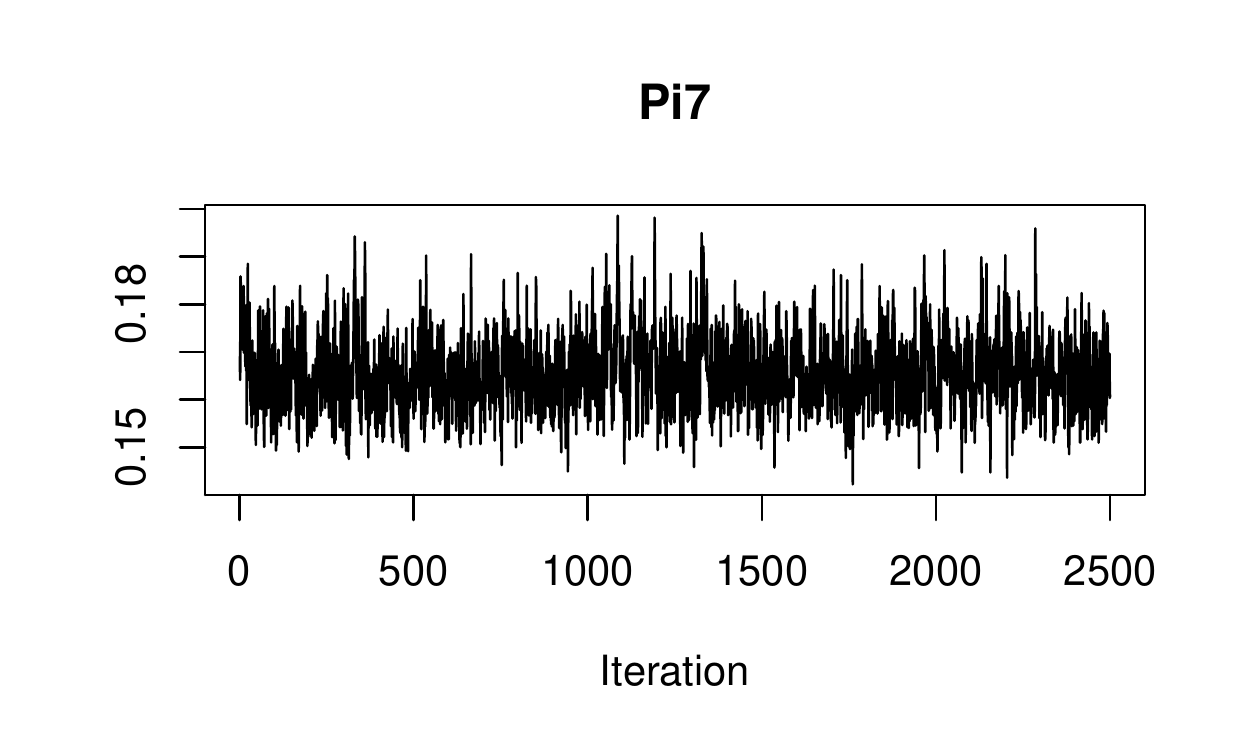}
  \label{fig:azpro-trace-Pi7}
\end{subfigure}
\begin{subfigure}[t]{0.32\textwidth}
  \centering
  \includegraphics[width=1.0\textwidth]{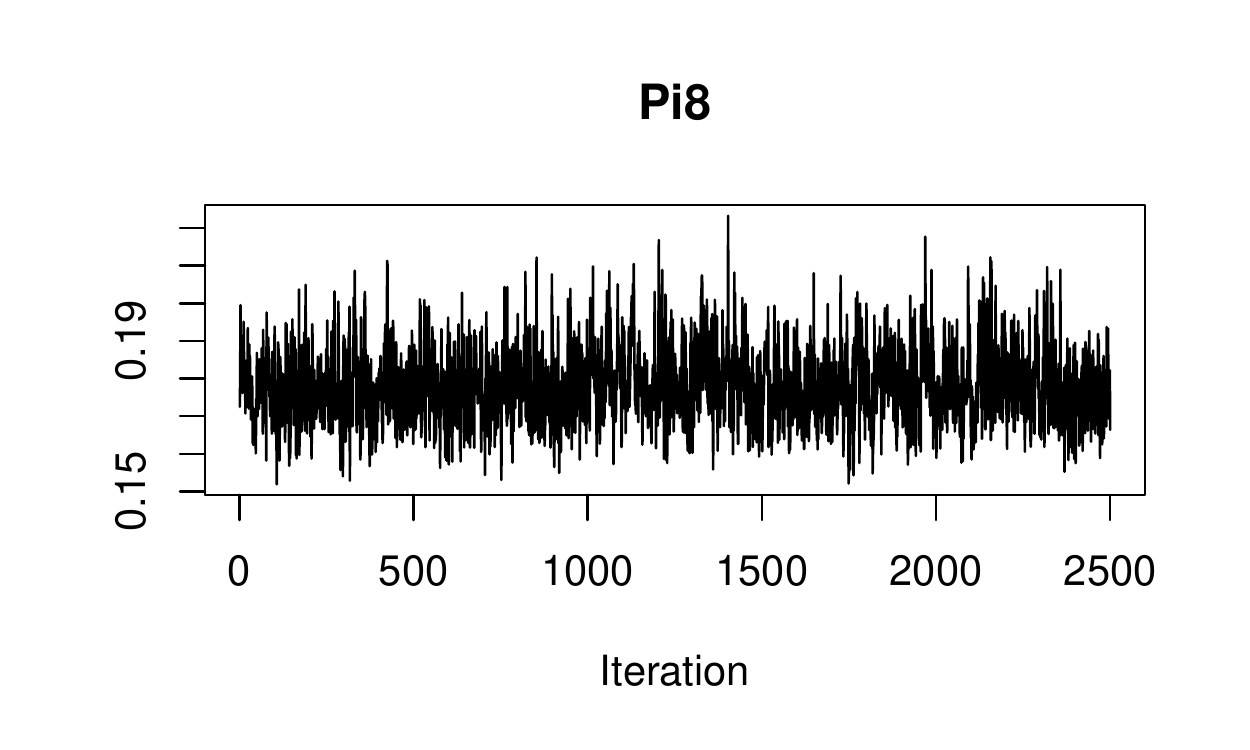}
  \label{fig:azpro-trace-Pi8}
\end{subfigure}
\begin{subfigure}[t]{0.32\textwidth}
  \centering
  \includegraphics[width=1.0\textwidth]{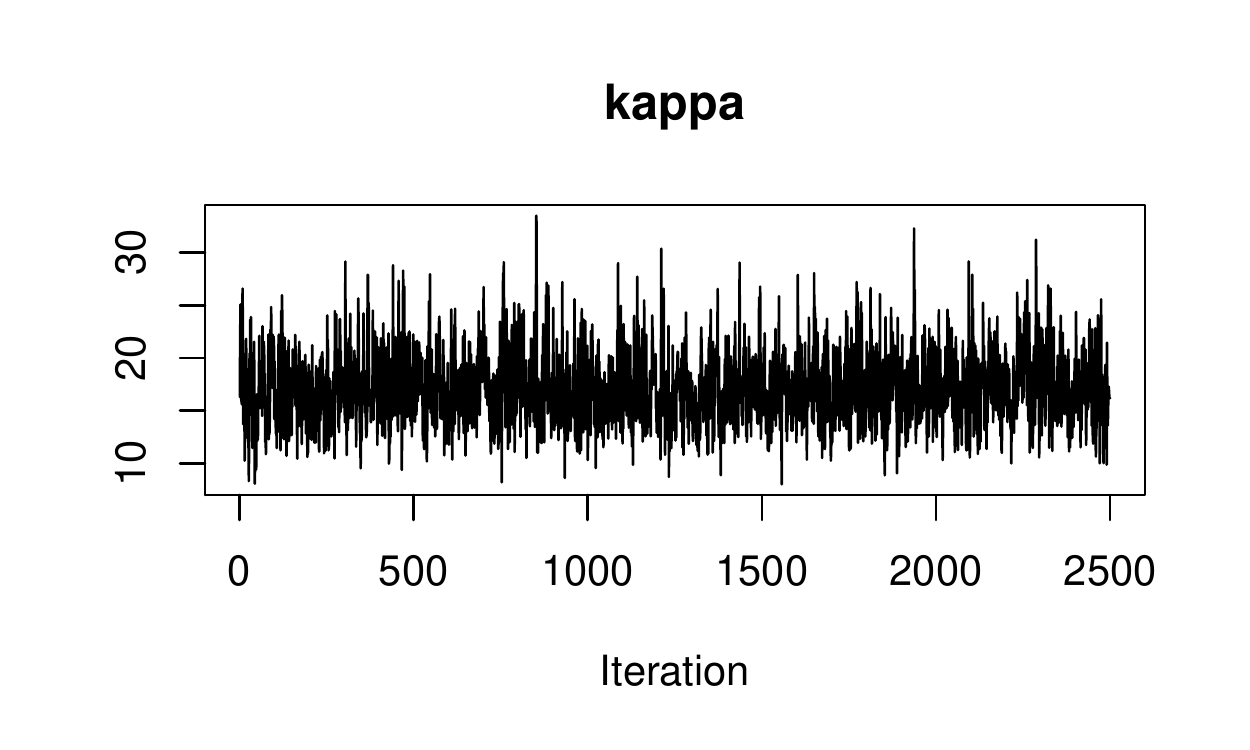}
  \label{fig:azpro-trace-kappa}
\end{subfigure}
\caption{Trace plots for MixLinkJ8 model fit to Arizona Medpar dataset.}
\label{fig:azpro-trace}
\end{figure}

\section{Conclusions}
\label{sec:conclusions}
Regression on the mean is commonly carried out with exponential family distributions in the Generalized Linear Model framework, but extending this idea to finite mixture distributions is not completely straightforward. This paper formulated the Mixture Link distribution, which establishes a link from a finite mixture mean to the regression function by assuming a random effects structure on the constrained parameter space. Specific variants of Mixture Link were obtained for Binomial, Poisson, and Normal outcomes. Integrals in the general Binomial case appeared not to have a tractable form, but the Normal case could be integrated to yield another (constrained) Normal finite mixture, and integrals in the Poisson case were evaluated using the confluent hypergeometric function. Some interesting connections were noted, for example, between Mixture Link Binomial and the Random-Clumped Binomial and Beta-Binomial distributions. Example regression analyses using Mixture Link Binomial and Poisson models demonstrated utility in handling overdispersion. Simpler models could adequately estimate the regression, yet failed to capture variability seen in the data. This became especially apparent in portions of analysis that depend heavily on the model, such as diagnosing model fit with quantile residuals or computing prediction intervals from the posterior predictive distribution. The fact that Mixture Link is completely likelihood-based ensures that such procedures are available; this could be seen as an advantage over quasi-likelihood methods when a flexible mean-variance relationship is needed. \code{R} code for the Mixture Link model is available in the \code{mixlink} package, available at \url{http://cran.r-project.org}.%
\footnote{The package currently provides Mixture Link Binomial and Poisson distributions and MCMC samplers. Functions to compute maximum likelihood estimates using numerical optimization are also implemented.}

The Mixture Link approach leads to a novel class of distributions with an interesting set of challenges for practical use in data analysis. Initial results in \citet{RaimPhDThesis2014}, \citet{MixLinkJSM2015}, and the present paper appear promising, especially using Bayesian inference, but more work is needed to determine the suitability of Mixture Link for wider application. In particular, it may be worthwhile to investigate analytical properties of Mixture Link models, such as differentiability, especially in the Binomial case. Such properties may be needed to establish appropriate methods for maximum likelihood estimation, large sample properties of maximum likelihood estimates, and approximation of the posterior distribution by a Normal distribution.

\section*{Acknowledgements}
We thank Professors Thomas Mathew, Yi Huang, and Yaakov Malinovsky at the University of Maryland, Baltimore County (UMBC) for serving on the committee of the dissertation in which this work was initiated. We thank the UMBC High Performance Computing Facility for use of its computational resources, and for financial support of the first author through a multiple year graduate assistantship.

\appendix

\section{Appendix: MCMC for Binomial Mixture Link}
\label{sec:appendix-mcmc-binomial}
An MCMC algorithm based on model \eqref{eqn:binomial-mixlink} can be formulated with $\psi_{ij}$ as augmented data. This approach avoids expensive numerical integration needed to compute the likelihood. The joint distribution of all random quantities is
\begin{align*}
&f(\vec{y}, \vec{\psi}, \bbeta, \vec{\pi}, \kappa) = \left\{ \prod_{i=1}^n Q(y_i, \vec{\psi}_i, \bbeta, \vec{\pi}, \kappa) \right\} f(\bbeta) f(\vec{\pi}) f(\kappa), \\
&\qquad \text{where} \quad Q(y_i, \vec{\psi}_i, \bbeta, \vec{\pi}, \kappa) =
\sum_{j=1}^J \pi_j \text{Bin}(y_i \mid m_i, H_{ij}(\psi_{ij})) \mathcal{B}(\psi_{ij} \mid a_{ij}, b_{ij}),
\end{align*}
and $H_{ij}(x) = (u_{ij} - \ell_{ij}) x + \ell_{ij}$. Gibbs steps to sample $\bbeta$, $\vec{\pi}$, $\kappa$, and $\vec{\Psi} = \{ \vec{\psi} _i: i = 1, \ldots, n \}$ will not yield closed forms. Instead, we will use simple Random Walk Metropolis Hastings \citep[Section 7.5]{RobertCasella2010} to propose draws for each random quantity.

To obtain draws of the constrained parameters $\vec{\pi}$, $\kappa$, and $\vec{\Psi}$, we draw unconstrained random variables from the sampler and transform them to the constrained space. Generally, denote $\vec{\xi}$ as one of the constrained parameters whose full conditional density is $f(\vec{\xi} \mid \text{Rest})$, and let $h$ be a bijection from the space of $\vec{\xi}$ to a Euclidean space $\mathbb{R}^k$. The density of $\vec{\phi} = h(\vec{\xi})$ is then $f(h^{-1}(\vec{\phi}) \mid \text{Rest}) | \det \mathfrak{J}(\vec{\phi})|$, where $\mathfrak{J}(\vec{\phi}) = \partial \vec{\xi} / \partial \vec{\phi}$. Starting from a given $\vec{\phi} = h(\vec{\xi})$, a proposed $\vec{\phi}^*$ will be accepted with probability
\begin{align*}
\min\left\{ 1, 
\frac{
  f(h^{-1}(\vec{\phi}^*) \mid \text{Rest}) \cdot |\det \mathfrak{J}(\vec{\phi}^*)|
}{
  f(h^{-1}(\vec{\phi}) \mid \text{Rest}) \cdot |\det \mathfrak{J}(\vec{\phi})|
} \right\}.
\end{align*}

Note that the function $Q(y_i, \vec{\psi}_i, \bbeta, \vec{\pi}, \kappa)$ needs to be evaluated in each step. By computing $Q$ in \code{C/C++}, it is possible to improve the performance greatly over a pure \code{R} \citep{R2015} implementation of our sampler. The \code{Rcpp} package by \citet{EddelbuettelFrancois2011}, for example, greatly facilitates a hybrid implementation of \code{R} and \code{C++}.

\paragraph{Gibbs step for $\bbeta$.} Consider the unnormalized density
\begin{align*}
q(\bbeta \mid \text{Rest}) = \left\{ \prod_{i=1}^n Q(y_i, \vec{\psi}_i, \bbeta, \vec{\pi}, \kappa) \right\} f(\bbeta).
\end{align*}
Suppose $\bbeta^{(r)}$ is the current iterate of $\bbeta$ in the simulation and draw $\bbeta^*$ from the proposal distribution $N(\bbeta^{(r)}, \vec{V}_\beta^\text{prop})$. Draw $U \sim \mathcal{U}(0,1)$, and let
\begin{align*}
\bbeta^{(r+1)} = \begin{dcases}
\bbeta^* & \text{if $U < \frac{q(\bbeta^* \mid \text{Rest}) }{ q(\bbeta^{(r)} \mid \text{Rest}) }$} \\
\bbeta^{(r)} & \text{otherwise}.
\end{dcases}
\end{align*}

\paragraph{Gibbs step for $\vec{\pi}$.} Consider the unnormalized density
\begin{align*}
q(\vec{\pi} \mid \text{Rest}) = \left\{ \prod_{i=1}^n Q(y_i, \vec{\psi}_i, \bbeta, \vec{\pi}, \kappa) \right\} f(\vec{\pi}).
\end{align*}
Suppose $\vec{\pi}^{(r)}$ is the current iterate of $\vec{\pi}$ in the simulation. Denote $\mathbb{S}^J$ as the probability simplex in dimension $J$ with typical element $\vec{p} = (p_1, \ldots, p_J)$. Note that the multinomial logit function $h(\vec{p}) = (\log(p_1 / p_J), \ldots, \log(p_{j-1} / p_J))$ is a bijection from $\mathbb{S}^J$ to $\mathbb{R}^{J-1}$. Therefore, we can draw $\vec{\phi}^*$ from the proposal distribution $N(h(\vec{\pi}^{(r)}), \vec{V}_\pi^\text{prop})$ on $\mathbb{R}^{J-1}$ and let $\vec{\pi}^* = h^{-1}(\vec{\phi}^*)$ be the candidate for the next iterate. Denote $\mathfrak{J}(\vec{\phi}) = \frac{\partial \vec{\pi} }{ \partial \vec{\phi} }$ as the $J \times (J-1)$ Jacobian of the transformation from $\vec{\phi}$ to $\vec{\pi}$, and let $\det \mathfrak{J}(\vec{\phi})$ be the determinant ignoring the $J$th row. Draw $U \sim \mathcal{U}(0,1)$, and let
\begin{align*}
\vec{\pi}^{(r+1)} = \begin{dcases}
\vec{\pi}^* & \text{if $U < \frac{q(\vec{\pi}^* \mid \text{Rest}) }{ q(\vec{\pi}^{(r)} \mid \text{Rest}) } \frac{|\det \mathfrak{J}(\vec{\phi}^*)|}{|\det \mathfrak{J}(\vec{\phi}^{(r)})|} $} \\
\vec{\pi}^{(r)} & \text{otherwise}.
\end{dcases}
\end{align*}

\paragraph{Gibbs step for $\kappa$.} Consider the unnormalized density
\begin{align*}
q(\kappa \mid \text{Rest}) = \left\{ \prod_{i=1}^n Q(y_i, \vec{\psi}_i, \bbeta, \vec{\pi}, \kappa) \right\} f(\kappa).
\end{align*}
Suppose $\kappa^{(r)}$ is the current iterate of $\kappa$ in the simulation. Draw $\phi^*$ from the proposal distribution $N(\log(\kappa^{(r)}), V_\kappa^\text{prop})$ and let $\kappa^* = \exp(\phi^*)$ be the candidate for the next iterate. The Jacobian of the transformation from $\phi$ to $\kappa$ is $\frac{\partial \kappa}{\partial \phi} = \exp(\phi)$. Draw $U \sim \mathcal{U}(0,1)$, and let
\begin{align*}
\kappa^{(r+1)} = \begin{dcases}
\kappa^* & \text{if $U < \frac{q(\kappa^* \mid \text{Rest}) }{ q(\kappa^{(r)} \mid \text{Rest}) } \frac{\exp(\phi^*)}{\exp(\phi^{(r)})} $} \\
\kappa^{(r)} & \text{otherwise}.
\end{dcases}
\end{align*}

\paragraph{Gibbs step for $\vec{\psi}$.} Consider the unnormalized density
\begin{align*}
q(\vec{\psi} \mid \text{Rest}) = \prod_{i=1}^n Q(y_i, \vec{\psi}_i, \bbeta, \vec{\pi}, \kappa).
\end{align*}
We can see that $\vec{\psi}_i$ are independent conditional on the remaining random variables and we may therefore consider drawing one at a time. Suppose $\vec{\psi}_i^{(r)}$ is the current iterate of $\vec{\psi}_i$ in the simulation. Let $G$ be the CDF of the logistic distribution, which is a bijection from $\mathbb{R}$ to the unit interval. Denote $\vec{\phi}^{(r)} = (G^{-1}(\psi_{i1}^{(r)}), \ldots, G^{-1}(\psi_{iJ}^{(r)})$. The Jacobian of the transformation from $\vec{\phi}$ to $\vec{\psi}_i$ is
\begin{align*}
\frac{\partial \vec{\psi}_i}{\partial \vec{\phi}} = \Diag(G'(\phi_1), \ldots, G'(\phi_J))
\quad \implies \quad
\det \left(\frac{\partial \vec{\psi}_i}{\partial \vec{\phi}} \right) = \prod_{j=1}^J G'(\phi_j),
\end{align*}
where $G'$ represents the logistic density. Draw $\vec{\phi}^*$ from the proposal distribution $N(\vec{\phi}^{(r)}, V_\phi^\text{prop})$ and let $\vec{\psi}_i^* = (G(\phi_1^*), \ldots, G(\phi_J^*))$ be the candidate for the next iterate. Draw $U \sim \mathcal{U}(0,1)$, and let
\begin{align*}
&\vec{\psi}_i^{(r+1)} = \begin{dcases}
\vec{\psi}_i^* & \text{if
$U < \frac{q(\vec{\psi}_i^* \mid \text{Rest}) }{ q(\vec{\psi}_i^{(r)} \mid \text{Rest}) }
\frac{ \prod_{j=1}^J G'(\phi_j^*) }{ \prod_{j=1}^J G'(\phi_j^{(r)}) }$}, \\
\vec{\psi}_i^{(r)} & \text{otherwise}.
\end{dcases}
\end{align*}


\end{document}